\documentclass[12pt,a4paper,reqno]{article}

\usepackage{amssymb,amsmath,amsfonts,geometry,graphicx,caption,color,setspace,natbib,array,amsthm,tikz}

\usetikzlibrary{decorations.pathreplacing}
\usepackage[T1]{fontenc}
\usepackage{newpxmath}
\usepackage{newpxtext}
\usetikzlibrary{patterns}
\usepackage{hyperref}

\usepackage[titletoc]{appendix}

\usepackage[immediate]{silence}

\allowdisplaybreaks

\newtheorem{theorem}{Theorem}[]

\theoremstyle{definition}

\newtheorem{assumption}{Assumption}
\newtheorem{proposition}{Proposition}
\newtheorem{lemma}{Lemma}
\newtheorem{remark}{Remark}

\newcommand{\mR}{\mathbb{R}}
\newcommand{\mE}{\mathbb{E}}

\usetikzlibrary{calc}
\usetikzlibrary{positioning}
\usetikzlibrary{matrix,fit}

\makeatletter
\let\@makefntextOring\@makefntext
\def\@makefntext#1{\@makefntextOring{\baselineskip=15pt#1}}
\makeatother

\definecolor{navyblue}{rgb}{0.0, 0.0, 0.5}
\definecolor{green}{rgb}{0.2, 0.5, 0.2}

\hypersetup{
    colorlinks=true,
    citecolor=navyblue,
    linkcolor=navyblue,
    urlcolor=navyblue,
}

\def\citeapos#1{\citeauthor{#1}\textcolor{navyblue}{'s} (\citeyear{#1})}

\usepackage{etoolbox}
\patchcmd{\thanks}{#1}{\protect\doublespacing}{}{}

\DeclareFontShape{OT1}{cmr}{m}{n}{<->cmr10}{}
\usepackage{fix-cm}

\usepackage{titlesec}

\begin{document}

\begin{titlepage}
\title{Feasible Search Behavior\thanks{\protect%\doublespacing 
Initial Draft: October 2024. A preliminary result for this paper appears in the subsumed paper \citet{sato2023information}. 
We thank 
Ian Ball, Michael Choi, Laura Doval, Makoto Hanazono, Nima Haghpanah, WenYu Hsieh, Ilwoo Hwang, Ryota Iijima, Shinsuke Kambe, Michihiro Kandori, Satoshi Kasamatsu, Kohei Kawamura, Daiki Kishishita, Masanori Kobayashi, Andrew Koh, Fuhito Kojima, Noritaka Kudoh, Yichuan Lou, Daniel Luo, Eric Maskin, Akihiko Matsui, Hitoshi Matsushima, Akira Matsushita, Masaki Miyashita, Stephen Morris, Satoshi Nakada, Shunya Noda, Daisuke Oyama, Hitoshi Sadakane, Kota Saito, Kazuki Sekiya, Takashi Shimizu, Wataru Tamura, Alex Wolitzky, Jidong Zhou, seminar participants at University of Tokyo, Tokyo University of Science, the 2023 Japanese Economic Association Autumn Meeting, MIT Theory Lunch, Nagoya University, 10th Annual Communication Theory Workshop, and Econometric Society World Congress 2025. 
Sato acknowledges the financial support from the JSPS KAKENHI Grant 24KJ0100.
Shirakawa acknowledges the financial support from the Funai Foundation for Information Technology.
All remaining errors are our own.
}}
\author{
\Large Hiroto Sato\thanks{Nagoya University (Email: \url{sato.hiroto.s9@f.mail.nagoya-u.ac.jp}).}
\and 
\Large Ryo Shirakawa\thanks{Massachusetts Institute of Technology (Email: \url{shira723@mit.edu}).}
} 
\date{\today}
\maketitle
\begin{abstract}
% Consider a situation wherein an agent sequentially searches heterogeneous options in an arbitrary order.  
% The agent partially learns the value of an option upon inspection.  
% Information structures jointly determine the ex-ante and ex-post value of investigating each option, thereby shaping the entire learning path.  
% We characterize all search behaviors compatible with some information structure, which form a polytope.  
% A single information structure rationalizes all these search behaviors and minimizes the agent's welfare.  
% Under a certain symmetry assumption, we also examine feasible pairs of search and choice behaviors, with and without price competition among option sellers, and prove the same results. 

Consider a situation wherein a decision maker sequentially searches for the best alternative among heterogeneous options with an arbitrary search order. 
The agent partially learns the value of an option when inspecting it. 
The information structure jointly determines the ex-ante and ex-post value of investigating each option, thereby shaping the entire learning path. 
We characterize the set of all search behaviors compatible with some information structure, which forms a polytope. 
A single information structure rationalizes all these search behaviors, which minimizes the agent's welfare among all information structures.  
Under certain symmetry assumption over primitives, we also examine the set of all feasible pairs of search and choice behaviors, with and without option sellers' price competition, and prove the same results. 
Our study not only provides a simple framework for analyzing how information shapes ordered learning paths, but also offers sharp insights into information design and identification problems in such environments. 

%\textbf{Keywords:} ordered search, intermediary, information design, sponsored advertising. %agent search, sequential search, Bayesian persuasion, online intermediary, 
\vspace{0in}
%\noindent
%\textbf{JEL Codes:} \\
%\bigskip
\end{abstract}
\setcounter{page}{0}
\thispagestyle{empty}
\end{titlepage}

%%%%%%%%%%%%%%%%%%%%%%%%%%%%%%%%%%%%%%%%%%%%%%%%%%%%%%%%%%%%%%%%%%%%%%%%%%%%%%%%%%%%%%%%%%%%%%%%%%%%%%%%
% We can remove table of contents by eliminating these command.
%\pagenumbering{roman}
%\tableofcontents
%%%%%%%%%%%%%%%%%%%%%%%%%%%%%%%%%%%%%%%%%%%%%%%%%%%%%%%%%%%%%%%%%%%%%%%%%%%%%%%%%%%%%%%%%%%%%%%%%%%%%%%%
%\pagebreak \newpage
%\setcounter{page}{0}
%\pagenumbering{arabic} % numbering is moving back to arabic

\newcolumntype{C}[1]{>{\centering\arraybackslash}p{#1}}

% Now, we admit double spacing in the entire of this paper.
% \doublespacing
\onehalfspacing

% We can adjust margins of math-environment here 
%===================================================
\setlength{\abovedisplayskip}{5pt}
\setlength{\belowdisplayskip}{5pt}
%==================================================

%%%%%%%%%%%%%%%%%%%%%%%%%%%%%%%%%%%%%%%%%%%%%%%%%%%%%%%%%%%%%%%%%%%%%%%%%%%%%%%%%%%%%%%%%%%%%%%%%%%%%%%%
\section{Introduction} \label{sec:introduction}
%%%%%%%%%%%%%%%%%%%%%%%%%%%%%%%%%%%%%%%%%%%%%%%%%%%%%%%%%%%%%%%%%%%%%%%%%%%%%%%%%%%%%%%%%%%%%%%%%%%%%%%%

% Information economics
Recent work in information economics has studied how information structures affect agents’ decisions in various environments. Information structures are often modeled as directly shaping agents’ beliefs, and the literature analyzes how the resulting beliefs affect their eventual decisions. However, it is also natural to think that information structures first shape the \textit{learning path}: how much information decision makers acquire about multiple options, and in what order, before affecting their eventual beliefs and decisions. This paper provides a simple framework for analyzing this dual role of information structures and establishes benchmark results in this setting. 

% Model
For analyzing this interaction, our model consists of a decision maker who ultimately chooses one option among multiple heterogeneous options with unknown values. The decision maker sequentially searches these options in an arbitrary order, paying a cost for each inspection. The information structure in this market determines the precision of the information generated when the decision maker inspects each option. Theoretically, our model can be interpreted as a minimal extension of the ordered-search model of \citet{weitzman1979optimal}, known as \textit{Pandora's problem}, in which costly inspection does not fully reveal the value of an option. 

% Dual role of information
The precision of the information acquired through the search process plays two roles. First, as emphasized in the Bayesian persuasion literature, it naturally affects the posterior beliefs that the decision maker forms about each option. Second, it changes the ex-ante value of inspecting each option, and therefore affects the learning path itself: which options the decision maker inspects and in what order. Thus, information structures jointly determine the ex-ante and ex-post values of inspecting each option, and thereby shape the decision maker's overall search behavior. To capture this latter role, we allow information structures to be arbitrarily asymmetric across options. Consequently, even when the values of the options are symmetric in terms of their prior distributions, the decision maker may exhibit asymmetric search behavior depending on the precision of the information obtained through inspection. 

% Theorem 1 & Proposition 1: Characterization
Theorem \ref{thm: feasible search behavior} summarizes our main findings. First, we characterize the entire set of feasible search behaviors, that is, the probabilities with which each option is explored, rationalized by some information structure. Since an information structure may change the order in which the decision maker learns about the options in a discrete way, it is not a priori obvious whether this set has a tractable structure; nevertheless, we show that it is a polytope and provide a closed-form representation of each of its vertices. Second, the strongest version of the reduction principle holds: a single information structure rationalizes all feasible search behaviors. Under this information structure, infinitely many optimal search strategies exist, and every search order is optimal. Finally, this information structure minimizes the agent's welfare among all information structures. This result suggests that our model not only captures the interaction between information structures and the ordered learning path in perhaps the simplest possible way, but also serves as a tractable tool for analyzing it. 

% Implication 1: Information design 
As an application, we consider an information design problem \`a la \citet{kamenica2011bayesian}. 
Here, we have in mind an online platform, such as Amazon or Yelp, that earns advertising revenue when consumers visit product or business pages.\footnote{For example, Amazon reports advertising services through sponsored ads, display ads, and video advertising, and states that revenue is recognized as ads are delivered based on clicks or impressions. Yelp reports that its revenue consists primarily of advertising placements through performance-based cost-per-click advertising.} 
Such platforms can control, to some extent, the information disclosed on these pages by designing product-page layouts, restricting sellers' descriptions, and choosing which customer reviews to highlight. 
These designs affect consumers' ex-post decisions; however, as emphasized above, they also affect how consumers visit product pages, and hence the revenue generated from pay-per-click fees.\footnote{One may also consider platforms that earn revenue not only from user visits but also from sales commissions. In Section \ref{sec: general state space}, we consider the information design problem of such platforms and obtain the same implication.} 

% Implication 1 cont'd
Theorem \ref{thm: feasible search behavior} has a direct implication for the information design problems of such online platforms: for any objective function defined over search behaviors, there always exists an optimal information structure that minimizes consumers' welfare to the same level as if the designer provided no information. 
In particular, the optimal information structure does not depend on the shape of the objective function; moreover, in the case of a linear objective function on search behavior, the information designer can robustly implement the optimal search behavior, that is, by suitably perturbing the optimal information structure, she can make the designer-optimal search behavior uniquely optimal for the consumer. 

% Implication 2: Identifiaction problem
Another immediate application is to partial identification problems. 
Although many empirical works build on search models and estimate unobservable elements such as search costs and consumer preferences, these studies assume complete information, namely consumers fully observe the realizations of the underlying distributions when inspecting options. 
As we can infer from our results however, both these estimates and their implications are highly sensitive to that assumption. 
Because our closed-form characterization of feasible search behaviors maps directly to the set of primitive parameters consistent with any observed search behavior, it offers a tractable way to derive the identified set under minimal informational assumptions. 

% Extension 1. Choice
In Section \ref{sec: general state space}, we extend our main theorem in two directions.
First, note that we define a search behavior as the vector of probabilities with which a decision maker inspects each option. 
This notion, however, omits her eventual consumption choice.
Theorem \ref{thm: joint distribution} therefore analyzes feasible choice behavior, namely the probability that each option is ultimately selected.
The implications of Theorem \ref{thm: feasible search behavior} persist: the set of all feasible pairs is still a polytope, and a single welfare-minimizing information structure rationalizes every feasible pair. 
In Proposition \ref{prop: joint distribution vertex}, we derive all vertexes in closed form.
For example, although the preceding paragraph suggests that an online platform maximizing pay-per-click revenue can reduce consumer welfare, Theorem \ref{thm: joint distribution} confirms that this conclusion persists even when the platform also earns sales commissions for each time a consumer purchases an item displayed on the platform. 

% Extension 2. Price competition
Second, since each option in the model often corresponds to its seller, it is natural to incorporate price competition among these sellers into the model. 
In ordered search markets, prices affect consumers' search order, making sellers' payoffs discontinuous and equilibrium price dispersion, and hence equilibrium search strategies, remain an open research agenda. 
Our approach bypasses this issue: Theorem \ref{thm: price competition} proves that the set of equilibrium search-choice pairs does not change at all whether sellers compete on price. 
The same information structure supports all equilibrium pairs. 
Welfare implications, however, diverge once prices become endogenous to information. Under our information structure, sellers optimally post zero prices and earn no surplus, whereas consumers may benefit because lower prices may offset the loss in information quality. 

% Extention: Final remark
These two extensions are discussed under a weak symmetry assumption. 
The assumption is imposed over primitive parameters, and we still allow information structures to be heterogeneous. 
Although we assume in the model section that the reward of each box is binary, we also find that Theorems \ref{thm: joint distribution} and \ref{thm: price competition} do not rely on this assumption. 
In particular, the statements do not change at all for any general priors as long as the symmetry assumption is satisfied. 

%%%%%%%%%%%%%%%%%%%%%%%%%%%%%%%%%%%%%%%%%%%%%%%%%%%%%%%%%%%%%%%%%%%%%%%%%%%%%%%%%%%%%%%%%%%%%%%%%%%%%%%%
%%%%%%%%%%%%%%%%%%%%%%%%%%%%%%%%%%%%%%%%%%%%%%%%%%%%%%%%%%%%%%%%%%%%%%%%%%%%%%%%%%%%%%%%%%%%%%%%%%%%%%%% 
\subsection{Related Literature}

% Sequential experimentation
In the sense that the decision maker chooses her own learning path, our study is conceptually related to the literature on dynamic information acquisition; see, e.g., \citet{wald1947foundations}, \citet{arrow1949bayes}, \citet{moscarini2001optimal}, and \citet{morris2019wald}. More closely related, recent studies in this literature, such as \citet{steiner2017rational}, \citet{fudenberg2018speed}, \citet{che2019optimal}, \citet{liang2022dynamically}, \citet{zhong2022optimal}, \citet{mayskaya2024following}, and \citet{adusumilli2026sample}, analyze the optimal dynamic allocation of limited attention across multiple information sources. In contrast to these models, where the same information source can be sampled repeatedly, each source in our model can be sampled only once, so the learning path is an irreversible sequence of distinct experiments. This feature makes it relatively tractable to analyze how information structures jointly shape the ordered learning path and the eventual decision, while exploring the same question in these alternative settings would also be interesting. 

% Search and information
More directly, this paper is best positioned in the literature on search theory and information economics, which we discuss below.

% Robust approach
Broadly speaking, this paper is related to the literature on informationally robust predictions. 
Since the seminal works by \citet{bergemann2013robust} and \citet{bergemann2016bayes}, several studies have provided information-free set predictions for the outcomes of various Bayesian models. 
Among others, \citet{bergemann2021search} study a random search setting as in \citet{stahl1989oligopolistic} wherein firms have incomplete information over consumers' search intensities and develop a belief-free upper-bound for the equilibrium price distributions. 
We study the incompleteness of information on the consumer side, hence complementing their work. 
To the best of our knowledge, neither our theorem nor its proof has an analogue in existing studies. 

% Information economics and search markets
More specifically, this paper belongs to the recently growing literature at the intersection of information economics and search models. 
\citet{zhou2020improved} is one of the studies related to our work. 
For example, in a sequential random search setup as in \citet{wolinsky1986true} and \citet{anderson1999pricing}, his study finds a certain law on the effect of an information improvement on the equilibrium search duration. 
By contrast, when search is directed, we find that a single information structure is compatible with any search behavior, indicating that generally we have no robust connection between information structures and search lengths. 

% Random search 
Particularly active in the above literature are the applications of information design towards random search models. 
\citet{dogan2022consumer} and \citet{hu2021industry} derive consumer-optimal and producer-optimal information structures. 
\citet{board2018competitive}, \citet{whitmeyer2020persuasion}, \citet{he2023competitive}, analyze firms competitively disclosing information and observe that firms may seize monopoly profit; \citet{whitmeyer2020persuasion} terms this phenomenon the informational \citeapos{diamond1971model} paradox.\footnote{See also \citet{choi2019optimal}, \citet{lyu2023information}, and \citet{mekonnen2025efficient}, which study the designs of pre-search information/market segmentation in search markets.} 
Recent studies by \citet{hwang2025competitive} and \citet{boleslavsky2025limits} extend these settings and consider heterogeneous search frictions. 
By contrast, \citet{mekonnen2025persuaded} consider a principal who designs and sells information at each period and show that the principal sells a socially efficient level of information in stationary equilibrium.\footnote{For example, he/she sells fully-informative signals under binary priors. That is, he/she chooses to sell informative signals with high price rather than selling uninformative signals with low price and prolonging search. A recent study by \citet{mekonnen2025competition} also analyzes competition among multiple principals.} 
We contribute to this body of work by solving a general class of information design problems in the ordered search setting.

% Ordered search: Heterogeneity
Few papers investigate the role of information in \citeapos{weitzman1979optimal} ordered search. 
\citet{au2023attraction} and \citet{au2024attraction} study competitive information disclosure of symmetric firms in the ordered search setting and characterize the symmetric equilibria of these games, under which firms may randomize over information structures. 
Their studies are different from our work particularly in that they look at the eventual choices of consumers while our Theorem \ref{thm: feasible search behavior} focuses on their search processes. 
Accordingly, we cannot rely on their methodology of applying \citet{armstrong2017ordered} and \citet{choi2019optimal}, which give a clean characterization of each product's demand. 
Nevertheless, in proving Theorems \ref{thm: joint distribution} and \ref{thm: price competition}, which characterize the feasible pairs of search and choice behaviors, we also employ this technique to derive bounds on the set of feasible choice behaviors. 
The technical challenge of our study also stems from allowing for general asymmetry in the distributions across options. This asymmetry is not only a key feature of \citeapos{weitzman1979optimal} model, but also essential for generally exploring the dual role of information structures discussed in the Introduction. 

% Identification
Finally, our study also contributes to the literature on identification problems in search markets. 
Prior empirical work has developed methods to identify unobservables in search models, assuming that decision makers perfectly observe the realizations of their draws from value distributions.
See, for example, \citet{jolivet2019consumer} and \citet{moraga2023consumer} for recent articles and \citet{ursu2025sequential} for an overview. 
For other economic models, however, a growing literature studies partial identification under minimal assumptions about the underlying information structure; see, for example, \citet{gualdani2024identification}, \citet{dickstein2018exporters}, \citet{dickstein2024patient}, and \citet{porcher2024measuring} for some empirical strategies and their applications, and \citet{bergemann2022counterfactuals}, \citet{doval2024revealed} and \citet{de2025rationalizing} for theory. 
Our results allow us to take this approach and yield a tractable method to find an identified set.  

%%%%%%%%%%%%%%%%%%%%%%%%%%%%%%%%%%%%%%%%%%%%%%%%%%%%%%%%%%%%%%%%%%%%%%%%%%%%%%
The remainder of this paper is organized as follows.
Section \ref{sec:model} describes the model. 
We present our main theorem and discuss the implications in Section \ref{sec:results}. 
In Section \ref{sec: general state space}, we extend our analysis. 
Section \ref{sec:conclusion} concludes this study.
All proofs are provided in Appendices \ref{sec:appendixa}, \ref{sec:appendixb}, \ref{sec:appendixc}, and \ref{sec:appendixd}. 
%%%%%%%%%%%%%%%%%%%%%%%%%%%%%%%%%%%%%%%%%%%%%%%%%%%%%%%%%%%%%%%%%%%%%%%%%%%%%%

%%%%%%%%%%%%%%%%%%%%%%%%%%%%%%%%%%%%%%%%%%%%%%%%%%%%%%%%%%%%%%%%%%%%%%%%%%%%%%
%%%%%%%%%%%%%%%%%%%%%%%%%%%%%%%%%%%%%%%%%%%%%%%%%%%%%%%%%%%%%%%%%%%%%%%%%%%%%%
\section{Model} \label{sec:model}
%%%%%%%%%%%%%%%%%%%%%%%%%%%%%%%%%%%%%%%%%%%%%%%%%%%%%%%%%%%%%%%%%%%%%%%%%%%%%%
%%%%%%%%%%%%%%%%%%%%%%%%%%%%%%%%%%%%%%%%%%%%%%%%%%%%%%%%%%%%%%%%%%%%%%%%%%%%%%

% Model Elements: Weitzman 
% boxes, Pandora, Payoff
The model is based on \citeapos{weitzman1979optimal} ordered search, which is referred to as \textit{Pandora's problem}.
There are $n\geq 2$ \textit{boxes} and a risk-neutral agent, \textit{Pandora}, with unit demand. 
Each box $i$ contains a random \textit{reward} $x_{i}$, which is either $\underline{x}_{i}$ or $\overline{x}_{i}$. 
Normalize that $0\leq \underline{x}_{i} < \overline{x}_{i}\leq 1$.
Let $\lambda_{i}\in(0,1)$ be the prior probability that the reward is $\overline{x}_{i}$. 
Denote by $\mu_{i}=\lambda_{i}\overline{x}_{i}+(1-\lambda_{i})\underline{x}_{i}$ the prior mean.
Pandora obtains $x_{i}$ when choosing box $i$.
We discuss how our result can be extended beyond binary state space in Section \ref{sec: general state space}.

% Information
If Pandora opens box $i$, she observes a signal $s_{i}$ according to an information structure, which is a random variable whose realization depends on $x_{i}$.\footnote{Formally, an information structure for box $i$ is a mapping from the state space $\{\underline{x}_{i},\overline{x}_{i}\}$ into the space of distributions $\Delta(S_{i})$ with some signal space $S_{i}$. 
We assume that signals are independent across boxes; otherwise, we cannot even derive Pandora's optimal search strategy. 
See, for example, \citet{ke2020informational} and \citet{bao2022search} for a study on Pandora's problem with correlation.} 
Upon observing $s_{i}$, she forms expectation $y_{i}=\mE[x_{i}|s_{i}]$.
It turns out that $y_{i}$ is the only payoff-relevant information in our model, and therefore, we define an information structure for box $i$ directly as the induced CDF $F_{i}\in\Delta([\underline{x}_{i},\overline{x}_{i}])$ of posterior means $y_{i}$. 
\citet{blackwell1953equivalent} shows that some information structure induces $F_{i}\in\Delta([\underline{x}_{i},\overline{x}_{i}])$ as a distribution of posterior means if and only if $F_{i}$ has mean $\mu_{i}$.\footnote{This characterization relies on the assumption that state space for each box is binary. See Section \ref{sec: general state space} and Appendix \ref{sec:appendixc} for the feasibility condition under general prior distributions.}
Call such $F_{i}$ \textit{feasible}.

% Pandora's strategy 
Fixing a CDF profile $F=(F_{1},\dots,F_{n})$, Pandora's strategy is a sequential decision rule that determines, at each stage, whether to continue searching and, if so, which box to open next.
Each box can be opened only once.\footnote{See, for example, \citet{ke2019optimal} for a model that relaxes this assumption.}
When she stops searching, she can choose one of any previously opened boxes.\footnote{Unlike in \citeapos{doval2018whether} model, Pandora cannot choose a box without opening it.}  
It costs $c_{i}>0$ to open each box $i$. 
We assume that search costs are small enough to ensure that each box is opened with some positive probability under some feasible CDF profile.
Specifically, assume $c_{i}<\lambda_{i}(\overline{x}_{i}-\underline{x}_{i})$ for each $i$.\footnote{This is just for exposition; the inequality is eventually implied by our Assumption \ref{assum: regularity}.} 

% Search behavior
A \textit{search behavior} is a vector $v\in \mR^{n}$ that describes each probability $v_{i}$ that each box $i$ is opened. 
Call a search behavior $v$ \textit{feasible} if there exists a feasible CDF profile which induces it along with an optimal (possibly mixed) strategy. 
We mainly explore the structure of the set of all feasible search behaviors, which we denote by $V^{*}$.
In Section \ref{sec: general state space}, we also investigate Pandora's choice behavior and show that the same conclusions hold for the set of all feasible pairs of search and choice behaviors. 

\begin{remark}\label{rem: interpretation}
    One may think of $F_{i}$ as the distribution of actual rewards $x_{i}$ instead of the distribution of posterior means $y_{i}$. 
    Under this interpretation, Pandora observes true reward $x_{i}$ when opening box $i$ without facing any uncertainty. 
    Then, our research question turns to derive all feasible search behaviors varying distributions with a fixed profile of average values $\mu_{1},\dots,\mu_{n}$.
\end{remark}

%%%%%%%%%%%%%%%%%%%%%%%%%%%%%%%%%%%%%%%%%%%%%%%%%%%%%%%%%%%%%%%%%%%%%%%%%%%%%%
\subsection{Pandora's Rule} \label{sec:prep}

% Weitzman: reservation values
If we fix a CDF profile $F$, our model is mathematically equivalent to \citeapos{weitzman1979optimal} original model, wherein he characterizes the optimal search strategy.  
Define the \textit{reservation value} of box $i$ under distribution $F_{i}$ as the value $z_{i}$ that solves the following equation:% \footnote{The last equation follows from integration by parts. See, e.g.,  \citet{dogan2022consumer}.} 
\begin{align*}
    c_{i} = \int_{z_{i}}^{\overline{x}_{i}} (y_{i}-z_{i})dF_{i}(y_{i}) = \overline{x}_{i}-z_{i}-\int_{z_{i}}^{\overline{x}_{i}}F_{i}(y_{i})dy_{i}.
\end{align*}
The solution to the above equation is unique.  
The assumption $c_{i}<\lambda_{i}(\overline{x}_{i}-\underline{x}_{i})$ ensures $\underline{z}_{i}\leq z_{i}\leq \overline{z}_{i}$, where $\underline{z}_{i}=\mu_{i}-c_{i}$ and $\overline{z}_{i}=\overline{x}_{i}-c_{i}/\lambda_{i}$.\footnote{One can see that $\underline{z}_{i}$ and $\overline{z}_{i}$ correspond to the reservation values under no information and full information, respectively. 
Namely, $z_{i}=\underline{z}_{i}$ if $F_{i}$ is degenerate at prior mean $\mu_{i}$ and $z_{i}=\overline{z}_{i}$ if $F_{i}$ is equal to the prior distribution.} 
One can also show the converse: For any $z_{i}\in [\underline{z}_{i},\overline{z}_{i}]$, there exists a feasible CDF $F_{i}$ with the reservation value $z_{i}$.

% Pandora's rule
Pandora's optimal strategy, \textit{Pandora's rule}, is as follows: 
\begin{itemize}
\vspace{-0.25em}
    \item[$\square$] {\rm\scshape Selection Rule:}
    If a box is to be opened, it should be an unopened box with the highest reservation value.
    \vspace{-0.5em}
    \item[$\square$] {\rm\scshape Stopping Rule:}
    Terminate search when the maximum sampled expected reward exceeds the reservation values of unopened boxes; in this case, choose an opened box with the highest expected reward.
\vspace{-0.25em}
\end{itemize}

% Remark
Any strategy consistent with Pandora's rule is optimal and multiple optimal strategies may exist. 
For example, when the reservation values of all boxes are the same, any search order is optimal. 
Likewise, Pandora is indifferent between stopping and continuing search at any point where the maximum sampled value is exactly equal to the highest reservation value of the remaining boxes.

% Seach Behavior: Illustration
To illustrate how search behavior can be represented in terms of a distribution profile, consider a case where Pandora opens boxes in the ascending order of indices $i$ under a CDF profile $F$. 
Then, Pandora's rule claims that the reservation values under $F$ must satisfy $z_{1}\geq z_{2}\geq \cdots\geq z_{n}$. 
Moreover, each box $i>1$ is opened with some probability $v_{i}$ such that 
\begin{align*}
    F_{1}(z_{i}-)\cdot F_{2}(z_{i}-)\cdot \cdots\cdot F_{i-1}(z_{i}-)
    \leq v_{i}
    \leq F_{1}(z_{i})\cdot F_{2}(z_{i})\cdot \cdots\cdot F_{i-1}(z_{i}), 
\end{align*}
where $F_{i}(y_{i}-)$ is the the left limit of $F_{i}$ at $y_{i}$. 
Note that $F_{j}(z_{i}-)\neq F_{j}(z_{i})$ means that $F_{j}$ has an atom at point $z_{i}$, in which case Pandora would be indifferent between opening box $i$ and not with some positive probability, and there exist multiple search behaviors compatible with $F$. 
Moreover, $F$ can also rationalize multiple search behaviors if $z_{i}=z_{i+1}$ for some $i$, in which case multiple search orders are optimal.

% Remark: insight
As discussed in the Introduction, the CDF profile, and hence the information structure, affects both the ex-ante and ex-post values of opening a box. The ex-ante value of opening each box is summarized by its reservation value, which shapes the ordered path through which the agent learns the ex-post values of the available options. These ex-post values, in turn, shapes the agent's stopping strategy. One implication of our model is that, although the value of the options are statistically independent, the information structure of one option can affect how much information the decision maker acquires about the other options through its effect on the learning path. We view our model as one of the simplest models for analyzing how this dual structure of information determines the agent's eventual learning path, which is one of the main economic contributions of this paper. 
 
%%%%%%%%%%%%%%%%%%%%%%%%%%%%%%%%%%%%%%%%%%%%%%%%%%%%%%%%%%%%%%%%%%%%%%%%%%%%%%
%%%%%%%%%%%%%%%%%%%%%%%%%%%%%%%%%%%%%%%%%%%%%%%%%%%%%%%%%%%%%%%%%%%%%%%%%%%%%%
\section{Feasible Search Behaviors} \label{sec:results}

% Assumption
We impose one parametric assumption throughout. 
Assumption \ref{assum: regularity} essentially requires that the ranges of reservation values $[\underline{z}_{i},\overline{z}_{i}]$ have a non-empty intersection and any search order can be optimal under some information structure.
\begin{assumption} \label{assum: regularity}
    $\overline{z}\geq \underline{z}$ where $\underline{z}=\max_{i}\underline{z}_{i}$ and $\overline{z}=\min_{i}\overline{z}_{i}$.
\end{assumption}
Equivalently, the assumption holds if and only if $\overline{z}_{i}\geq \underline{z}_{j}$ for any $i$ and $j$. 
Since $\overline{z}_{i}=\overline{x}_{i}-c_{i}/\lambda_{i}$ and $\underline{z}_{j}=\mu_{j}-c_{j}$, it is satisfied if heterogeneity across boxes is not too large. 
In particular, Assumption \ref{assum: regularity} holds under the symmetric case where $\lambda_{i}=\lambda$, $\overline{x}_{i}=\overline{x}$, $\underline{x}_{i}=\underline{x}$, and $c_{i}=c$ for each $i$.
The reason for imposing this assumption will be clear from the statement of Theorem \ref{thm: feasible search behavior}, which will be elaborated on later in this section.

% Math prep
Before presenting our main theorem, let us introduce a few notations.
An order $\pi:\{1,\dots,n\}\rightarrow\{1,\dots,n\}$ is a permutation over the set of boxes. 
Then, for each order $\pi$ and index $i$, define 
\begin{align*}
    \overline{v}_{\pi,i} &= \prod_{j<i} p_{\pi(j)}, 
    \quad
    p_{i}=1-\frac{c_{i}}{\overline{x}_{i}-\underline{z}}, \\
    \underline{v}_{\pi,i} &= \prod_{j<i} q_{\pi(j)}, 
    \quad
    q_{i}=1-\frac{\underline{z}_{i}-\underline{x}_{i}}{\underline{z}-\underline{x}_{i}},
\end{align*}
for each $i$. 
Intuitively, it will turn out that $p_{i}$ is roughly an upper bound for the probability that Pandora continues searching after opening box $i$.\footnote{To be more precise, Lemma \ref{lem: point max} in Appendix \ref{sec:appendixa} implies that $p_{i}=\max_{F_{i}:z_{i}\geq\underline{z}}F_{i}(\underline{z})$, where $z_{i}$ is the reservation value of a feasible distribution $F_{i}$. In other words, $p_{i}$ is the upper bound for the probability of continuing the search, if the reservation values of all later boxes equal $\underline{z}$. In general, however, the probability may be strictly higher than $p_{i}$, depending on the reservation values of the remaining boxes. Lemma \ref{lem: lower bound} shows an analogous interpretation for $q_{i}$.} 
Hence, given that Pandora opens boxes with order $\pi$, an upper bound for $v_{\pi(i)}$ is $\overline{v}_{\pi,i}$. 
One may analogously interpret $q_{i}$ and $\underline{v}_{\pi,i}$ as lower bounds. 
To simplify the notation, we write $\overline{v}_{i}=\overline{v}_{\pi,i}$ and $\underline{v}_{i}=\underline{v}_{\pi,i}$ when $\pi(i)=i$ for each $i$. 

% Main theorem introduction
The main theorem is stated as follows.
We say that a point $v\in V$ is a \textit{vertex} of a set $V\subset \mR^{n}$ if there exists a vector $r$ such that $r\cdot v> r\cdot u$ for all $u\in V\setminus\{v\}$, where $r\cdot v=\sum_{i=1}^{n}r_{i}v_{i}$.
Call a set $V$ a \textit{polytope} if it is a convex hull of its finite vertexes. 

\begin{theorem}\label{thm: feasible search behavior}
    The set of all feasible search behaviors $V^{*}$ is a polytope. 
    Any vector $v$ with $v_{1}\geq v_{2}\geq\cdots\geq v_{n}$ is a vertex of $V^{*}$ if and only if $v=(\overline{v}_{1},\dots,\overline{v}_{i},\underline{v}_{i+1},\dots,\underline{v}_{n})$ for some $i$.
    Moreover, a single distribution profile $\overline{F}$ induces all feasible search behaviors, under which the reservation value of every box equals $\underline{z}$ and Pandora's expected payoff is minimized among all feasible distribution profiles, which also equals $\underline{z}$. 
\end{theorem}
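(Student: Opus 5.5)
I would organize the proof around the construction of the welfare-minimizing profile $\overline{F}$, and then use a separate outer bound to show that nothing outside the convex hull of the candidate vertices is feasible.

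\emph{Step 1: the profile $\overline{F}$.} For each box $i$, let $\overline{F}_{i}$ be the three-point distribution supported on $\{\underline{x}_{i},\underline{z},\overline{x}_{i}\}$ with mean $\mu_{i}$ and reservation value exactly $\underline{z}$. Assumption \ref{assum: regularity} guarantees $\underline{z}\in[\underline{z}_{i},\overline{z}_{i}]$, and in particular $\underline{z}\ge\underline{x}_{i}$.
\begin{itemize}
\item The reservation-value equation forces the mass above $\underline{z}$ to sit at $\overline{x}_{i}$ with weight $c_{i}/(\overline{x}_{i}-\underline{z})$.
\item Hence the mass at or below $\underline{z}$ is $p_{i}$.
\item Among such distributions, we can shift weight between $\underline{x}_{i}$ and the atom at $\underline{z}$ while keeping the mean equal to $\mu_{i}$. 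Solving the mean constraint shows that the mass strictly below $\underline{z}$ (the atom at $\underline{x}_{i}$) can be made exactly $q_{i}$, with the atom at $\underline{z}$ absorbing the gap $p_{i}-q_{i}$.
\end{itemize}
Under $\overline{F}$, all reservation values equal $\underline{z}$, so every order $\pi$ is optimal. Each atom at $\underline{z}$ makes Pandora indifferent between stopping and continuing, so she may continue with any probability between $q_{\pi(j)}$ and $p_{\pi(j)}$ after each box. Mixing over orders and continuation probabilities then generates every point of the convex hull of the candidate vertices. Here I would use that mixed strategies convexify, and that a "continue after the atom for the first $i$ boxes, stop thereafter" rule yields exactly $(\overline{v}_{\pi,1},\dots,\overline{v}_{\pi,i},\underline{v},\dots)$.

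Strictly, the lower tail needs continuation probability $q$ multiplied onto the prefix $\overline{v}$. I expect the correct vertex formula is achieved by switching from always-continue-on-atom to always-stop-on-atom at stage $i$; later stages then continue only when the maximum is below $\underline{z}$, which requires that all previous draws were $\underline{x}$. I would check that the resulting product matches the stated $\underline{v}_{i+1}=\prod_{j\le i}q_j$ form, possibly by noticing that once the indifferent atom has been "kept" the continuation probability along the lower path is governed by the $q_j$.

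\emph{Step 2: welfare.} Pandora's payoff under $\overline{F}$ equals $\underline{z}$. This follows from Weitzman's characterization: the payoff is $\mE[\max_i \min(z_i,y_i)]$-type, and opening the first box yields exactly $\underline{z}$ since all reservation values equal $\underline{z}$ and, given the support, the capped values are at most $\underline{z}$. Conversely, in any feasible profile the box $i^{*}$ with $\underline{z}_{i^{*}}=\underline{z}$ can be opened and taken immediately, giving at least $\mu_{i^{*}}-c_{i^{*}}=\underline{z}$. So $\overline{F}$ minimizes welfare.

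\emph{Step 3: outer bound (the main obstacle).} I must show that every feasible $v$ lies in the convex hull, or equivalently that for every weight vector $r$, $r\cdot v\le\max$ over the candidate vertices. I would first reduce to $r$ sorted so that the claimed maximizer opens boxes in the order of decreasing $r$, and then use a support-function argument. Taking $v_{1}\ge\dots\ge v_n$ as the search order, the boxes with $r_j>0$ want to be opened with high probability, and Lemma \ref{lem: point max} and Lemma \ref{lem: lower bound} bound the per-box continuation probabilities by $p_j$ and from below by $q_j$. The difficulty is that reservation values above $\underline{z}$ can raise continuation beyond $p_j$, as the footnote warns. I would handle this by an exchange/induction argument on $n$: conditional on the first box opened, the problem on the remaining boxes has the same form with the effective threshold shifted, and I would show that raising any $z_j$ above $\underline{z}$ never helps the linear objective once all $v$-coordinates are accounted for. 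This is the step I expect to be hardest. Finiteness of the vertex set then gives that $V^{*}$ is a polytope, and the vertex characterization follows by checking that each candidate is the unique maximizer for $r$ of appropriate signs.
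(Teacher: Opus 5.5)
\textbf{There is a genuine gap: Step 3, the outer bound, is missing, and you place its difficulty on the wrong side of $\underline{z}$.}

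Your Steps 1 and 2 are fine. The three-point $\overline{F}_i$ is exactly the paper's construction. The strategy you settle on yields $\prod_{k<j}q_k=\underline{v}_j$ for $j>i$: continue while the running maximum is $\le\underline{z}$ up to box $i$, and afterwards only while it is $<\underline{z}$. Your welfare argument is valid and slightly different from the paper's: bound the payoff above by $\mE[\max_i\min\{y_i,\underline{z}\}]\le\underline{z}$, and below by opening the box with $\underline{z}_i=\underline{z}$.

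Step 3 is the real content of the theorem (Theorem~\ref{thm: information design positive weight} and the lemmas behind it). Suppose every reservation value satisfies $z_j\ge\underline{z}$. Then Lemma~\ref{lem: point max} gives $F_j(z_k)\le F_j(z_j)\le 1-c_j/(\overline{x}_j-z_j)\le p_j$ for every later box $k$, so $r\cdot v\le r\cdot\overline{v}$ follows at once (Lemma~\ref{lem: reservation value bound}). Raising reservation values above $\underline{z}$ is therefore the trivial case. The hard case is $z_j<\underline{z}$, which is feasible for every box with $\underline{z}_j<\underline{z}$. There the attainable continuation probability $1-c_j/(\overline{x}_j-z_j)$ strictly exceeds $p_j$, so no argument built on per-box bounds by $p_j$ can work. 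One must show that this gain is outweighed by the lower probability of ever reaching such boxes, since they are opened only after all earlier draws fall to or below $z_j<\underline{z}$. Your proposal has no mechanism for this trade-off.

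The paper handles it in four moves:
\begin{itemize}
\item It reduces to finite and then binary supports (Lemmas~\ref{lem: finite support} and~\ref{lem: optimal binary}).
\item It isolates the terminal block of boxes sharing a reservation value $z<\underline{z}$.
\item It places a preceding box's low atom at $z$, so that this box's probability of falling below the block's reservation value is $\kappa/(a-z)$ with $\kappa$ independent of $z$.
\item It proves in Lemma~\ref{lem: joint monotone} that
\[
\frac{1}{a-z}\sum_{k\ge m}r_k\prod_{i=m}^{k-1}\frac{\overline{x}_i-z-c_i}{\overline{x}_i-z}
\]
is nondecreasing in $z$ when the block is ordered by $r_k(\overline{x}_k-\underline{z})/c_k$. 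This lets one push the block's reservation value up, merging blocks, until every reservation value reaches $\underline{z}$.
\end{itemize}

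Your suggested induction, which conditions on the first box and recurses, does not produce a problem of the same form. After the first draw there is an outside option equal to that draw. Its distribution is itself a design variable, coupled to the later reservation values through which realizations trigger continuation. So the inductive hypothesis you would need is a different, unproved statement.

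For weights of mixed sign you must also combine upper and lower bounds. The paper does this by showing that pure-strategy behaviors lie in $V_\pi$: the weighted majorization constraint with weights $c_j/(\overline{x}_j-\underline{z})$, together with $v_j\ge\underline{v}_j$. It then runs the swap argument of Lemma~\ref{lem: vertex any direction}. The relevant orders are by $r_j(\overline{x}_j-\underline{z})/c_j$ for nonnegative weights and by $r_j(\underline{z}-\underline{x}_j)/(\underline{z}_j-\underline{x}_j)$ for negative ones, not by $r_j$ itself. So ``sorting by decreasing $r$'' is wrong outside the symmetric case.
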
 

% Illustration 
In the following, we elaborate on the statement. 

% Vertex
The former part of the theorem argues that $V^{*}$ is a polytope. 
Moreover, all vertexes are derived in closed form.
Intuitively, each vertex is characterized by a search order $\pi$ and a box that represents a threshold $i$.\footnote{Without loss of generality, the statement of Theorem \ref{thm: feasible search behavior} focuses on vertexes for which $\pi$ is the identity. Any vertex induced by a general search order $\pi$ can be obtained by replacing each index $j$ with $\pi(j)$, each $\overline{v}_{j}$ with $\overline{v}_{\pi,j}$, and each $\underline{v}_{j}$ with $\underline{v}_{\pi,j}$ in the statement.} 
The search behavior that induces the vertex is such that Pandora opens boxes in order $\pi$ with the highest probability until she reaches to the threshold box $i$ and with the lowest probability for the remaining boxes. 

% Illustration
Figure \ref{fig: feasible search behaviors} illustrates $V^{*}\subset [0,1]^{n}$ when $n=3$.\footnote{To produce Figure \ref{fig: feasible search behaviors}, we set $\overline{x}_{i}=1$, and $\underline{x}_{i}=0$ for all $i$. The other parameter values are given by $\mu_{1}=0.55$, $\mu_{2}=0.45$, $\mu_{3}=0.4$, and $c_{1}=0.3$, $c_{2}=0.1$, $c_{3}=0.2$.} 
Note that $v_{1}+v_{2}+v_{3}\geq 1$ for all feasible search behaviors as there exists no outside option and Pandora opens at least one box for sure. 
One may also notice that box $2$ is always opened with some positive probability. 
In general, the number of such boxes is at most one: Under the distribution profile that is degenerate at prior mean, it is an optimal strategy for Pandora to open box $i$ with the highest $\underline{z}_{i}=\mu_{i}-c_{i}$ and terminate search immediately afterwards, without opening any other boxes.

\begin{figure}[ht]
\begin{center}
\begin{tikzpicture}[scale=0.85, transform shape]

    % Define such that underline{z}=underline{z}_{Z}
    \pgfmathsetmacro{\muX}{0.55} %0.8
    \pgfmathsetmacro{\muY}{0.45} %0.5
    \pgfmathsetmacro{\muZ}{0.40} %0.6
    \pgfmathsetmacro{\cX}{0.30} %0.3
    \pgfmathsetmacro{\cY}{0.10} %0.1
    \pgfmathsetmacro{\cZ}{0.20} %0.2

    \pgfmathsetmacro{\zmin}{max(\muX-\cX,\muY-\cY,\muZ-\cZ)}

    \pgfmathsetmacro{\pX}{1-\cX/(1-\zmin)}
    \pgfmathsetmacro{\pY}{1-\cY/(1-\zmin)}
    \pgfmathsetmacro{\pZ}{1-\cZ/(1-\zmin)}
    \pgfmathsetmacro{\qX}{1-(\muX-\cX)/\zmin}
    \pgfmathsetmacro{\qY}{1-(\muY-\cY)/\zmin}
    \pgfmathsetmacro{\qZ}{1-(\muZ-\cZ)/\zmin}

    \coordinate (O) at (0,0,0);
    \coordinate (X) at (4,0,0); 
    \coordinate (Y) at (0,4,0); 
    \coordinate (Z) at (0,0,4); 

    \coordinate (XY) at (4,4,0);
    \coordinate (YZ) at (0,4,4);
    \coordinate (XZ) at (4,0,4);
    \coordinate (XYZ) at (4,4,4);

    % Cube
    \draw (O) -- (X) -- (XY) -- (Y) -- cycle;
    \draw (O) -- (Z);
    \draw (X) -- (XZ);
    \draw (Y) -- (YZ);
    \draw (XY) -- (XYZ);
    \draw (Z) -- (XZ) -- (XYZ) -- (YZ) -- cycle; 

    % Extreme Points
    \coordinate (XYZ0) at (4,\pX*4,\pX*\pY*4); \filldraw (XYZ0) circle (2pt);
    \coordinate (XYZ1) at (4,\pX*4,\qX*\qY*4); \filldraw (XYZ1) circle (2pt);
    \coordinate (XYZ2) at (4,\qX*4,\qX*\qY*4); \filldraw (XYZ2) circle (2pt);
    \coordinate (XZY0) at (4,\pX*\pZ*4,\pX*4); \filldraw (XZY0) circle (2pt);
    \coordinate (XZY1) at (4,\qX*\qZ*4,\pX*4); \filldraw (XZY1) circle (2pt);
    \coordinate (XZY2) at (4,\qX*\qZ*4,\qX*4); \filldraw (XZY2) circle (2pt);
    \coordinate (YXZ0) at (\pY*4,4,\pY*\pX*4); \filldraw (YXZ0) circle (2pt);
    \coordinate (YXZ1) at (\pY*4,4,\qY*\qX*4); \filldraw (YXZ1) circle (2pt);
    \coordinate (YXZ2) at (\qY*4,4,\qY*\qX*4); \filldraw (YXZ2) circle (2pt);
    \coordinate (YZX0) at (\pY*\pZ*4,4,\pY*4); \filldraw (YZX0) circle (2pt);
    \coordinate (YZX1) at (\qY*\qZ*4,4,\pY*4); \filldraw (YZX1) circle (2pt);
    \coordinate (YZX2) at (\qY*\qZ*4,4,\qY*4); \filldraw (YZX2) circle (2pt);
    \coordinate (ZXY0) at (\pZ*4,\pX*\pZ*4,4); \filldraw (ZXY0) circle (2pt);
    \coordinate (ZXY1) at (\pZ*4,\qX*\qZ*4,4); \filldraw (ZXY1) circle (2pt);
    \coordinate (ZXY2) at (\qZ*4,\qX*\qZ*4,4); \filldraw (ZXY2) circle (2pt);
    \coordinate (ZYX0) at (\pY*\pZ*4,\pZ*4,4); \filldraw (ZYX0) circle (2pt);
    \coordinate (ZYX1) at (\qY*\qZ*4,\pZ*4,4); \filldraw (ZYX1) circle (2pt);
    \coordinate (ZYX2) at (\qY*\qZ*4,\qZ*4,4); \filldraw (ZYX2) circle (2pt);

    % Feasible Search Behaviors
    \filldraw[fill=blue, opacity=0.4] (XYZ0) -- (YXZ0) -- (YZX0) -- (ZYX0) -- (ZXY0) -- (XZY0) -- cycle; 
    \filldraw[fill=blue, opacity=0.0] (XYZ1) -- (YXZ1) -- (YZX1) -- (ZYX1) -- (ZXY1) -- (XZY1) -- cycle; 
    \filldraw[fill=gray, opacity=0.1] (XYZ2) -- (YXZ2) -- (YZX2) -- (ZYX2) -- (ZXY2) -- (XZY2) -- cycle; 
    
    \filldraw[fill=blue, opacity=0.25] (XYZ0) -- (YXZ0) -- (YXZ1) -- (XYZ1) -- cycle;
    \filldraw[fill=blue, opacity=0.3] (YXZ0) -- (YZX0) -- (YZX1) -- (YZX2) -- (YXZ2) -- (YXZ1) -- cycle;
    \filldraw[fill=blue, opacity=0.45] (YZX0) -- (ZYX0) -- (ZYX1) -- (YZX1) -- cycle;
    \filldraw[fill=blue, opacity=0.5] (ZYX0) -- (ZXY0) -- (ZXY1) -- (ZXY2) -- (ZYX2) -- (ZYX1) -- cycle;
    \filldraw[fill=blue, opacity=0.375] (ZXY0) -- (XZY0) -- (XZY1) -- (ZXY1) -- cycle;
    \filldraw[fill=blue, opacity=0.3] (XZY0) -- (XYZ0) -- (XYZ1) -- (XYZ2) -- (XZY2) -- (XZY1) -- cycle;
    \filldraw[fill=blue, opacity=0] (XYZ1) -- (YXZ1) -- (YXZ2) -- (XYZ2) -- cycle;
    \filldraw[fill=blue, opacity=0] (YZX1) -- (ZYX1) -- (ZYX2) -- (YZX2) -- cycle;
    \filldraw[fill=blue, opacity=0] (ZXY1) -- (XZY1) -- (XZY2) -- (ZXY2) -- cycle;

    % Coordinates
    \draw[->][dashed] (0,0,0) -- (6,0,0) node[right]{$v_{1}$};
    \draw[->][dashed] (0,0,0) -- (0,6,0) node[left]{$v_{2}$};
    \draw[->][dashed] (0,0,0) -- (0,0,6) node[left]{$v_{3}$};
\end{tikzpicture}
\end{center}
\caption{The set of feasible search behaviors $V^{*}$.}
\label{fig: feasible search behaviors}
\end{figure}

% Convexity
The convexity of $V^{*}$ is not a trivial consequence. This is especially because the search order is a discrete object and is therefore discontinuous in information structures. To make the point more concrete, consider two feasible search behaviors $v$ and $v'$ associated with any feasible CDF profiles $F$ and $F'$. Then the convex combination $v''=\alpha v+(1-\alpha) v'$ would generally satisfy $v''_{i}<1$ for every $i$. Thus, to implement $v''$, Pandora must use a mixed strategy, which requires constructing a CDF profile $F''$ under which the reservation values of multiple boxes coincide. However, at this point, we do not have enough information about the reservation values under $F$ and $F'$ other than their ordinal orders among options. Therefore, a naive convex combination of these CDF profiles $F$ and $F'$ need not have this property. 

% Face
We can also characterize each face of $V^{*}$ by a sub-group of those vertexes. 
Here, we say that a set $X\subset V$ is a \textit{face} of $V$ if $X=\arg\max_{v\in V}r\cdot v$ for some $r\in\mR^{n}\setminus\{0\}$. 
Remember from the previous discussion that each vertex is pinned down by a search order $\pi$ and a threshold $i$. 
In the next proposition, we sort the boxes for any given weight $r$ and show that the vertexes corresponding to the resulting orders are the ones that maximize $r\cdot v$.

\begin{proposition}\label{prop: face}
    Take any $r\in\mR^{n}\setminus\{0\}$. 
    Let $\Pi_{r}$ be the set of orders $\pi$ such that 
    \begin{align*}
    r_{\pi(1)}\cdot \frac{\overline{x}_{\pi(1)}-\underline{z}}{c_{\pi(1)}}
    \geq
    \cdots 
    &\geq 
    r_{\pi(i)}\cdot \frac{\overline{x}_{\pi(i)}-\underline{z}}{c_{\pi(i)}} \\
    &\geq 
    0 \\
    &>
    r_{\pi(i+1)}\cdot \frac{\underline{z}-\underline{x}_{\pi(i+1)}}{\underline{z}_{\pi(i+1)}-\underline{x}_{\pi(i+1)}}
    \geq
    \cdots
    \geq 
    r_{\pi(n)}\cdot \frac{\underline{z}-\underline{x}_{\pi(n)}}{\underline{z}_{\pi(n)}-\underline{x}_{\pi(n)}}.    
    \end{align*}
    Then, a face $X=\arg\max_{v\in V^{*}}r\cdot v$ is the convex hull of all points $v$ such that, for some $\pi\in \Pi_{r}$, we have $v_{\pi(j)}=\overline{v}_{\pi,j}$ for all $j\leq i$ and $v_{\pi(j)}=\underline{v}_{\pi,j}$ for all $j>i$.
\end{proposition}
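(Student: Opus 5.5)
Since Theorem \ref{thm: feasible search behavior} gives $V^{*}$ as the convex hull of the finitely many vertexes $w^{\pi,i}$, where $w^{\pi,i}_{\pi(j)}=\overline{v}_{\pi,j}$ for $j\leq i$ and $w^{\pi,i}_{\pi(j)}=\underline{v}_{\pi,j}$ for $j>i$, the face $X=\arg\max_{v\in V^{*}}r\cdot v$ is the convex hull of those vertexes that maximize $r\cdot w^{\pi,i}$ over all pairs $(\pi,i)$. This is standard: a linear functional attains its maximum over a polytope on the convex hull of the maximizing vertexes. The proof therefore reduces to a finite combinatorial problem. We must show that $w^{\pi,i}$ attains $\max_{\pi',i'} r\cdot w^{\pi',i'}$ if and only if $\pi\in\Pi_{r}$ with $i$ equal to the threshold in the definition. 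If several pairs give the same point, we only need that the set of maximizing points coincides.

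To handle the finite problem, write $a_{k}=c_{k}/(\overline{x}_{k}-\underline{z})=1-p_{k}$ and $b_{k}=(\underline{z}_{k}-\underline{x}_{k})/(\underline{z}-\underline{x}_{k})=1-q_{k}$, so that $r_{k}/a_{k}$ and $r_{k}/b_{k}$ are the sorting indices in $\Pi_{r}$. Then
\begin{align*}
r\cdot w^{\pi,i}=\sum_{j\leq i} r_{\pi(j)}\prod_{l<j}(1-a_{\pi(l)})+\prod_{l\leq i}(1-a_{\pi(l)})\sum_{j>i} r_{\pi(j)}\prod_{i<l<j}(1-b_{\pi(l)})\cdot\kappa,
\end{align*}
where $\kappa=\prod_{l\leq i}q_{\pi(l)}/p_{\pi(l)}$. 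A cleaner decomposition is to write it directly as the sum of the two product blocks. The first block has the classical form of an expected reward from a sequence of boxes, with box $k$ paying $r_{k}$ and then stopping with probability $a_{k}$. Equivalently, it is the Weitzman or Smith-rule index problem. A standard adjacent-interchange argument shows that it is maximized by ordering the boxes in decreasing $r_{k}/a_{k}$. It also shows that appending a box raises the value if and only if its continuation is worthwhile. The second block is handled in the same way with indices $r_{k}/b_{k}$.

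The key steps, in order, are as follows.

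(i) Adjacent-interchange lemma. Swapping consecutive positions $j,j+1$ within the same block changes $r\cdot w$ by a positive multiple of $a_{\pi(j)}a_{\pi(j+1)}\bigl(r_{\pi(j)}/a_{\pi(j)}-r_{\pi(j+1)}/a_{\pi(j+1)}\bigr)$, and analogously with the $b$'s. Hence at an optimum each block is sorted, and ties leave the value unchanged.

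(ii) Boundary lemma. Moving the threshold, that is, switching the box at position $i$ from the $p$-regime to the $q$-regime, changes the value in a way that depends on signs. The $p$-block should contain exactly the boxes with $r_{k}\geq0$ and the $q$-block those with $r_{k}<0$. A nonnegative $r_{k}$ wants a large weight, so it should be early and counted under the larger $\overline{v}$ products. A negative $r_{k}$ wants the smallest weights, which come from the smaller $\underline{v}$ products and from later positions. We use $q_{k}\leq p_{k}$, which follows from Assumption \ref{assum: regularity}, so that $\underline{v}\leq\overline{v}$ componentwise.

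(iii) Conclude that every maximizer lies in $\Pi_{r}$, and that all elements of $\Pi_{r}$ give equal value, because orders within ties and placement of the zero-weight boxes do not change the value.

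The main obstacle is step (ii). The cross-block comparison is not a pure interchange, because switching a box from the $p$-regime to the $q$-regime changes the product weights of every later box. I would first try to prove monotonicity directly. The idea is to treat the entire $q$-tail as a single aggregate "box" with value $R_{\text{tail}}\leq0$, and then show the following. Inserting a box with $r_{k}\geq0$ into the $p$-block in front of this tail weakly increases the total, since it adds $r_{k}$ and scales the negative tail down by $p_{k}$. Conversely, moving a box with $r_{k}<0$ from the $p$-block into the tail weakly increases the total, because $q$-weights are smaller. If the direct comparison fails for some parameter configuration, I would fall back on the supporting-hyperplane characterization implicit in the proof of Theorem \ref{thm: feasible search behavior}, namely the inequalities describing $V^{*}$, and read off the maximizers from there.
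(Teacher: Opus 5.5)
\textbf{The gap: the ``only if'' half fails when some $r_k=0$.} Your step (ii) claims that at a maximum the $p$-block consists exactly of the boxes with $r_k\ge 0$. This cannot be proved once some $r_k=0$, and your own remark in (iii), that zero-weight boxes can be placed freely, is the reason.

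First, the obstacle you single out in (ii) is not real. Because $\underline{v}_{\pi,j}=\prod_{l<j}q_{\pi(l)}$ already uses $q$ for the boxes ahead of the threshold, $w^{\pi,i}$ and $w^{\pi,i-1}$ differ only in coordinate $\pi(i)$. So $r\cdot w^{\pi,i}-r\cdot w^{\pi,i-1}=r_{\pi(i)}(\overline{v}_{\pi,i}-\underline{v}_{\pi,i})$, which has the sign of $r_{\pi(i)}$ since $q_k\le p_k$, and no later weight changes. (In your insertion heuristic the tail is also rescaled by $q_k$, not $p_k$.) Combine this with your interchanges: after sorting, misplaced positive boxes sit at the head of the $q$-block and misplaced negative ones at the tail of the $p$-block, so shifting the threshold past them weakly helps. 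This shows that every point generated by $\Pi_r$ is a maximizer.

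But when $r_{\pi(i)}=0$ the increment vanishes while the point moves whenever $\overline{v}_{\pi,i}>\underline{v}_{\pi,i}$. Concretely, let $n=2$ and $r=(1,0)$. Then $\Pi_r=\{\mathrm{id}\}$ with threshold $i=2$, so the statement predicts $X=\{(1,p_1)\}$. Yet $(1,q_1)=w^{\mathrm{id},1}$ is induced by $\overline{F}$ (open box $2$ iff $y_1<\underline{z}$) and also attains $\max_{V^*}v_1=1$. So $X$ is the segment from $(1,q_1)$ to $(1,p_1)$, which is nondegenerate whenever $q_1<p_1$; this always holds in the symmetric case, where $q_1=0<p_1$. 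The paper's own illustration $r=e_i$ runs into the same problem.

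\textbf{How to repair it.} The inclusion $X\subseteq\mathrm{conv}\{\text{points generated by }\Pi_r\}$ can only be proved after one of two changes:
\begin{itemize}
\item restrict to $r$ with no zero coordinate; or
\item enlarge $\Pi_r$ so the threshold may fall anywhere within the block of zero-weight boxes (positive boxes sorted by $r_k/a_k$, then the zero-weight ones, then the negative ones sorted by $r_k/b_k$).
\end{itemize}
With that correction your route works, provided you track equality cases instead of asserting ``iff'' on pairs. A strict inversion in the $p$-block can be strictly improved because its factor $\prod p$ is positive. A $q$-block interchange or threshold shift that leaves $r\cdot w$ unchanged either involves tied ratios (both orders are admissible) or leaves the point itself unchanged. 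The latter happens with a vanishing $q$-product, or when $\overline{v}_{\pi,i}=\underline{v}_{\pi,i}$, as for $i=1$ or when Assumption \ref{assum: regularity} binds.

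\textbf{Comparison with the paper.} The paper derives the proposition in one line from Lemma \ref{lem: vertex any direction}, whose proof invokes Theorem \ref{thm: information design positive weight} for $r^{+}$ plus a $q$-block interchange. That lemma only yields that the points generated by $\Pi_r$ are maximizers, with uniqueness when all inequalities are strict, so it does not settle the reverse inclusion either. Your combinatorial argument over the explicit vertex list is more elementary and, once the zero-weight case is handled, gives both inclusions.
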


% Implications
Proposition \ref{prop: face} characterizes the set of vertexes maximizing a weighted search length for any non-zero weight. 
Considering weights $r$ and $-r$, we can obtain a robust prediction for the range of feasible search lengths weighted by $r$.
For another application, if we take $r$ such that $r_{i}=1$ for some $i$ and $r_{j}=0$ for all $j\neq i$, Proposition \ref{prop: face} gives us the face that characterizes the set of all feasible search behaviors such that Pandora opens box $i$ first. 

% Distribution
The latter statement of Theorem \ref{thm: feasible search behavior} argues that we have the strongest form of reduction principle, i.e., one can focus on a single welfare-minimizing distribution profile $\overline{F}$ to derive all feasible search behaviors. 
The convexity of the set $V^{*}$ is obtained as a corollary of this result, rather than being proved directly. 
Figure \ref{fig: dist} illustrates the distribution $\overline{F}_{i}$.
It is a step function and the realizations of posterior means can take one of three values, $\underline{x}_{i}$, $\underline{z}$, and $\overline{x}_{i}$. 

\begin{figure}[ht]
\centering
\begin{tikzpicture}[scale=1, transform shape]

    % Define such that underline{z}=underline{z}_{Z}
    \pgfmathsetmacro{\mu}{0.5}
    \pgfmathsetmacro{\c}{0.2}

    \pgfmathsetmacro{\zmin}{max(\mu-\c,0.5)}
    \pgfmathsetmacro{\p}{1-\c/(1-\zmin)}
    \pgfmathsetmacro{\q}{1-(\mu-\c)/\zmin}

    \coordinate (O) at (0,4) node[below]{$\underline{x}_{i}$};
    \coordinate (X) at (4,0); \filldraw (X) node[below]{$\overline{x}_{i}$};
    \coordinate (Y) at (0,4); \filldraw (Y) node[left]{$1$};
    \coordinate (XY) at (4,4); \filldraw[line width=2pt] (XY) circle (2pt);

    \coordinate (Up) at (4*\zmin,4*\p); \filldraw[line width=2pt] (Up) circle (2pt);
    \coordinate (UpRight) at (4,4*\p); \draw[line width=1.5pt] (UpRight) circle (2.5pt);
    \coordinate (UpLeft) at (0,4*\p); 
    \coordinate (Down) at (4*\zmin,4*\q); \draw[line width=1.5pt] (Down) circle (2.5pt);
    \coordinate (DownLeft) at (0,4*\q); \filldraw[line width=1.5pt] (DownLeft) circle (2.5pt); 
    \coordinate (JumpPoint) at (4*\zmin,0); 

    % Cube
    \draw[dashed] (X) -- (XY) -- (Y); 

    % Distribution
    \draw[line width=2pt] (Up) -- (4-0.1,4*\p); 
    \draw[dashed] (Up) -- (UpLeft) node[left]{$p_{i}$};
    \draw[line width=2pt] (4*\zmin-0.1,4*\q) -- (DownLeft) node[left]{$q_{i}$}; 
    \draw[dashed] (Up) -- (JumpPoint) node[below]{$\underline{z}$};

    % Coordinates
    \draw[->] (0,0) -- (5,0) node[below]{$y_{i}$};
    \draw[->] (0,0) -- (0,5) node[left]{$\overline{F}_{i}$};

    % Points
    \draw[fill=white, line width=1.5pt] (UpRight) circle (2.5pt);
    \draw[fill=white, line width=1.5pt] (Down) circle (2.5pt);
\end{tikzpicture}
\caption{The distribution of posterior means $\overline{F}_{i}$.}
\label{fig: dist}
\end{figure}

% Dist explanations
The construction is based on the following intuition.
In order for a single CDF profile to rationalize all vertexes, Pandora must be indifferent among all search orders.
Consistent with this, the following equation shows that the reservation value of every box equals $\underline{z}$: 
\begin{align*}
    \overline{x}_{i}- \underline{z} - \int_{\underline{z}}^{\overline{x}_{i}}\overline{F}_{i}(y_{i})dy_{i}
    = (\overline{x}_{i} - \underline{z}) - (\overline{x}_{i}-\underline{z})p_{i}
    = (\overline{x}_{i} - \underline{z})(1-p_{i})
    = c_{i}.
\end{align*}
The welfare minimization comes from this feature.\footnote{Specifically, Lemma \ref{lem: welfare min} in Appendix \ref{sec:appendixa} shows that any CDF profile $F$ with the reservation values being lower than $\underline{z}$ minimizes Pandora's welfare.}
Moreover, $\overline{F}_{i}$ is constructed to simultaneously maximize $F_{i}(\underline{z})$ and minimize $F_{i}(\underline{z}-)$. 
Roughly speaking, this means that we identify a single CDF profile that can both maximize and minimize the probability of continuing the search, depending on how Pandora breaks ties. % given that the reservation values of the remaining boxes equal $\underline{z}$
This flexibility allows us to implement any feasible search behavior with the single CDF profile, by selecting an appropriate search strategy consistent with Pandora's rule.

% Assumption 1
Assumption \ref{assum: regularity} is to guarantee $0\leq q_{i}\leq p_{i}\leq 1$, ensuring that each $\overline{F}_{i}$ is a non-decreasing function and therefore a well-defined CDF. 
See Appendix \ref{sec:appendixa} for further details. 
Conversely, if every $\overline{F}_{i}$ is well-defined, the definition of $\overline{z}_{i}$ implies that $\overline{z}_{i}\geq \underline{z}$ for all boxes $i$, thereby satisfying Assumption \ref{assum: regularity}.
Thus, the assumption is both necessary and sufficient for our construction $\overline{F}$.

% Proof strategy
The proof of Theorem \ref{thm: feasible search behavior} is in Appendix \ref{sec:appendixa}, which consists of three steps. 
The first step is the most involved, where we solve an auxiliary problem of choosing a CDF profile to maximize a search duration $r\cdot v$ with any non-negative weights $r$. 
Intuitively, we solve it by adjusting the reservation values, and show that making $z_{i}$ closer to $\underline{z}$ increases the search length. 
Hence, consistent with Theorem \ref{thm: feasible search behavior}, the optimum satisfies $z_{i}=\underline{z}$.
This adjustment generally involves a trade‑off, however: increasing $z_{i}<\underline{z}$ raises the probability that box $i$ is opened, increasing the search length, but it also makes box $i$ more attractive; Pandora is then more likely to draw a high realization and terminate the search, reducing the search length. 
The key part of the proof is to show that the first effect dominates. 
A challenge is that each $v_{i}$ is discontinuous in CDF profiles, as it also depends on the search order. Although a naive approach would then analyze this trade‑off within each set of CDF profiles with a common search order, it turned out that this yields an ambiguous relation between the two effects.\footnote{For instance, if $r_{i}=0$, the first effect disappears. Hence, if the weight on the next-opened box is positive, generally the second effect dominates, and perturbing $F$ to raise $z_{i}$ shortens the weighted search length. 
Intuitively, Lemma \ref{lem: joint monotone} in Appendix \ref{sec:appendixa} shows that the first effect dominates the second one when the boxes are ordered consistently with the optimal search behavior.} 
Therefore, we jointly optimize over search orders and the CDF profiles.
In the second step, we use this result to construct a polytope that gives a bound for the set of all feasible search behaviors $V^{*}$ and characterize its vertexes. 
The final step is a heuristic procedure of constructing a welfare-minimizing profile $\overline{F}$ and verifying that it induces all of these vertexes, hence all points in the set. 

% Subsections introduced
In the remainder of this section, we discuss three applications of the main theorem. 
A literal reading of Theorem \ref{thm: feasible search behavior} would suggest that it characterizes the information-free prediction set in a simple environment where information structures also affect the order of learning about heterogeneous options. 
However, the two statements in Theorem \ref{thm: feasible search behavior} also yield separate implications for information design and for identification problems, respectively.

%%%%%%%%%%%%%%%%%%%%%%%%%%%%%%%%%%%%%%%%%%%%%%%%%%%%%%%%%%%%%%%%%%%%%%%%%%%%%%
\subsection{Information Design}\label{subsec: information design} 

% Introduction
One application is \citeapos{kamenica2011bayesian} \textit{information design}. 
Suppose that there exists another agent, called an information designer, who chooses an information structure $F$ to maximize an objective function $g:\mR^{n}\rightarrow\mR$ defined over search behaviors.
Equivalently, the information designer maximizes $g$ over $V^{*}$.\footnote{As in \citet{kamenica2011bayesian}, we assume that the information designer can pick a preferred equilibrium if there are multiple strategies consistent with Pandora's rule.} 
As discussed in the Introduction, the information designer in our model can be interpreted as an online platform that controls the amount of information disclosed on each product page in order to maximize pay-per-click advertising revenue.\footnote{Although their models are quite different from ours, platforms with similar objective functions are also studied in \citet{eliaz2011simple}, \citet{hagiu2011intermediaries}, \citet{de2016search}, \citet{gossner2021attention}, \citet{koh2022attention}, and \citet{hebert2026engagement}. Theorem \ref{thm: joint distribution} in the next section shows that the welfare implications change when such a platform also earns sales commissions.}

% Practical example
In applications, information designers may influence agents' behavior not only through information but also through other instruments. 
For example, when the information designer is interpreted as an online platform, it may not only design product web pages but also directly guide consumers toward a desired search order by placing products in prominent positions. Here, we implicitly assume that the designer manipulates only information.\footnote{\citet{armstrong2009prominence} develops a search model on this topic.} This is an essential simplification for isolating the implications of the design problem that arise from the dual role of information structures emphasized in this paper from those generated by other instruments specific to each application.

In this setting, our findings induce the following result.
Call a search behavior $v$ \textit{fully implementable} if, for any $\delta>0$, there exists a CDF profile $F$ such that for any Pandora's optimal strategy, the induced search behavior $v'$ satisfies $\max_{i}|v_{i}-v'_{i}|<\delta$.
Recall that $\Pi_{r}$ is defined in Proposition \ref{prop: face}.

\begin{proposition}\label{prop: optimal}
$\overline{F}$ is optimal for any upper-continuous objective function $g$, which minimizes Pandora's welfare.
Moreover, if $g(v)=r\cdot v$ for some $r\in\mR^{n}\setminus\{0\}$, we have the followings: 
(i) it is optimal to let Pandora open boxes according to any order in $\Pi_{r}$, and 
(ii) if Assumption \ref{assum: regularity} holds with strict inequality, an optimal search behavior is fully implementable. 
\end{proposition}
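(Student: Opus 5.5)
The first claim follows directly from Theorem \ref{thm: feasible search behavior}. Since $V^{*}$ is a polytope, it is compact, so an upper semicontinuous $g$ attains its maximum on $V^{*}$ at some $v^{*}$. Theorem \ref{thm: feasible search behavior} says $\overline{F}$ induces every feasible search behavior, so $\overline{F}$ induces $v^{*}$ together with some optimal strategy of Pandora. Because the designer selects the preferred equilibrium, $\overline{F}$ is optimal. The same theorem says $\overline{F}$ minimizes Pandora's welfare, which gives the welfare statement.

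For (i), take $g(v)=r\cdot v$. The maximizers form the face $X=\arg\max_{v\in V^{*}}r\cdot v$. By Proposition \ref{prop: face}, $X$ is the convex hull of the vertexes $v^{\pi}$ indexed by $\pi\in\Pi_{r}$, each with its threshold $i$ defined by the sign change in the ordering. Each such vertex is optimal, so it suffices to show that $v^{\pi}$ is generated by Pandora opening boxes in order $\pi$. Under $\overline{F}$ every reservation value equals $\underline{z}$, so order $\pi$ is consistent with Pandora's rule. I would then spell out the tie-breaking that produces $v^{\pi}$:
\begin{itemize}
\item after opening box $\pi(j)$ with $j<i$, continue whenever the maximal sampled mean is at most $\underline{z}$, which happens with probability $p_{\pi(j)}$ per box;
\item after box $\pi(i)$ and later boxes, continue only when the maximal sampled mean is strictly below $\underline{z}$, which happens with probability $q$.
\end{itemize}
The continuation probabilities then multiply to give exactly $\overline{v}_{\pi,j}$ and $\underline{v}_{\pi,j}$. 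This reproduces the paper's final step of Theorem \ref{thm: feasible search behavior}, so I would only cite it.

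For (ii), full implementability, I perturb $\overline{F}$ so that the tie-breaking above becomes strict. Fix an optimal vertex $v^{\pi}$. With strict Assumption \ref{assum: regularity}, $\underline{z}<\overline{z}_{k}$ for every $k$ and $q_{k}<p_{k}$, so there is room to move reservation values and atoms in both directions. I would use the following construction:
\begin{itemize}
\item Choose reservation values $z_{\pi(1)}>z_{\pi(2)}>\cdots>z_{\pi(n)}$, all within $\epsilon$ of $\underline{z}$. This makes order $\pi$ the unique optimal order, at least whenever a tie would matter.
\item For boxes $\pi(j)$ with $j<i$, split the atom at $\underline{z}$ into mass slightly below $z_{\pi(n)}$, so that Pandora strictly prefers to continue. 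Adjust the mass at $\overline{x}_{\pi(j)}$ to preserve the mean.
\item For the remaining boxes, move the atom slightly above $z_{\pi(1)}$, so that Pandora strictly prefers to stop.
\end{itemize}
The perturbed $F_{k}$ must keep mean $\mu_{k}$ and hit the targeted reservation value. This is possible because the reservation value is continuous in the atom locations and masses, and strictness gives interior slack: $F_{k}(\underline{z})$ can still be pushed to near $p_{k}$ while $z_{k}$ stays slightly above or below $\underline{z}$.

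Under the perturbed profile, every optimal strategy opens boxes in order $\pi$, since ties now arise only on null events. The continuation probabilities are within $O(\epsilon)$ of $p$ and $q$, so the induced $v'$ is within $\delta$ of $v^{\pi}$. Any optimal point of $X$ that is not a vertex is not needed, because exhibiting one fully implementable optimal behavior, namely the vertex, suffices.

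The main obstacle is this perturbation step. The difficulty is simultaneously pinning the reservation values into a strict order, placing atoms on the correct side of every later box's reservation value, and preserving feasibility of each $F_{k}$ (correct mean and support in $[\underline{x}_{k},\overline{x}_{k}]$). I would handle it with an explicit three-point family around $\underline{z}$ and continuity arguments.
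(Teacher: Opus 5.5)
Your argument for the first claim and for (i) is essentially the paper's. It uses compactness of $V^{*}$ together with the fact that $\overline{F}$ induces every feasible behavior, and then Proposition~\ref{prop: face} with the tie-breaking rule of Lemma~\ref{lem: sufficient condition}.

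The gap is the perturbation in (ii). Placing the middle atom of boxes $\pi(1),\dots,\pi(i-1)$ \emph{below} $z_{\pi(n)}$ implements the wrong point. In the target vertex, $v_{\pi(j)}=\underline{v}_{\pi,j}=\prod_{k<j}q_{\pi(k)}$ for $j>i$, and this product runs over \emph{all} earlier boxes, including $\pi(1),\dots,\pi(i-1)$. So a middle draw from an early box must let Pandora continue through $\pi(2),\dots,\pi(i)$ but stop her before $\pi(i+1)$. Your rule in (i) achieves this automatically under $\overline{F}$, because after $\pi(i)$ it conditions on the \emph{maximum} sampled mean. In your perturbed profile, however, a middle draw from $\pi(k)$ with $k<i$ lies below every reservation value and never triggers stopping. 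Every optimal strategy then gives $v'_{\pi(j)}\approx \overline{v}_{\pi,i}\prod_{i\le k<j}q_{\pi(k)}$ for $j>i$, instead of $\underline{v}_{\pi,i}\prod_{i\le k<j}q_{\pi(k)}$. Strict Assumption~\ref{assum: regularity} gives $0\le q_{\pi(k)}<p_{\pi(k)}$, so $\overline{v}_{\pi,i}>\underline{v}_{\pi,i}$ for $i\ge 2$. Hence for $2\le i<n$ your $v'$ stays bounded away from the vertex whenever the tail product is positive, and the step ``$v'$ is within $\delta$ of $v^{\pi}$'' fails.

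The missing idea is that each early box's middle atom must \emph{separate} the two groups of reservation values. The paper puts the middle atom of every $\pi(k)$, $k\le i$, at a common point $\underline{z}+\varepsilon_{+}$ and arranges $z_{\pi(1)}>\cdots>z_{\pi(i)}>\underline{z}+\varepsilon_{+}>z_{\pi(i+1)}>\cdots>z_{\pi(n)}$. To lift $z_{\pi(k)}$ above that atom, it lowers the plateau on $[\underline{z}+\varepsilon_{+},\overline{x}_{\pi(k)})$ from $p_{\pi(k)}$ to $p_{\pi(k)}-\Delta_{\pi(k)}$, with $\Delta_{\pi(k)}\ge c_{\pi(k)}/(\overline{x}_{\pi(k)}-\underline{z}-\varepsilon_{+})-c_{\pi(k)}/(\overline{x}_{\pi(k)}-\underline{z})$. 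It restores the mean by adding a small mass $\eta_{\pi(k)}\ge 0$ at $\underline{x}_{\pi(k)}$, and strictness ($q_{\pi(k)}<p_{\pi(k)}$) keeps this a valid CDF for small perturbations. For the boxes after $\pi(i)$, the only requirement is that the middle atom lie above every later reservation value. The paper uses an interleaved ladder $\underline{z}+\varepsilon_{j}$; your choice of placing it above $z_{\pi(1)}$ also meets this requirement. With this placement, $F_{\pi(k)}(z_{\pi(j)})\approx p_{\pi(k)}$ for $k<j\le i$ and $F_{\pi(k)}(z_{\pi(j)})\approx q_{\pi(k)}$ for $j>i$. No realization equals a reservation value, so every optimal strategy approximates the vertex.
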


% Discussion: optimal design
The first statement shows that any information designer who wishes to induce a preferred search behavior may result in minimizing Pandora's welfare among all feasible distributions.
Surprisingly, the optimal information structure $\overline{F}$ does not depend on the shape of the objective function. 
The upper-continuity of $g$ is assumed just for ensuring the existence of a maximizer.

% Discussion: linear case 1
The second statement is about the case of linear objective functions. 
In the former part, we characterize the designer-optimal search orders.
Recall that Pandora is indifferent among all search orders under $\overline{F}$ and thus the designer is free to choose any preferred search order.
For example, if $c_{i}=c$, $\overline{x}_{i}=\overline{x}$, and $r_{i}\geq 0$ for each $i$, we obtain an intuitive result that the information designer prefers to let Pandora open boxes in the descending order of $r_{i}$. 

% Discussion: linear case 2
The latter part concerns full implementation.
Although $ \overline{F}$ is optimal, it is not robustly optimal because Pandora can adopt an alternative optimal strategy that yields the designer-worst search behavior.
We show that every vertex of $V^{*}$ is fully implementable and, consequently, so is any optimal search behavior for any linear objective function.
In the proof in Appendix \ref{sec:appendixb}, we perturb the benchmark distributions $\overline{F}$ in a manner tailored to each given vertex we wish to implement. 

\subsection{Identification Problems}\label{subsec: identification} 

% Introduction
Consider the following problem: 
Suppose analysts have access to data on search behavior $v$.\footnote{In practice, constructing the vector $v$ requires detailed session-level data on which options were inspected. Recently, such information is becoming increasingly available in clickstream and browsing datasets used in the empirical search literature, including those from online retail, travel, and marketplace platforms. See, for example, \citet{ursu2025sequential}.}
They wish to use this data $v$ and estimate the model parameters. 
However, they do not know to what extent the agent learns from each inspection. 
Assumptions about information structures can greatly affect the estimations and the predictions, and therefore, they wish to take a conservative stance and derive the set of all parameters such that this data set $v$ is consistent with an optimal play under some information structure. 

% Derivation
The closed-form characterization in Theorem \ref{thm: feasible search behavior} can be directly mapped to the closed-form characterization for this \textit{partial identification problem}. 
Let $\Gamma\subset \mR^{4n}$ be any subset of all parameter profiles $(\overline{x}_{i},\underline{x}_{i},\lambda_{i},c_{i})_{i}$ satisfying Assumption \ref{assum: regularity}. 
For each parameter $\gamma\in \Gamma$, write $V^{*}(\gamma)$ to be the set of all feasible search behaviors which we derive in Theorem \ref{thm: feasible search behavior}. 
Then, 
\begin{align*}
    I(v; \Gamma) = \{\gamma\in \Gamma\mid v\in V^{*}(\gamma)\},
\end{align*}
is the \textit{identified set} for a given observation $v\in \mR^{n}$, i.e., the set of model parameters that rationalize the data $v$ with some information structure. 

% Implication from identification
The identification problem considered here has a natural economic significance for the empirical search literature, apart from the paper's theoretical motivation of analyzing the dual role of information structures. First, the parameter $\gamma$ consists of primitives such as the qualities of the options and search costs. Although such primitives are not directly observable in practice, they are directly relevant for welfare, and hence their estimation is economically important. Second, as emphasized in the empirical search literature, estimating model parameters also provides a basis for counterfactual analysis. For example, in our setting, the identified set for $\gamma$ would allow us to analyze what search behavior may arise under alternative search environments, such as changes in search costs, product quality, or the set of available alternatives, in a way that is robust to the underlying information structure. 

% This is a closed-form characterization: symmetric case
Theorem \ref{thm: feasible search behavior} shows that $V^{\*}(\gamma)$ is a polytope, which implies that the identified set $I(v; \Gamma)$ takes a simple form. 
In particular, it is characterized by finitely many inequality constraints.
This is clearer in a simpler setup, as illustrated in the next result.   
Let $\Gamma^{*}\subset \Gamma$ be the set of all symmetric parameters such that $\overline{x}_{i}=\overline{x}$, $\underline{x}_{i}=\underline{x}$, $\lambda_{i}=\lambda$, and $c_{i}=c$ for all $i$. 
For simplicity, we regard $\Gamma^{*}$ as a four-dimensional space and write any element as $\gamma=(\overline{x},\underline{x},\lambda,c)$. 

\begin{proposition}\label{prop: identified set}
    Take any search behavior $v$ such that $v_{1}\geq v_{2}\geq \cdots \geq v_{n}\geq 0$. 
    Then the identified set for the symmetric parameter space $\Gamma^{*}$ is
    \begin{align*}
        I(v; \Gamma^{*}) = 
        \left\{\gamma\in \Gamma^{*} ~\middle|~ \sum_{i=1}^{n}v_{i}\geq 1 ~\text{and}~ \sum_{j\leq k}v_{j}\leq \sum_{j\leq k}p(\gamma)^{j-1} ~\text{for all}~ k\right\}, 
    \end{align*} 
    where for each $\gamma=(\overline{x},\underline{x},\lambda,c)$, we set $p(\gamma)=1-c/(\overline{x}-\underline{z})$. 
\end{proposition}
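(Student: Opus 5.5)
In the symmetric case the plan is to compute the vertices of $V^{*}(\gamma)$ explicitly from Theorem \ref{thm: feasible search behavior}, and then check membership in their convex hull with a support-function (separating-hyperplane) argument. \textbf{Step 1.} Fix $\gamma=(\overline{x},\underline{x},\lambda,c)\in\Gamma^{*}$. Then $\underline{z}_{i}=\mu-c=\underline{z}$ for every $i$. Hence $q_{i}=1-(\underline{z}-\underline{x})/(\underline{z}-\underline{x})=0$ and $p_{i}=p(\gamma)=:p\in[0,1]$. Consequently $\underline{v}_{\pi,1}=1$ (empty product), $\underline{v}_{\pi,j}=0$ for $j\geq 2$, and $\overline{v}_{\pi,j}=p^{j-1}$. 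By Theorem \ref{thm: feasible search behavior}, $V^{*}(\gamma)$ is the convex hull of all coordinate permutations of the vectors
\begin{align*}
w^{k}=(1,p,\dots,p^{k-1},0,\dots,0),\qquad k=1,\dots,n.
\end{align*}
So $v\in V^{*}(\gamma)$ if and only if $r\cdot v\leq h(r):=\max_{u\in V^{*}(\gamma)} r\cdot u$ for every $r\in\mR^{n}$. Since the vertex set is permutation symmetric, the maximum over permutations of $w^{k}$ pairs the largest entries of $r$ with the largest entries of $w^{k}$. Writing $r_{(1)}\geq\cdots\geq r_{(n)}$ for the sorted weights, this gives $h(r)=\max_{1\leq k\leq n}\sum_{j\leq k} r_{(j)}p^{j-1}$.

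\textbf{Step 2 (necessity).} Each vertex satisfies $\sum_{i}u_{i}\geq 1$, because its first coordinate equals $1$ and the rest are nonnegative. Also, for every set $S$ of $k$ coordinates, $\sum_{i\in S}u_{i}\leq\sum_{j\leq k}p^{j-1}$: the $k$ largest entries of a permutation of $w^{m}$ are dominated termwise by $1,p,\dots,p^{k-1}$, using $p\leq 1$. Both families of inequalities are linear, so they hold for every convex combination of vertices. For a sorted $v$, the sum of any $k$ coordinates is at most $\sum_{j\leq k}v_{j}$, so applying the second family to the first $k$ coordinates gives the stated conditions.

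\textbf{Step 3 (sufficiency).} Take a sorted $v\geq 0$ satisfying both conditions and any $r$; we show $r\cdot v\leq h(r)$. By the rearrangement inequality, $r\cdot v\leq\sum_{j}r_{(j)}v_{j}$, since $v$ is already decreasing. There are two cases.
\begin{itemize}
\item[$\square$] If $r_{(1)}<0$, then $\sum_{j}r_{(j)}v_{j}\leq r_{(1)}\sum_{j}v_{j}\leq r_{(1)}=h(r)$, using $\sum_{j}v_{j}\geq 1$ and taking $k=1$ in $h$.
\item[$\square$] Otherwise, let $m$ be the last index with $r_{(m)}\geq 0$. Drop the nonpositive terms $r_{(j)}v_{j}$ for $j>m$. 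Write $V_{k}=\sum_{j\leq k}v_{j}$ and $P_{k}=\sum_{j\leq k}p^{j-1}$. Abel summation gives
\begin{align*}
\sum_{j\leq m} r_{(j)}v_{j}=\sum_{k<m}\bigl(r_{(k)}-r_{(k+1)}\bigr)V_{k}+r_{(m)}V_{m}.
\end{align*}
All coefficients here are nonnegative, so $V_{k}\leq P_{k}$ lets us replace each $V_{k}$ by $P_{k}$. The result is at most $\sum_{j\leq m}r_{(j)}p^{j-1}\leq h(r)$.
\end{itemize}
Thus no hyperplane separates $v$ from the closed convex set $V^{*}(\gamma)$, so $v\in V^{*}(\gamma)$. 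Combining Steps 2 and 3 identifies $I(v;\Gamma^{*})$ as stated.

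The main obstacle is Step 3: showing that the two families of inequalities suffice, which is a majorization-type statement. The case split on the sign of $r_{(1)}$ handles the lower constraint $\sum_{j}v_{j}\geq 1$, and the Abel summation handles the upper constraints. I also need to confirm that the vertex formula applies when $q_{i}=0$, and that Assumption \ref{assum: regularity} gives $p\in[0,1]$.
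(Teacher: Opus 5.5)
Your proof is correct and takes essentially the same route as the paper. Both arguments identify the vertices as permutations of $(1,p,\dots,p^{k-1},0,\dots,0)$, obtain necessity from linear inequalities that hold at the vertices (the paper invokes Proposition \ref{prop: face} for this), and obtain sufficiency by a separating-hyperplane argument with sorted weights, a case split on the sign of the top weight, and Abel summation against the partial-sum constraints. Your direct support-function version is slightly cleaner in two places: it states the rearrangement step explicitly, and it simply drops the negative-weight terms from $r\cdot v$, where the paper uses Proposition \ref{prop: face} with $q_{i}=0$ to set those weights to zero.
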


% This is a closed-form characterization: symmetric case cont'd
Therefore, the identified set in the space of symmetric primitives is characterized by $n+1$  inequalities.\footnote{The latter set of $n$ inequalities is often referred to as a \textit{majorization constraint}.} 
We can check that $p(\gamma)$ is decreasing in $c$, and therefore, the marginal of $I(v; \Gamma^{*})$ over search costs is a closed interval of a form $[0,\overline{c}]$. 
Likewise, $p(\gamma)$ is also decreasing in $\lambda$; hence, the projection of $I(v;\Gamma^{*})$ onto the $(\lambda,c)$-space is described by an upper-bound constraint.

% Remark
It is natural for analysts to observe \textit{choice behavior} as well, namely the probabilities with which each box is ultimately selected. 
In the next section, we extend our analysis, and Theorem \ref{thm: joint distribution} delivers a closed-form characterization of the set of all pairs of search and choice behaviors. 
This result enables us to characterize the identified set consistent with any search–choice pair.
Furthermore, as we discuss later, we can also partially relax the binary-reward assumption by allowing each $x_{i}$ to follow a general distribution $F_{i}$.
Accordingly, we can also extend our framework to the nonparametric identification problem, under which the space $\Gamma$ would consist of pairs of prior distributions and search costs. 

% Beyond the above information-design perspective, Theorem \ref{thm: feasible search behavior} can also be seen as an information-free prediction of the set of equilibrium outcomes. 
% In a job-search context, for example, a potential worker sequentially reviews job postings but receives only noisy signals of match quality.  
% The theorem characterizes the entire set of search behaviors that can emerge in such settings. 

%%%%%%%%%%%%%%%%%%%%%%%%%%%%%%%%%%%%%%%%%%%%%%%%%%%%%%%%%%%%%%%%%%%%%%%%%%%%%%
\subsection{Comparative Statics}\label{subsec: comparative statics} 

% Comparative Statics
Whether we view our analysis as an information-free prediction, an information-design problem, or an identification problem, it is natural to ask how the feasible set varies with the primitives.  
Because Theorem \ref{thm: feasible search behavior} provides all vertices of $V^{*}$ in closed form, we can trace how each vertex responds to changes in the underlying parameters.  
This observation yields the following result.

\begin{proposition}\label{prop: comparative statics}
    If $\underline{z}_{i}=\underline{z}$, the set $V^{*}$ shrinks when $\lambda_{i}$ increases. 
    If $\underline{z}_{i}<\underline{z}$, then $V^{*}$ expands when $\lambda_{i}$ increases. 
    In the symmetric case where $\lambda_{i}=\lambda$, $\overline{x}_{i}=\overline{x}$, $\underline{x}_{i}=\underline{x}$, and $c_{i}=c$ for all $i$, $V^{*}$ shrinks when $\lambda$ increases or $c$ increases. 
\end{proposition}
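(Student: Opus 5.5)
The plan is to reduce every claim to a single monotonicity lemma about the closed-form vertices in Theorem \ref{thm: feasible search behavior}. By that theorem, $V^{*}$ is the convex hull of finitely many points. Each point is indexed by an order $\pi$ and a threshold $i$, and its coordinates are products of the $p_{j}$'s (up to the threshold) and of the $q_{j}$'s (after it). Hence $V^{*}$ depends on the primitives only through the vectors $p=(p_{j})_{j}$ and $q=(q_{j})_{j}$; write $V^{*}(p,q)$. I would prove:

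\emph{Lemma.} If $p'\leq p$ and $q'\geq q$ componentwise (all entries in $[0,1]$), then $V^{*}(p',q')\subseteq V^{*}(p,q)$.

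Both sets are compact and convex, so it suffices to compare support functions: show $h'(r)=\max_{v\in V^{*}(p',q')}r\cdot v\leq h(r)=\max_{v\in V^{*}(p,q)}r\cdot v$ for every $r\neq 0$.

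For the lemma, fix $r$. By Proposition \ref{prop: face} applied to the primed parameters, $h'(r)$ is attained at a vertex indexed by some $(\pi,i)\in\Pi'_{r}$. The defining inequalities of $\Pi'_{r}$ force $r_{\pi(j)}\geq 0$ for $j\leq i$ and $r_{\pi(j)}<0$ for $j>i$, because the positive scaling factors do not affect signs. Then
\begin{align*}
h'(r)=\sum_{j\leq i} r_{\pi(j)}\prod_{k<j}p'_{\pi(k)}+\sum_{j>i} r_{\pi(j)}\prod_{k<j}q'_{\pi(k)}.
\end{align*}
Replacing $p'$ by $p$ weakly raises the first sum, since the weights are nonnegative and the products are monotone. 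Replacing $q'$ by $q$ weakly raises the second sum, since the weights are negative and the products shrink. The resulting expression is $r\cdot v$ for the vertex of $V^{*}(p,q)$ with the same $(\pi,i)$, so it is at most $h(r)$. The only point needing care is that the sign pattern comes from the primed face but is evaluated on the unprimed vertex. This is harmless: any $(\pi,i)$ yields a point of $V^{*}(p,q)$, whether or not it is optimal there.

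Next I would compute how $p$ and $q$ move with $\lambda_{i}$. Only $\mu_{i}$, and hence $\underline{z}_{i}=\mu_{i}-c_{i}$, depends on $\lambda_{i}$, and it is strictly increasing in $\lambda_{i}$.

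If $\underline{z}_{i}<\underline{z}$, then for small increases $\underline{z}=\max_{j}\underline{z}_{j}$ is unchanged. All $p_{j}$ are unchanged, as are $q_{j}$ for $j\neq i$. Meanwhile $q_{i}=1-(\underline{z}_{i}-\underline{x}_{i})/(\underline{z}-\underline{x}_{i})$ falls. The lemma, applied with old and new roles swapped, gives expansion.

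If $\underline{z}_{i}=\underline{z}$, raising $\lambda_{i}$ raises $\underline{z}$ one-for-one. Each $p_{j}=1-c_{j}/(\overline{x}_{j}-\underline{z})$ falls. Each $q_{j}$ with $j\neq i$ rises, because $(\underline{z}_{j}-\underline{x}_{j})/(\underline{z}-\underline{x}_{j})$ falls. Finally $q_{i}=0$ stays fixed. The lemma then gives shrinkage.

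In the symmetric case, $\underline{z}_{j}=\underline{z}=\mu-c$, so every $q_{j}=0$ identically and only $p=1-c/(\overline{x}-\mu+c)$ matters. I would check that $c/(\overline{x}-\mu+c)$ is increasing in $c$ and in $\mu$, hence in $\lambda$, so $p$ falls in both cases. The lemma then yields shrinkage.

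I expect the main obstacle to be the lemma, specifically justifying the sign pattern of the maximizing vertex via Proposition \ref{prop: face}. A secondary obstacle is making sure that Assumption \ref{assum: regularity} continues to hold after the parameter change, so that Theorem \ref{thm: feasible search behavior} applies to both parameter values. For the expansion case I would state the result for increases small enough that $\underline{z}_{i}$ stays below $\underline{z}$.
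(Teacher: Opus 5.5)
Your proof is correct, but it turns parameter monotonicity into set inclusion by a different route from the paper's.

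\textbf{Where you agree with the paper.} The first half matches. The paper also records three facts. When $\underline{z}_{i}=\underline{z}$, raising $\lambda_{i}$ lowers every $p_{j}$ and raises every $q_{j}$. When $\underline{z}_{i}<\underline{z}$, it moves only $q_{i}$, downward. In the symmetric case, $q_{j}\equiv 0$ while $p$ falls in both $\lambda$ and $c$.

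\textbf{The paper's route for the second half.} The paper works with the inequality description $V^{*}=\mathrm{conv}(\bigcup_{\pi}V_{\pi})$ from Appendix \ref{sec:appendixa}. It asserts that each $V_{\pi}$ expands because its defining constraints are relaxed, namely $v_{\pi(k)}\geq\underline{v}_{\pi,k}$ and $\sum_{j\leq k}(1-p_{\pi(j)})v_{\pi(j)}\leq 1-\prod_{j\leq k}p_{\pi(j)}$. The lower bounds are immediate. In the partial-sum constraints, however, the weights $c_{\pi(j)}/(\overline{x}_{\pi(j)}-\underline{z})=1-p_{\pi(j)}$ move together with the right-hand side. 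So the relaxation is true but not term-by-term obvious, and the paper does not spell it out. One can check it by noting that each constraint is affine in each $p_{m}$ and still holds at $p_{m}=1$; this uses that $v$ is decreasing along $\pi$ together with the earlier constraints.

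\textbf{What your route buys.} Your vertex-plus-support-function argument sidesteps this issue entirely. The vertex coordinates are explicit products that are monotone in $(p,q)$. Proposition \ref{prop: face} delivers exactly the sign pattern needed to compare them. Membership of every $(\pi,i)$-point in $V^{*}(p,q)$ follows from Theorem \ref{thm: feasible search behavior}, or from the strategy in Lemma \ref{lem: sufficient condition}, which works for any threshold.

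\textbf{What the paper's route buys.} Once completed, it gives the finer fact that each piece $V_{\pi}$ is itself monotone. That is not needed for the statement.

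\textbf{Domain.} You are also more careful than the paper here. Assumption \ref{assum: regularity} must hold at both parameter values for Theorem \ref{thm: feasible search behavior} and Proposition \ref{prop: face} to apply. The expansion claim should be read locally, for increases that keep $\underline{z}_{i}\leq\underline{z}$.
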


% Note
Proposition \ref{prop: comparative statics} summarizes how the set of feasible search behaviors responds to each primitive parameter.
The effect of increasing $\lambda_{i}$ on the set of all feasible search behaviors depends on the relative ex-ante attractiveness of box $i$. 
The effect of increasing $c_{i}$ is also ambiguous. 
However, it turns out that the set is monotone in the symmetric primitives. 
Figure \ref{fig: symmetry case} illustrates this point. 

%\vspace{1em}
\begin{figure}[ht]
\begin{minipage}[t]{.45\textwidth}
\begin{center}
\begin{tikzpicture}[scale=0.7, transform shape]

    % Define such that underline{z}=underline{z}_{Z}
    \pgfmathsetmacro{\muX}{0.8}
    \pgfmathsetmacro{\muY}{0.8}
    \pgfmathsetmacro{\muZ}{0.8}
    \pgfmathsetmacro{\cX}{0.2}
    \pgfmathsetmacro{\cY}{0.2}
    \pgfmathsetmacro{\cZ}{0.2}

    \pgfmathsetmacro{\zmin}{max(\muX-\cX,\muY-\cY,\muZ-\cZ)}

    \pgfmathsetmacro{\pX}{1-\cX/(1-\zmin)}
    \pgfmathsetmacro{\pY}{1-\cY/(1-\zmin)}
    \pgfmathsetmacro{\pZ}{1-\cZ/(1-\zmin)}
    \pgfmathsetmacro{\qX}{1-(\muX-\cX)/\zmin}
    \pgfmathsetmacro{\qY}{1-(\muY-\cY)/\zmin}
    \pgfmathsetmacro{\qZ}{1-(\muZ-\cZ)/\zmin}

    \coordinate (O) at (0,0,0);
    \coordinate (X) at (4,0,0); 
    \coordinate (Y) at (0,4,0); 
    \coordinate (Z) at (0,0,4); 

    \coordinate (XY) at (4,4,0);
    \coordinate (YZ) at (0,4,4);
    \coordinate (XZ) at (4,0,4);
    \coordinate (XYZ) at (4,4,4);

    % Cube
    \draw (O) -- (X) -- (XY) -- (Y) -- cycle;
    \draw (O) -- (Z);
    \draw (X) -- (XZ);
    \draw (Y) -- (YZ);
    \draw (XY) -- (XYZ);
    \draw (Z) -- (XZ) -- (XYZ) -- (YZ) -- cycle; 

    % Extreme Points
    \coordinate (XYZ0) at (4,\pX*4,\pX*\pY*4); \filldraw (XYZ0) circle (2pt);
    \coordinate (XYZ1) at (4,\pX*4,\qX*\qY*4); \filldraw (XYZ1) circle (2pt);
    \coordinate (XYZ2) at (4,\qX*4,\qX*\qY*4); \filldraw (XYZ2) circle (2pt);
    \coordinate (XZY0) at (4,\pX*\pZ*4,\pX*4); \filldraw (XZY0) circle (2pt);
    \coordinate (XZY1) at (4,\qX*\qZ*4,\pX*4); \filldraw (XZY1) circle (2pt);
    \coordinate (XZY2) at (4,\qX*\qZ*4,\qX*4); \filldraw (XZY2) circle (2pt);
    \coordinate (YXZ0) at (\pY*4,4,\pY*\pX*4); \filldraw (YXZ0) circle (2pt);
    \coordinate (YXZ1) at (\pY*4,4,\qY*\qX*4); \filldraw (YXZ1) circle (2pt);
    \coordinate (YXZ2) at (\qY*4,4,\qY*\qX*4); \filldraw (YXZ2) circle (2pt);
    \coordinate (YZX0) at (\pY*\pZ*4,4,\pY*4); \filldraw (YZX0) circle (2pt);
    \coordinate (YZX1) at (\qY*\qZ*4,4,\pY*4); \filldraw (YZX1) circle (2pt);
    \coordinate (YZX2) at (\qY*\qZ*4,4,\qY*4); \filldraw (YZX2) circle (2pt);
    \coordinate (ZXY0) at (\pZ*4,\pX*\pZ*4,4); \filldraw (ZXY0) circle (2pt);
    \coordinate (ZXY1) at (\pZ*4,\qX*\qZ*4,4); \filldraw (ZXY1) circle (2pt);
    \coordinate (ZXY2) at (\qZ*4,\qX*\qZ*4,4); \filldraw (ZXY2) circle (2pt);
    \coordinate (ZYX0) at (\pY*\pZ*4,\pZ*4,4); \filldraw (ZYX0) circle (2pt);
    \coordinate (ZYX1) at (\qY*\qZ*4,\pZ*4,4); \filldraw (ZYX1) circle (2pt);
    \coordinate (ZYX2) at (\qY*\qZ*4,\qZ*4,4); \filldraw (ZYX2) circle (2pt);

    % Feasible Search Behaviors
    \filldraw[fill=blue, opacity=0.4] (XYZ0) -- (YXZ0) -- (YZX0) -- (ZYX0) -- (ZXY0) -- (XZY0) -- cycle; 
    \filldraw[fill=blue, opacity=0.0] (XYZ1) -- (YXZ1) -- (YZX1) -- (ZYX1) -- (ZXY1) -- (XZY1) -- cycle; 
    \filldraw[fill=gray, opacity=0.1] (XYZ2) -- (YXZ2) -- (YZX2) -- (ZYX2) -- (ZXY2) -- (XZY2) -- cycle; 
    
    \filldraw[fill=blue, opacity=0.25] (XYZ0) -- (YXZ0) -- (YXZ1) -- (XYZ1) -- cycle;
    \filldraw[fill=blue, opacity=0.3] (YXZ0) -- (YZX0) -- (YZX1) -- (YZX2) -- (YXZ2) -- (YXZ1) -- cycle;
    \filldraw[fill=blue, opacity=0.45] (YZX0) -- (ZYX0) -- (ZYX1) -- (YZX1) -- cycle;
    \filldraw[fill=blue, opacity=0.5] (ZYX0) -- (ZXY0) -- (ZXY1) -- (ZXY2) -- (ZYX2) -- (ZYX1) -- cycle;
    \filldraw[fill=blue, opacity=0.375] (ZXY0) -- (XZY0) -- (XZY1) -- (ZXY1) -- cycle;
    \filldraw[fill=blue, opacity=0.3] (XZY0) -- (XYZ0) -- (XYZ1) -- (XYZ2) -- (XZY2) -- (XZY1) -- cycle;
    \filldraw[fill=blue, opacity=0] (XYZ1) -- (YXZ1) -- (YXZ2) -- (XYZ2) -- cycle;
    \filldraw[fill=blue, opacity=0] (YZX1) -- (ZYX1) -- (ZYX2) -- (YZX2) -- cycle;
    \filldraw[fill=blue, opacity=0] (ZXY1) -- (XZY1) -- (XZY2) -- (ZXY2) -- cycle;

    % Coordinates
    \draw[->][dashed] (0,0,0) -- (5.5,0,0) node[right]{$v_{1}$};
    \draw[->][dashed] (0,0,0) -- (0,5.5,0) node[left]{$v_{2}$};
    \draw[->][dashed] (0,0,0) -- (0,0,5.5) node[left]{$v_{3}$};
\end{tikzpicture}
\caption*{$\lambda=0.8$ and $c=0.2$.}
\end{center}
\end{minipage}
\hfill
\begin{minipage}[t]{.45\textwidth}
\begin{center}
\begin{tikzpicture}[scale=0.7, transform shape]

    % Define such that underline{z}=underline{z}_{Z}
    \pgfmathsetmacro{\muX}{0.7}
    \pgfmathsetmacro{\muY}{0.7}
    \pgfmathsetmacro{\muZ}{0.7}
    \pgfmathsetmacro{\cX}{0.1}
    \pgfmathsetmacro{\cY}{0.1}
    \pgfmathsetmacro{\cZ}{0.1}

    \pgfmathsetmacro{\zmin}{max(\muX-\cX,\muY-\cY,\muY-\cY)}

    \pgfmathsetmacro{\pX}{1-\cX/(1-\zmin)}
    \pgfmathsetmacro{\pY}{1-\cY/(1-\zmin)}
    \pgfmathsetmacro{\pZ}{1-\cZ/(1-\zmin)}
    \pgfmathsetmacro{\qX}{1-(\muX-\cX)/\zmin}
    \pgfmathsetmacro{\qY}{1-(\muY-\cY)/\zmin}
    \pgfmathsetmacro{\qZ}{1-(\muZ-\cZ)/\zmin}

    \coordinate (O) at (0,0,0);
    \coordinate (X) at (4,0,0); 
    \coordinate (Y) at (0,4,0); 
    \coordinate (Z) at (0,0,4); 

    \coordinate (XY) at (4,4,0);
    \coordinate (YZ) at (0,4,4);
    \coordinate (XZ) at (4,0,4);
    \coordinate (XYZ) at (4,4,4);

    % Cube
    \draw (O) -- (X) -- (XY) -- (Y) -- cycle;
    \draw (O) -- (Z);
    \draw (X) -- (XZ);
    \draw (Y) -- (YZ);
    \draw (XY) -- (XYZ);
    \draw (Z) -- (XZ) -- (XYZ) -- (YZ) -- cycle; 

    % Extreme Points
    \coordinate (XYZ0) at (4,\pX*4,\pX*\pY*4); \filldraw (XYZ0) circle (2pt);
    \coordinate (XYZ1) at (4,\pX*4,\qX*\qY*4); \filldraw (XYZ1) circle (2pt);
    \coordinate (XYZ2) at (4,\qX*4,\qX*\qY*4); \filldraw (XYZ2) circle (2pt);
    \coordinate (XZY0) at (4,\pX*\pZ*4,\pX*4); \filldraw (XZY0) circle (2pt);
    \coordinate (XZY1) at (4,\qX*\qZ*4,\pX*4); \filldraw (XZY1) circle (2pt);
    \coordinate (XZY2) at (4,\qX*\qZ*4,\qX*4); \filldraw (XZY2) circle (2pt);
    \coordinate (YXZ0) at (\pY*4,4,\pY*\pX*4); \filldraw (YXZ0) circle (2pt);
    \coordinate (YXZ1) at (\pY*4,4,\qY*\qX*4); \filldraw (YXZ1) circle (2pt);
    \coordinate (YXZ2) at (\qY*4,4,\qY*\qX*4); \filldraw (YXZ2) circle (2pt);
    \coordinate (YZX0) at (\pY*\pZ*4,4,\pY*4); \filldraw (YZX0) circle (2pt);
    \coordinate (YZX1) at (\qY*\qZ*4,4,\pY*4); \filldraw (YZX1) circle (2pt);
    \coordinate (YZX2) at (\qY*\qZ*4,4,\qY*4); \filldraw (YZX2) circle (2pt);
    \coordinate (ZXY0) at (\pZ*4,\pX*\pZ*4,4); \filldraw (ZXY0) circle (2pt);
    \coordinate (ZXY1) at (\pZ*4,\qX*\qZ*4,4); \filldraw (ZXY1) circle (2pt);
    \coordinate (ZXY2) at (\qZ*4,\qX*\qZ*4,4); \filldraw (ZXY2) circle (2pt);
    \coordinate (ZYX0) at (\pY*\pZ*4,\pZ*4,4); \filldraw (ZYX0) circle (2pt);
    \coordinate (ZYX1) at (\qY*\qZ*4,\pZ*4,4); \filldraw (ZYX1) circle (2pt);
    \coordinate (ZYX2) at (\qY*\qZ*4,\qZ*4,4); \filldraw (ZYX2) circle (2pt);

    % Feasible Search Behaviors
    \filldraw[fill=blue, opacity=0.4] (XYZ0) -- (YXZ0) -- (YZX0) -- (ZYX0) -- (ZXY0) -- (XZY0) -- cycle; 
    \filldraw[fill=blue, opacity=0.0] (XYZ1) -- (YXZ1) -- (YZX1) -- (ZYX1) -- (ZXY1) -- (XZY1) -- cycle; 
    \filldraw[fill=gray, opacity=0.1] (XYZ2) -- (YXZ2) -- (YZX2) -- (ZYX2) -- (ZXY2) -- (XZY2) -- cycle; 
    
    \filldraw[fill=blue, opacity=0.25] (XYZ0) -- (YXZ0) -- (YXZ1) -- (XYZ1) -- cycle;
    \filldraw[fill=blue, opacity=0.3] (YXZ0) -- (YZX0) -- (YZX1) -- (YZX2) -- (YXZ2) -- (YXZ1) -- cycle;
    \filldraw[fill=blue, opacity=0.45] (YZX0) -- (ZYX0) -- (ZYX1) -- (YZX1) -- cycle;
    \filldraw[fill=blue, opacity=0.5] (ZYX0) -- (ZXY0) -- (ZXY1) -- (ZXY2) -- (ZYX2) -- (ZYX1) -- cycle;
    \filldraw[fill=blue, opacity=0.375] (ZXY0) -- (XZY0) -- (XZY1) -- (ZXY1) -- cycle;
    \filldraw[fill=blue, opacity=0.3] (XZY0) -- (XYZ0) -- (XYZ1) -- (XYZ2) -- (XZY2) -- (XZY1) -- cycle;
    \filldraw[fill=blue, opacity=0] (XYZ1) -- (YXZ1) -- (YXZ2) -- (XYZ2) -- cycle;
    \filldraw[fill=blue, opacity=0] (YZX1) -- (ZYX1) -- (ZYX2) -- (YZX2) -- cycle;
    \filldraw[fill=blue, opacity=0] (ZXY1) -- (XZY1) -- (XZY2) -- (ZXY2) -- cycle;

    % Coordinates
    \draw[->][dashed] (0,0,0) -- (5.5,0,0) node[right]{$v_{1}$};
    \draw[->][dashed] (0,0,0) -- (0,5.5,0) node[left]{$v_{2}$};
    \draw[->][dashed] (0,0,0) -- (0,0,5.5) node[left]{$v_{3}$};
\end{tikzpicture}
\caption*{$\lambda=0.7$ and $c=0.1$.}
\end{center}
\end{minipage}
\vspace{1em}
\caption{The set of feasible search behaviors under symmetry.}
\label{fig: symmetry case}
\end{figure}

% Intuition
A rough intuition is as follows:
As $\lambda_{i}$ increases, Pandora is more likely to sample a high value and stop searching.  
Hence, intuitively, the maximum search length $r\cdot v$ weighted by $r\geq 0$ decreases.  
Moreover, if the primitives are symmetric, one can see that the minimum value of $r\cdot v$ equals $\min_{i}r_{i}$, which does not depend on the primitives.\footnote{Since there exists no outside option, $\sum_{i=1}^{n}v_{i}\geq 1$ for any feasible search behavior $v$. 
Hence, we obtain $r\cdot v\geq \min_{i}r_{i}$. 
This lower bound is achieved by a CDF profile that is degenerate at prior mean, if the primitives are symmetric.} 
Therefore, the range of $r \cdot v$ becomes smaller, which suggests that $V^{*}$ shrinks.  
However, if the primitives are asymmetric, the lower bound for $r \cdot v$ may depend on the primitive values and can decrease, in which case $V^{*}$ may neither shrink nor expand. 

%%%%%%%%%%%%%%%%%%%%%%%%%%%%%%%%%%%%%%%%%%%%%%%%%%%%%%%%%%%%%
\section{Extensions}\label{sec: general state space}
%%%%%%%%%%%%%%%%%%%%%%%%%%%%%%%%%%%%%%%%%%%%%%%%%%%%%%%%%%%%% 

% Remark
In this section, we extend our main analysis in two directions.
Throughout, we will impose a common symmetry assumption on the primitive parameters while still considering the full space of feasible information structures.
Under this symmetry, we can dispense with the earlier implicit assumption that true rewards take only binary values, as discussed below. 

%%%%%%%%%%%%%%%%%%%%%%%%%%%%%%%%%%%%%%%%%%%%%%%%%%%%%%%%%%%%%
\subsection{Feasible Search and Choice Behaviors}\label{subsec: general state space}
%%%%%%%%%%%%%%%%%%%%%%%%%%%%%%%%%%%%%%%%%%%%%%%%%%%%%%%%%%%%%

% Introduction
In the previous section, we defined search behavior as the vector of probabilities with which each box is opened. 
However, this definition does not fully characterize the realizations of Pandora's actions, because she must also decide which box to select in the end.
Here, we examine the extent to which our analysis can be extended by also considering Pandora's eventual selections of boxes. 

% Definition: choice
A \textit{choice behavior} is a vector $d\in\mR^{n}$ that describes each probability $d_{i}$ that each box $i$ is eventually chosen. 
Call a choice behavior $d$ \textit{feasible} if a feasible distribution profile and an optimal strategy induce it.
A pair $(v,d)\in \mR^{n}\times\mR^{n}$ is \textit{feasible} if a feasible distribution profile and an optimal strategy induce $v$ as the search behavior and $d$ as the choice behavior. 
We let $S^{*}\subset \mR^{n}\times \mR^{n}$ denote the set of all feasible search and choice behaviors. 

% Main result
The next result shows that, if the primitives are to some extent symmetric and $\underline{z}_{i}=\mu_{i}-c_{i}$ is homogeneous across boxes, then we can generalize Theorem \ref{thm: feasible search behavior} and characterize the set of all feasible pairs of search and choice behaviors. 
This assumption is satisfied if $\mu_{i}=\mu$ and $c_{i}=c$ for all boxes $i$. 

% ===========================================
\begin{theorem}\label{thm: joint distribution}
    Suppose $\underline{z}_{i}=\underline{z}$ for all $i$.
    The set of all feasible pairs $S^{*}$ is a polytope. 
    There exists a single distribution profile $F^{*}$ that induces all feasible pairs, under which the reservation value of every box equals $\underline{z}$ and Pandora's expected payoff is minimized among all feasible distribution profiles, which also equals $\underline{z}$.
\end{theorem}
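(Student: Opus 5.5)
The strategy follows the three-step structure behind Theorem \ref{thm: feasible search behavior}: first bound $S^{*}$ by a polytope $P$, then show that a single profile $F^{*}$ rationalizes every vertex of $P$. Since feasibility under a fixed profile is preserved by randomizing over optimal strategies, $F^{*}$ then rationalizes all of $P$, which forces $S^{*}=P$ and makes $S^{*}$ convex and a polytope.

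For the candidate $F^{*}$, I would keep the three-point structure of $\overline{F}$ but adjust the atom weights to control choice as well as search. Under $\underline{z}_{i}=\underline{z}$, each $F^{*}_{i}$ is supported on $\{\underline{x}_{i},\underline{z},\overline{x}_{i}\}$, with weights chosen so that:
\begin{itemize}
\item the mean is $\mu_{i}$;
\item the reservation value equals $\underline{z}$, i.e.\ $(\overline{x}_{i}-\underline{z})\Pr(\overline{x}_{i})=c_{i}$;
\item the mass at $\underline{z}$ is $p_{i}-q_{i}$.
\end{itemize}
Because $\underline{z}_{i}=\underline{z}$, we get $q_{i}=0$, so box $i$ essentially yields $\overline{x}_{i}$ with probability $1-p_{i}$ and $\underline{z}$ otherwise. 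The mass at $\underline{x}_{i}$ then vanishes, and the mean condition holds automatically from $\mu_{i}-c_{i}=\underline{z}$.

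Welfare minimization follows as in Theorem \ref{thm: feasible search behavior} via Lemma \ref{lem: welfare min}: every reservation value equals $\underline{z}$, so Pandora's payoff is $\underline{z}$, which is the no-information payoff. Under $F^{*}$, Pandora is indifferent at every node among:
\begin{itemize}
\item any order of the remaining boxes;
\item stopping or continuing whenever the best sample equals $\underline{z}$;
\item which $\underline{z}$-valued box to select, since all ties are at $\underline{z}$.
\end{itemize}
Any box showing $\overline{x}_{i}>\underline{z}$ must be stopped on and chosen. The rationalizable pairs $(v,d)$ are therefore generated by tie-breaking rules (order, stop times, choice among opened $\underline{z}$ boxes), and randomizing over these rules gives the convex hull of the finitely many pure-rule outcomes. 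That hull is a polytope, call it $P$.

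The main step, and the obstacle, is showing $S^{*}\subseteq P$. I would take an arbitrary weight $(r,s)\in\mR^{n}\times\mR^{n}$ and show that $\max_{S^{*}} r\cdot v+s\cdot d\le \max_{P} r\cdot v+s\cdot d$; the separating hyperplane theorem then gives the inclusion. Within a fixed order, $v_{\pi(k)}$ is the probability that every earlier box falls below the current threshold, and $d_{\pi(k)}$ is the probability that box $\pi(k)$ is opened and holds the maximal value at stopping. Using the \citet{armstrong2017ordered}/\citet{choi2019optimal} representation, choice is equivalent to a discrete choice over $\min(z_{i},y_{i})$ variables. For a general profile $F$, I would bound the objective by the approach in Lemma \ref{lem: joint monotone}:
\begin{itemize}
\item first, move each reservation value toward $\underline{z}$ while jointly re-optimizing the order, showing this weakly increases the objective;
\item then, for boxes with reservation value $\underline{z}$, show that pushing mass of $F_{i}$ to $\{\underline{z},\overline{x}_{i}\}$ is optimal. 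The objective is linear in $F_{i}$ given the others and the constraints are the mean and reservation value (two linear constraints), so an extreme point has at most three support points, and among these the $F^{*}_{i}$ shape dominates.
\end{itemize}

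The difficulty is handling weights $s_{i}<0$ or $r_{i}<0$, where the monotonicity argument of Lemma \ref{lem: joint monotone} may reverse. My first attempt would be to exploit symmetry of $\underline{z}_{i}$: any box with $z_{i}<\underline{z}$ behaves as if never opened, so for such boxes the objective can be bounded by strategies available under $F^{*}$ that simply stop early, which covers the negative-weight directions.
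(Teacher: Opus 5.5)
Your $F^{*}$ (mass $p_{i}$ at $\underline{z}$ and $1-p_{i}$ at $\overline{x}_{i}$, since $q_{i}=0$) is correct, and so is the welfare claim. So is your observation that the set $P$ of pairs induced by $F^{*}$ is the convex hull of finitely many pure tie-breaking outcomes. The gap is the inclusion $S^{*}\subseteq P$, which is the real content of the theorem. You reduce it to a support-function comparison for every $(r,s)$ but never carry that comparison out:
\begin{itemize}
\item The step ``move each reservation value toward $\underline{z}$ while re-optimizing the order; this weakly raises $r\cdot v+s\cdot d$'' is asserted, not shown. Lemma~\ref{lem: joint monotone} does not supply it: that lemma raises reservation values lying \emph{below} $\underline{z}$, with nonnegative search weights and no choice weights.
\item The claim that among three-point extreme distributions ``the $F^{*}_{i}$ shape dominates'' is exactly what would need proof.
\item Your fix for negative weights is vacuous. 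Under $\underline{z}_{i}=\underline{z}$, every feasible $F_{i}$ has $z_{i}\geq\underline{z}_{i}=\underline{z}$, so there are no boxes with $z_{i}<\underline{z}$ to exploit, and those directions stay uncovered.
\end{itemize}

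That last observation is also the missing idea, and it makes direction-by-direction optimization unnecessary. Since $z_{k}\geq\underline{z}$, the argument of Lemma~\ref{lem: point max} gives $F_{k}(z_{k})\leq 1-c_{k}/(\overline{x}_{k}-z_{k})\leq p_{k}$ for every feasible $F_{k}$. Hence, for any profile $F$ and any pure optimal strategy opening boxes in the order $1,\dots,n$, you get direct bounds:
\begin{itemize}
\item $v_{k+1}\leq F_{k}(z_{k+1})\,v_{k}\leq p_{k}v_{k}$;
\item $d_{k}\geq (1-F_{k}(z_{k}))\,v_{k}\geq(1-p_{k})v_{k}$, because a draw above $z_{k}$ beats all earlier draws (each at most $z_{k}$) and all later reservation values, so box $k$ is chosen;
\item $\sum_{k\leq j}d_{k}\geq 1-v_{j+1}$.
\end{itemize}
The paper then shows directly that $F^{*}$ realizes every $(v,d)$ satisfying these inequalities. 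Stop and choose $k$ after drawing $\overline{x}_{k}$. After drawing $\underline{z}$, continue with probability $v_{k+1}/(p_{k}v_{k})$. On the all-$\underline{z}$ stopping event after box $k$ (mass $p_{k}v_{k}-v_{k+1}$), choose among opened boxes so that each box $i$ receives the residual mass $d_{i}-(1-p_{i})v_{i}\geq 0$. This triangular allocation exists precisely because of the cumulative inequality. The construction yields $S_{\pi}\subseteq P$ for every order $\pi$, where $S_{\pi}$ is the set of pairs from pure strategies with order $\pi$. Since mixing only produces convex combinations and $P$ is convex, $S^{*}=P$, with no separating hyperplane and no sign issues.
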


\begin{remark}\label{rem: general prior}
    Thus far, we assume that the prior distributions of true rewards $x_{i}$ are binary, i.e., $x_{i}\in \{\underline{x}_{i}, \overline{x}_{i}\}$ for every box $i$. 
    Theorem \ref{thm: joint distribution} no longer needs this assumption. 
    In Appendix \ref{sec:appendixc}, we prove Theorem \ref{thm: joint distribution} under a general assumption that the true reward $x_{i}$ follows any distribution $F_{i}^{0}$ over the interval $[\underline{x}_{i},\overline{x}_{i}]$. %\footnote{We are grateful to an anonymous referee for the subsumed paper for suggesting this extension.} 
    If each $F_{i}^{0}$ has mean $\mu_{i}$ and $\underline{z}_{i}=\mu_{i}-c_{i}$ is homogeneous across boxes, then the statement of Theorem \ref{thm: joint distribution} does not change at all.
\end{remark}

% Discussion: implciation 1
The main implications mirror those of Theorem \ref{thm: feasible search behavior}. 
For example, paralleling Proposition \ref{prop: optimal}, the theorem implies that an information design intended to implement a specific distribution of search and choice behavior may still minimize the welfare of searching agents. 
For instance, consider online platforms that derive revenue from both user visits and sales commissions. 
The expected revenue of such platforms is a weighted sum of search and choice outcomes. 
Then, Theorem \ref{thm: joint distribution} suggests that the welfare implication of Proposition \ref{prop: optimal} remains intact in this richer setting.

% \blue{
% % Discussion: implciation 2
% Recall also that Theorem \ref{thm: feasible search behavior} implies that observed search behavior alone does not yield a meaningful identification set for the welfare of searching agents. 
% In particular, since a welfare-minimizing distribution profile is compatible with any feasible search behavior, the lower bound on Pandora's expected payoff is independent of search behavior. 
% Although most studies may impose symmetric priors, Theorem \ref{thm: joint distribution} then delivers an even stronger result: meaningful welfare bounds remain elusive even when we additionally observe choice behavior. 
% } 

% Vertex
As in Theorem \ref{thm: feasible search behavior}, we can also characterize the vertexes of the set of all feasible pairs $S^{*}$, which is summarized in the following proposition.  
Moreover, analogous to Proposition \ref{prop: optimal}, we can fully implement each vertex.\footnote{A pair $(v,d)$ is \textit{fully implementable} if, for any $\delta>0$, there exists a feasible CDF profile $F$ such that for any Pandora's optimal strategy under $F$, the induced pair $(v',d')$ satisfies $\max_{i}|v_{i}-v'_{i}|<\delta$ and $\max_{i}|d_{i}-d'_{i}|<\delta$.} 
In Appendix \ref{sec:appendixc}, we prove its generalization under general priors. 

% ===========================================
\begin{proposition}\label{prop: joint distribution vertex}
    Suppose $\underline{z}_{i}=\underline{z}$ for all $i$.
    Take any $(v,d)\in S^{*}$ such that $v_{1}\geq v_{2}\geq\cdots \geq v_{n}$. 
    Then, $(v,d)$ is a vertex of $S^{*}$ if and only if, for some $i$ and $j\leq i$, 
    \begin{align*}
        v_{k}
        &= 
        \begin{cases}
            \overline{v}_{k} \quad & \text{if} \quad k\leq i,\\
            0 \quad & \text{otherwise},
        \end{cases}\\
        d_{k}
        &= 
        \begin{cases}
            \overline{v}_{k}-\overline{v}_{k+1} \quad & \text{if} \quad k\leq i \text{ and } k\neq j,\\
            \overline{v}_{k}-\overline{v}_{k+1}+p_{i}\cdot \overline{v}_{i} \quad & \text{if} \quad k=j, \\
            0 \quad & \text{otherwise},
        \end{cases}
    \end{align*}
    for each $k$. 
    Moreover, any vertex is fully implementable. 
\end{proposition}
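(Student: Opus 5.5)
By Theorem \ref{thm: joint distribution}, $S^{*}$ is the set of pairs induced by the single profile $F^{*}$ under all optimal strategies. The plan is to identify $F^{*}$ with $\overline{F}$ from Theorem \ref{thm: feasible search behavior}, enumerate the deterministic optimal strategies under it, compute their outcomes, and show that these outcomes are exactly the claimed vertexes. Full implementability then comes from a perturbation argument in the style of Proposition \ref{prop: optimal}(ii).

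\emph{Step 1: the structure of $\overline{F}$.} When $\underline{z}_{i}=\underline{z}$ for all $i$, we get $q_{i}=1-(\underline{z}_{i}-\underline{x}_{i})/(\underline{z}-\underline{x}_{i})=0$. Hence $\overline{F}_{i}$ has no mass below $\underline{z}$: it puts mass $p_{i}$ on $\underline{z}$ and $1-p_{i}$ on $\overline{x}_{i}$, and every reservation value equals $\underline{z}$. I will take $F^{*}=\overline{F}$ here, which is consistent with Theorem \ref{thm: joint distribution}.

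Under this profile, Pandora's rule says the following.
\begin{itemize}
\item Once a draw $\overline{x}_{k}>\underline{z}$ appears, she stops and chooses that box.
\item As long as all sampled values equal $\underline{z}$, she is indifferent at every stage between continuing (to any unopened box) and stopping, and at a stop she may choose any opened box.
\end{itemize}
Along the all-$\underline{z}$ history there is a unique path, so a deterministic optimal strategy is fully described by three objects:
\begin{itemize}
\item an order $\pi$;
\item a stopping index $i\geq 1$, meaning she stops after $\pi(i)$ if all draws were $\underline{z}$;
\item a choice $j\leq i$ at that stop.
\end{itemize}
Every mixed optimal strategy induces a convex combination of the outcomes of these deterministic strategies. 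Hence $S^{*}$ is the convex hull of finitely many points.

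\emph{Step 2: outcomes of deterministic strategies.} For the identity order, box $k\leq i$ is opened exactly when all earlier draws were $\underline{z}$, so $v_{k}=\prod_{l<k}p_{l}=\overline{v}_{k}$, and $v_{k}=0$ for $k>i$. Box $k\leq i$ is chosen for two reasons:
\begin{itemize}
\item it is opened and yields $\overline{x}_{k}$, which has probability $\overline{v}_{k}(1-p_{k})=\overline{v}_{k}-\overline{v}_{k+1}$;
\item it is the tie-choice $j$, which adds the probability $p_{i}\overline{v}_{i}$ that all $i$ draws equal $\underline{z}$.
\end{itemize}
This is exactly the displayed formula. The hypothesis $v_{1}\geq\cdots\geq v_{n}$ forces the order to be the identity, up to relabelling boxes with equal $p$'s, which produce the same point.

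\emph{Step 3: each generator is a vertex.} The "only if" direction is immediate, since vertexes of a finite convex hull are among its generators. For "if", I will exhibit weights $(r,s)$ whose unique maximizer over the generators is the given point:
\begin{itemize}
\item $r_{k}=M^{\,n-k}$ for $k\leq i$ and $r_{k}=-M^{n}$ for $k>i$, with $M$ large;
\item $s_{j}=\varepsilon$ and $s_{k}=0$ otherwise, with $\varepsilon$ small.
\end{itemize}
The negative weights make stopping exactly at $i$ optimal, the lexicographic weights force the identity order on the first $i$ boxes (using $p_{k}<1$, so that opening later strictly lowers $v_{k}$), and the small $s_{j}$ selects the tie-choice $j$ without disturbing the search part. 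I expect the bookkeeping of ties among equal $p_{k}$ to need care, but those cases give coinciding points.

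\emph{Step 4: full implementability.} I will perturb $\overline{F}$ so that the desired deterministic strategy becomes the unique optimal one.
\begin{itemize}
\item For $k\leq i$, move the atom at $\underline{z}$ to $\underline{z}+\eta a_{k}$, and adjust the top atom to preserve the mean. This gives reservation values $z_{1}>\cdots>z_{i}$ slightly above $\underline{z}$, with all $a_{k}$ below $z_{i}$ so that search continues, and with $a_{j}$ the largest so that $j$ is the unique choice at the end.
\item For $k>i$, choose reservation values strictly below $\underline{z}+\eta\min_{l\leq i}a_{l}$, so that she stops after box $i$.
\end{itemize}
This needs room above $\underline{z}$, that is, $\overline{z}>\underline{z}$. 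I expect this to be the main obstacle: verifying that the mean constraints and the required strict reservation-value ordering can hold simultaneously (possibly needing the strict version of Assumption \ref{assum: regularity}). Once that is settled, continuity of the outcome in $\eta$ gives convergence to the vertex as $\eta\to 0$.
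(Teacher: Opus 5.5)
Your Steps 1--3 are correct and follow the paper's argument. A vertex must come from a pure optimal strategy under $F^*=\overline{F}$, which is described by the order, the stopping index and the tie-choice, and lexicographic weights expose each such point. Choosing $M$ large is in fact safer than the paper's fixed base $10$. One small correction: swapping boxes never gives the same point. Because $p_k\in(0,1)$ by Assumption~\ref{assum: regularity}, you have $v_{\pi(1)}>\dots>v_{\pi(i)}>0$, so the ordering hypothesis pins down $\pi$ on the first $i$ positions exactly.

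The genuine gap is in Step~4: no two-point perturbation can produce the reservation values you claim. Every feasible $F_k$ has mean $\mu_k$ and $\underline{z}=\mu_k-c_k$. Writing $\mu_k-z_k=\int(y-z_k)^{+}dF_k-\int(z_k-y)^{+}dF_k$ therefore gives
\[
z_k-\underline{z}=\int (z_k-y)^{+}\,dF_k(y).
\]
Two consequences follow.
\begin{enumerate}
\item $z_k\ge\underline{z}$ always.
\item $z_k>\underline{z}$ forces $F_k(\underline{z}-)>0$. Indeed, if all mass lies at or above $\underline{z}$, the right side is at most $(z_k-\underline{z})F_k(z_k-)<z_k-\underline{z}$, because $c_k>0$ forces mass above $z_k$.
\end{enumerate}
Now apply this to your construction. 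You require boxes $k>i$ to have reservation values below $\underline{z}+\eta\min_{l\le i}a_l$. By the first consequence, this forces $a_l>0$ for every $l\le i$. Each perturbed $F_l$ then has no mass below $\underline{z}$, so $z_1=\dots=z_i=\underline{z}$, which lies below their own low atoms. The strict order $z_1>\dots>z_i>\max_{k>i}z_k$ therefore fails, and for $i\ge 2$ Pandora strictly prefers to stop after a low draw from box~$1$.

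This is structural, not a verification detail. Whenever $i<n$, full implementation imposes two requirements on box $j$:
\begin{itemize}
\item its main-path value must strictly exceed $\max_{k>i}z_k\ge\underline{z}$;
\item $z_j\ge z_i>\max_{k>i}z_k\ge\underline{z}$.
\end{itemize}
By the identity above, no distribution with only two support points satisfies both.

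The missing ingredient, which the paper uses, is a small mass strictly below $\underline{z}$. In the binary-prior case, for each $k\le i$ use the following masses:
\begin{itemize}
\item $\eta_k$ at $\underline{x}_k$, which is below $\underline{z}$ because $c_k<\mu_k-\underline{x}_k$;
\item $p_k-\Delta_k-\eta_k$ at $\underline{z}+\varepsilon_k$;
\item $1-p_k+\Delta_k$ at $\overline{x}_k$.
\end{itemize}
Here $\eta_k$ is pinned down by the mean. The identity then gives $z_k-\underline{z}=\Delta_k(\overline{x}_k-\underline{z})/(1-p_k+\Delta_k)$ whenever this is at least $\varepsilon_k$. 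So small $\Delta_k$ can be tuned to obtain $z_1>\dots>z_i>\underline{z}+\varepsilon_+$.

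Next set $\varepsilon_j=\varepsilon_+$ and $\varepsilon_k=\varepsilon_-<\varepsilon_+$ for $k\neq j$. Make each box $k>i$ degenerate at $\mu_k$, so that $z_k=\underline{z}$. Every decision along the main path is then strict. The remaining events have probability of order $\sum_k(\eta_k+\Delta_k)$, which vanishes as $\varepsilon_\pm\to 0$. Under general priors, the paper uses the same idea with a bottom atom $l_k$ and an extra atom $m_k$ near $\underline{z}$, so as to respect the mean-preserving-contraction constraint.

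Finally, the room you need is below $\underline{z}$, not above it. No strict version of Assumption~\ref{assum: regularity} has to be imposed: $\underline{z}_k=\underline{z}$ together with $c_k<\lambda_k(\overline{x}_k-\underline{x}_k)$ already gives $\overline{z}_k>\underline{z}$.
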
 

\begin{remark}\label{rem: full implementation general priors}
    For full implementation, Proposition \ref{prop: optimal} requires Assumption \ref{assum: regularity} to hold with strict inequalities, that is, $\overline{z}> \underline{z}$. 
    In Proposition \ref{prop: joint distribution vertex}, the symmetry condition guarantees that this requirement is satisfied. 
    Under general priors, we prove in Appendix \ref{sec:appendixc} that any vertex can be fully implemented if, for each box $i$, the reservation value under the prior is strictly higher than $\underline{z}_{i}=\mu_{i}-c_{i}$. 
\end{remark}

% Intuition
Each vertex is identified by three elements: a search order $\pi$, a threshold box $i$, and a box $j$ invoked for the "return demand."  
Intuitively, each distribution $F^{*}_{k}$ which we construct in the proof can take one of two possible values, $\underline{z}$ and $h_{k}>\underline{z}$, and Pandora implements the vertex $(\pi,i,j)$ using the following strategy. 
She opens boxes in the order $\pi$, with the highest possible probabilities to boxes up to box $i$.  
Pandora stops search and selects box $k$ whenever she observes its high realization $h_{k}$, and she also selects box $j$ when all realizations observed up to the threshold box $i$ are low, i.e. $y_{k}=\underline{z}$ for all $k\leq i$. 
Boxes after $i$ are never opened and therefore never chosen.

% Proof sketch
The proof of Theorem \ref{thm: joint distribution} parallels the approach used to prove Theorem \ref{thm: feasible search behavior}.
We first fix an arbitrary feasible search behavior $v$ and derive a tight bound on the choice behaviors $d$ for which the pair $(v,d)$ is feasible.
This derivation relies on \citet{armstrong2017ordered} and \citet{choi2018consumer}, which characterize equilibrium choice behaviors in terms of the distribution of \textit{effective values}.
The symmetry assumption $\underline{z}_{i}=\underline{z}$ ensures that this bound is exact.
Then, we extend this analysis to obtain a tight bound on the joint set of feasible search and choice behaviors and construct a profile of binary-support distributions $F^{*}$ that attains every point in this set.
Under binary priors, $F^{*}$ coincides with the distribution $\overline{F}$ derived in Theorem \ref{thm: feasible search behavior}; for general priors under which we prove Theorem \ref{thm: joint distribution} in Appendix \ref{sec:appendixc}, no closed-form expression for $F^{*}$ is generally available. 

%%%%%%%%%%%%%%%%%%%%%%%%%%%%%%%%%%%%%%%%%%%%%%%%%%%%%%%%%%
\subsection{Price Competition}\label{subsec: price competition}

% Introduction
We have thus far focused on the demand side of search markets, treating the supply side as exogenous. 
In some applications, each box corresponds to a product whose seller sets its price. 
This section analyzes an ordered search market in which sellers compete by setting prices. 
Assume $n\geq 2$.

% Setting
Suppose that each box $i$ corresponds to one seller $i$. 
In the beginning of the game, each seller $i$ simultaneously posts a price $t_{i}\geq 0$. 
Then, Pandora observes the price realizations and engages in sequential search. 
If Pandora chooses box $i$ in the end, seller $i$ obtains payoff $t_{i}$ and Pandora receives payoff $x_{i}-t_{i}$ net of the total search cost.\footnote{We assume that each seller has zero marginal cost of production. This is without loss of generality because the consumer has no outside option: the proof of Theorem \ref{thm: price competition} remains valid as long as all sellers share the same production cost.} 
The other sellers' payoffs are zero. 
We consider (possibly mixed) Nash equilibria of this game. 

% Definition: equilibrium search behaviors
An \textit{equilibrium search behavior} is one induced by a Nash equilibrium under some feasible distribution profile $F$. 
We analogously define an \textit{equilibrium pair of search and choice behavior}.
A major challenge in incorporating price competition into the ordered search model is that the structure of equilibrium price distributions remains unresolved in the literature. 
For instance, \citet{obradovits2023price} solve the case in which each $F_{i}$ is a common binary-support distribution, yet the symmetric equilibrium is still a complex mixed strategy and its form depends on the primitives.\footnote{Whether Pandora observes prices before searching has a crucial effect on equilibrium price distributions. We assume she observes realized prices; see \citet{au2024attraction} for related discussion.} 

% Main result 
Since we do not generally know the shape of equilibrium price distributions, there is no formula for an equilibrium search behavior as a function of feasible distribution profiles. 
However, we still obtain the following. 
Again, Theorem \ref{thm: price competition} remains valid under general priors, which we show in the Appendix. 

% ===========================================
\begin{theorem}\label{thm: price competition}
    Suppose $\underline{z}_{i}=\underline{z}$ for all $i$.
    The set of equilibrium pairs coincides with the set of feasible pairs $S^{*}$. 
    A single distribution profile $\overline{F}$ induces all equilibrium pairs, under which every seller obtains zero surplus. 
\end{theorem}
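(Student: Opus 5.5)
The proof establishes the two set inclusions separately: every equilibrium pair lies in $S^{*}$, and every pair in $S^{*}$ arises in equilibrium under the single profile $\overline{F}$ (which, under binary priors and $\underline{z}_{i}=\underline{z}$, coincides with $F^{*}$ from Theorem \ref{thm: joint distribution}).

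\textbf{Equilibrium pairs lie in $S^{*}$.} Fix a feasible profile $F$ and a (possibly mixed) Nash equilibrium of the pricing game. Conditional on a realized price vector $t$, Pandora faces a Pandora's problem in which box $i$ carries posterior means $y_{i}-t_{i}$. This is the same as the original problem with rewards shifted down by $t_{i}$, so the prior means become $\mu_{i}-t_{i}$ and the no-information reservation values become $\underline{z}-t_{i}$. The conditional pair $(v(t),d(t))$ is then a feasible pair of a shifted, possibly asymmetric, primitive. Under symmetric prices the shift is common and harmless. Under asymmetric prices, however, the homogeneity condition fails, so I cannot simply cite Theorem \ref{thm: joint distribution}. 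Instead I would reuse its proof. $S^{*}$ is cut out by finitely many linear inequalities of the form $r\cdot v+s\cdot d\leq \max_{S^{*}}(r\cdot v+s\cdot d)$. So it suffices to show that for each such $(r,s)$ the value $\mE_{t}[r\cdot v(t)+s\cdot d(t)]$ is bounded by the same maximum. This is the main obstacle: price asymmetry can both reorder search and change return demand.

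To handle it, I would exploit the effective-value representation of \citet{armstrong2017ordered} and \citet{choi2018consumer} used in Theorem \ref{thm: joint distribution}. Choice of box $i$ occurs iff $\min(y_{i},z_{i})-t_{i}$ is the maximum, and box $i$ is opened iff its reservation value $z_{i}-t_{i}$ exceeds the effective values of boxes ranked before it. Prices enter only as translations of each box's effective-value distribution. The bounding argument of the paper was that, for a fixed order, $F_{i}(\cdot)$ evaluated at the relevant threshold is maximized at $p_{i}$ when $z_{i}\geq$ that threshold. I would redo this with thresholds shifted by price differences and show the bound still holds coordinatewise: a higher own price lowers box $i$'s reservation value, so box $i$ is opened later, while the continuation probability past $i$ is still bounded by $p_{i}$ via Lemma \ref{lem: point max}. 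Averaging over $t$ then preserves the linear bounds, and convexity of $S^{*}$ gives $\mE_{t}[(v(t),d(t))]\in S^{*}$. If this direct route fails, I would fall back on noting that in the binary case any equilibrium with $t\neq 0$ can be replicated by a price-free profile obtained from garbled $F_{i}$ with the same ordinal effective-value structure.

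\textbf{$S^{*}$ is attained under $\overline{F}$ with zero surplus.} Take $\overline{F}$ (three-point support $\underline{x}_{i},\underline{z},\overline{x}_{i}$, all reservation values equal to $\underline{z}$). I claim $t=0$ for all sellers, together with the Pandora strategy from Theorem \ref{thm: joint distribution} implementing the target pair, is an equilibrium. Consider a unilateral deviation $t_{i}>0$. Box $i$'s reservation value drops to $\underline{z}-t_{i}$, strictly below every other box's. Pandora then opens box $i$ only after all others, and only if every opened box yields a value below $\underline{z}-t_{i}$. Each box $k$ puts zero mass below $\underline{z}$ except the atom at $\underline{x}_{k}$. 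So when box $i$ is opened, box $i$ is chosen only if $y_{i}-t_{i}$ beats the best sampled value, which is at least $\max_{k}\underline{x}_{k}$. I need to check that Pandora can break ties at such histories so that seller $i$ sells with zero probability, or at least that the deviation earns no more than zero. A cleaner route: a higher price also lets Pandora choose the best opened alternative, and with other boxes at value $\underline{z}$ atoms, we choose Pandora's off-path tie-breaking (she is indifferent at $\underline{z}$) to select a rival box. Hence deviation profit is $0$, and the prescribed strategies constitute an equilibrium. Since the on-path prices are zero, the induced pairs are exactly those of Theorem \ref{thm: joint distribution}, namely all of $S^{*}$, and every seller earns zero surplus.
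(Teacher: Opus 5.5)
Your two-inclusion structure matches the paper's, but the hard inclusion (every equilibrium pair lies in $S^{*}$) is not actually proved. Testing against supporting inequalities $r\cdot v+s\cdot d\le\max_{S^{*}}(r\cdot v+s\cdot d)$ only restates membership. To use it you would need the facet normals of $S^{*}$ and the corresponding maxima, or an argument valid for arbitrary $(r,s)$, and you supply neither. Your only concrete estimate is the search-side bound along the price-adjusted order, $v_{i+1}(t)\le F_{i}(z_{i+1}-t_{i+1}+t_{i})\,v_{i}(t)\le F_{i}(z_{i})\,v_{i}(t)\le p_{i}v_{i}(t)$. Nothing in the proposal controls the choice vector $d(t)$, and nothing turns per-realization bounds into membership in $S^{*}$. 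The garbling ``fallback'' is also unproved: box-specific price shifts move means, while a price-free feasible $G_{i}$ must keep mean $\mu_{i}$.

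The missing idea is the paper's route through Lemmas~\ref{lem: feasible joint distribution necessary conditions} and~\ref{lem: neccesary conditions imply implementability}. For each realized $t$, along the price-adjusted order, show that the conditional pair satisfies the same three price-free inequalities as in the no-price problem:
\begin{itemize}
\item $v_{i+1}\le p_{i}v_{i}$, which is your bound.
\item $d_{i}\ge(1-p_{i})v_{i}$. If box $i$ is opened and $y_{i}>z_{i}$, then $y_{i}-t_{i}$ strictly beats every earlier net value (all are at most $z_{i}-t_{i}$) and every later price-adjusted reservation value, so $i$ is chosen; and $1-F_{i}(z_{i})\ge 1-p_{i}$.
\item $\sum_{k\le j}d_{k}\ge 1-v_{j+1}$.
\end{itemize}
Then apply the constructive step: any pair satisfying these inequalities is implemented by $F^{*}$ at zero prices, using explicit stopping probabilities and a triangular stochastic selection matrix. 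Hence the pair lies in $S^{*}$, and convexity handles mixing over $t$ and over Pandora's strategies. What neutralizes prices is that $F_{i}(z_{i})\le p_{i}$ is evaluated at box $i$'s own unadjusted reservation value. The price-adjusted order only ever evaluates $F_{i}$ at thresholds $z_{i+1}-t_{i+1}+t_{i}\le z_{i}$.

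Separately, your deviation argument rests on a misdescription of $\overline{F}$. Under $\underline{z}_{i}=\underline{z}$ you get $q_{i}=1-(\underline{z}_{i}-\underline{x}_{i})/(\underline{z}-\underline{x}_{i})=0$. So $\overline{F}_{i}$ has no atom at $\underline{x}_{i}$; it is binary on $\{\underline{z},\overline{x}_{i}\}$, which is exactly why it coincides with $F^{*}$, as you note yourself. Every rival realization is therefore at least $\underline{z}>\underline{z}-t_{i}$, so after any deviation $t_{i}>0$ box $i$ is opened with probability zero and the deviation earns nothing. Your tie-breaking patch would not have worked had such an atom existed. Suppose rivals drew $\underline{x}_{k}<\underline{z}-t_{i}$ and box $i$ then revealed $\overline{x}_{i}$ with $\overline{x}_{i}-t_{i}>\max_{k\ne i}\underline{x}_{k}$. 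Pandora would strictly prefer $i$, and there would be no tie to break.
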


% Implications
Therefore, the set of rationalizable behaviors is invariant to whether option sellers compete on prices. 
The same reduction principle applies: a single distribution profile rationalizes every equilibrium outcome. 
However, Theorem \ref{thm: price competition} has a rather different welfare implication from those in Theorems \ref{thm: feasible search behavior} and \ref{thm: joint distribution}. 
When sellers' price-setting behavior is accounted for, the supply side earns zero surplus. 
Because each seller charges a price of zero, the demand side may now prefer this distribution profile over another. 

% Proof 1
A simple yet crucial observation for this result is that the Bertrand outcome in which all prices are zero is an equilibrium under the CDF profile $F^{*}$ constructed in Theorem \ref{thm: joint distribution}. 
This immediately implies that every feasible pair is an equilibrium pair. 
The intuition is as follows: posting a positive price $t_{i}$ lowers the reservation value to $\underline{z}-t_{i}$. 
If another box $j$ is priced at zero, Pandora visits $j$ before $i$.
The distribution $F^{*}$ is such that every realization is weakly above $\underline{z}$; hence, Pandora never visits seller $i$ who posts a positive price. 

% Proof 2
The remaining proof shows the converse, i.e., equilibrium pairs are in the set $S^{*}$ we derive in Theorem \ref{thm: joint distribution}. 
The key idea is that introducing a positive price can only shorten the weighted search length relative to the benchmark model without pricing.
Hence, any equilibrium search behavior still lies within the polytope characterized in Theorem \ref{thm: feasible search behavior}. 
We then use this fact to show that, conditional on a given equilibrium search behavior, the associated equilibrium choice behaviors should also belong to the polytope identified in Theorem \ref{thm: joint distribution}.

% Technical contribution 
Equilibrium price competition in consumer search markets is widely recognized in the literature as a difficult, yet largely unexplored, problem. 
Theorems \ref{thm: feasible search behavior} and \ref{thm: joint distribution} provide tools to bypasses this problem, yielding Theorem \ref{thm: price competition}.

%%%%%%%%%%%%%%%%%%%%%%%%%%%%%%%%%%%%%%%%%%%%%%%%%%%%%%%%%%
%%%%%%%%%%%%%%%%%%%%%%%%%%%%%%%%%%%%%%%%%%%%%%%%%%%%%%%%%%

\section{Conclusion} \label{sec:conclusion}

% Results summary
This paper studies how information structures shape \textit{ordered learning paths} by extending \citeapos{weitzman1979optimal} Pandora's problem in the simplest possible way.
In Theorem \ref{thm: feasible search behavior}, we characterize all search behaviors rationalized by some information structure. 
The closed-form expression allows us to provide a tractable approach to a partial identification problem. 
A single welfare-minimizing information structure implements any feasible search behavior, implying that information designed to induce a particular search behavior may reduce searchers' welfare.
Under a symmetry assumption, Theorems \ref{thm: joint distribution} and \ref{thm: price competition} extend the characterization to feasible pairs of search and choice behaviors, with and without price competition.
Although price competition among sellers does not alter the rationalizable set, it leads to distinct welfare implications. 
These findings suggest that our model provides a tractable framework for analyzing the underexplored effects of information structures on the ordered learning paths. 

% Future research
We choose to study the classic Pandora’s box problem as our base model because we aim to establish a theoretical benchmark. 
Naturally, some possible applications would have extra features beyond those captured here.
For instance, searching agents may learn an information structure itself over time \citep{adam2001learning}, choose an item without inspecting it \citep{doval2018whether}, or receive information about additional products while viewing a particular one \citep{ke2020informational, bao2022search}. 
Tailoring the model to each application is a promising avenue for future research. 
However, such extensions may need to extract additional structures from the application, as most variants of the Pandora's box problem lack a general characterization of optimal search strategies. 

%%%%%%%%%%%%%%%%%%%%%%%%%%%%%%%%%%%%%%%%%%%%%%%%%%%%%%%%%%%%%%%%%%%%%%%%%%%%%%

%%%%%%%%%%%%%%%%%%%%%%%%%%%%%%%%%%%%%%%%%%%%%%%%%%%%%%%%%%%%%%%%%%%%%%%%%%%%%%
%%%%%%%%%%%%%%%%%%%%%%%%%%%%%%%%%%%%%%%%%%%%%%%%%%%%%%%%
\titleformat{\section}
		{\Large\bfseries}     
         {Appendix \thesection:}% the label and number
        {0.5em}% space between label/number and subsection title
        {}% formatting commands applied just to subsection title
        []% punctuation or other commands following subsection title

        % Change theorem numbering to A.1, A.2, etc.
\renewcommand{\thetheorem}{A.\arabic{theorem}}
\setcounter{theorem}{0}

 \appendix 

%%%%%%%%%%%%%%%%%%%%%%%%%%%%%%%%%%%%%%%%%%%%%%%%%%%%%%%%%%%%%%%%%%%%%%%%%%%%%%
\section{Proof of Theorem \ref{thm: feasible search behavior}}\label{sec:appendixa}

% Description
We prove Theorem \ref{thm: feasible search behavior} in this section. 
For each order $\pi$, let $V_{\pi}$ be the set of all vectors $v$ such that $v_{\pi(1)}\geq\cdots\geq v_{\pi(n)}$, 
\begin{align*}
    v_{\pi(i)} \geq \underline{v}_{\pi,i}
    \quad \text{and} \quad
    \sum_{j\leq i} \frac{c_{\pi(j)}}{\overline{x}_{\pi(j)}-\underline{z}}v_{\pi(j)} \leq \sum_{j\leq i} \frac{c_{\pi(j)}}{\overline{x}_{\pi(j)}-\underline{z}}\overline{v}_{\pi(j)},
\end{align*}
for each $i$. 
Recall $\overline{v}_{\pi,i}=\prod_{j<i} p_{\pi(j)}$ and $\underline{v}_{\pi,i}=\prod_{j<i} q_{\pi(j)}$. 
We show that the set of all feasible search behaviors $V^{*}$ is the convex hull of $V=\bigcup_{\pi}V_{\pi}$. 

% Step 1: Upper bound
The proof starts with proving the following result, which characterizes the feasible search behavior that maximizes a total search length weighted by any non-negative weight. 
This is the most involved part of the entire proof, and therefore, we prove it later in Subsection \ref{subsec: appendix information design} for readability.

% ----------------------------------
\begin{theorem}\label{thm: information design positive weight}
    For any feasible search behavior $v$ and any non-negative weight vector $r\in\mR^{n}$ such that
    \begin{align*}
        r_{1}\cdot \frac{\overline{x}_{1}-\underline{z}}{c_{1}}
        \geq r_{2}\cdot \frac{\overline{x}_{2}-\underline{z}}{c_{2}}
        \geq \cdots
        \geq r_{n}\cdot \frac{\overline{x}_{n}-\underline{z}}{c_{n}}
        \geq 0,
    \end{align*}
    we have $r\cdot \overline{v}\geq r\cdot v$. 
    Moreover, if $r_{i}>0$ for some $i$ and $v$ maximizes $r\cdot v$, then $v_{j}=\overline{v}_{j}$ for each $j\leq i$.
\end{theorem}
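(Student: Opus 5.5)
The plan is to fix an arbitrary feasible profile $F$ together with an optimal (possibly mixed) strategy, and to bound $r\cdot v$ by a recursion along the realized search path. Mixed strategies only randomize among tie-breaking choices consistent with Pandora's rule, so $r\cdot v$ is an average of pure-strategy values. It therefore suffices to treat a pure strategy that opens boxes in the order $\sigma$ determined by the reservation values, so that $z_{\sigma(1)}\geq z_{\sigma(2)}\geq\cdots$, and stops according to the stopping rule. Write $t_k=z_{\sigma(k+1)}$ for the threshold faced after the $k$-th box, with $t_n=-\infty$. The search reaches box $\sigma(k+1)$ only if every sampled value is below (or, at ties, at most) $t_k$. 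This gives $v_{\sigma(k+1)}\leq \prod_{j\leq k} F_{\sigma(j)}(t_k)$.

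The first step is a pointwise bound in the spirit of the footnoted Lemma on $p_i$. Among feasible $F_i$ with reservation value $z_i\geq t$, the maximal value of $F_i(t)$ is
\begin{align*}
1-\frac{c_i}{\overline{x}_i-t}.
\end{align*}
This comes from concentrating the mass above $t$ at $\overline{x}_i$ and using $c_i=\int_{z_i}^{\overline{x}_i}(y-z_i)\,dF_i$. This bound equals $p_i$ when $t=\underline{z}$ and exceeds $p_i$ when $t<\underline{z}$. So the only way to beat $\overline{v}$ is to push thresholds below $\underline{z}$. The second step controls this. I would show that when $z_{\sigma(1)}<\underline{z}$, the box $i^{*}$ with $\underline{z}_{i^{*}}=\underline{z}$ has its own reservation value $z_{i^{*}}<\underline{z}_{i^{*}}$, which is impossible because $z_i\geq \underline{z}_i$ always holds. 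Hence $t_0=z_{\sigma(1)}\geq\underline{z}$.

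More generally, the box $i^{*}$ is opened at some position $m$. Every threshold $t_k$ with $k<m$ is at least $z_{i^{*}}\geq\underline{z}$, so those continuation factors are at most $p_{\sigma(j)}$. After $i^{*}$ is opened, the thresholds may fall below $\underline{z}$. The mean constraint on $F_{i^{*}}$ then forces a tradeoff: sampled values below a threshold $t<\underline{z}$ carry probability at most $(\overline{x}_{i^*}-\mu_{i^*})/(\overline{x}_{i^*}-t)$-type bounds. I expect that multiplying this factor by the enlarged continuation factors of later boxes still yields a product no larger than the corresponding $p$-products.

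The third step is the heart of the argument and the main obstacle. The two effects compete: lowering some $z_i$ toward or below $\underline{z}$ raises $v_i$, but it also changes the order and raises the stopping probability at box $i$. The paper warns that analyzing a fixed order gives an ambiguous sign. I would therefore optimize jointly over the order and the profile, by induction on $n$. Define $W(S,w)$ as the supremum of the weighted continuation length over the remaining set $S$ when the current best sample is $w$. The induction hypothesis is that for $w\leq\underline{z}$ this supremum is attained with all thresholds equal to $\underline{z}$ and boxes ordered by the index $r_i(\overline{x}_i-\underline{z})/c_i$. The resulting value is
\begin{align*}
\sum_{i\in S} r_i\prod_{j\in S,\, j<i}p_j .
\end{align*}

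The inductive step is an exchange argument of Weitzman type. Swapping two adjacent boxes $a$ before $b$ changes the objective by
\begin{align*}
r_b(1-p_a)-r_a(1-p_b)\propto r_b\frac{\overline{x}_b-\underline{z}}{c_b}\cdot\frac{c_a}{\overline{x}_a-\underline{z}}\cdot\frac{c_b}{\overline{x}_b-\underline{z}}-\cdots .
\end{align*}
Its sign is determined exactly by the assumed ordering of $r_i(\overline{x}_i-\underline{z})/c_i$. This gives $r\cdot v\leq r\cdot\overline{v}$.

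For the ``moreover'' part, suppose $r_i>0$ and $v$ attains the maximum. Tracing the inequalities, each of them must bind for every $j\leq i$: if some $v_j<\overline{v}_j$ with $j\leq i$, then the term $r_i v_i$, which carries the factor $\prod_{k<i}$, would be strictly smaller. Hence $v_j=\overline{v}_j$ for each $j\leq i$.
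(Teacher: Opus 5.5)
\textbf{There is a genuine gap: the central step is asserted rather than proved, and the induction you offer for it does not work.}

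Several parts of your proposal are correct and match the paper's framing through Lemmas \ref{lem: point max}, \ref{lem: reservation value bound} and \ref{lem: optimal order}:
\begin{itemize}
\item the reduction to pure strategies;
\item the bound $v_{\sigma(k+1)}\le\prod_{j\le k}F_{\sigma(j)}(t_k)$;
\item the pointwise bound $F_i(t)\le 1-c_i/(\overline{x}_i-t)$;
\item the conclusion that only thresholds below $\underline z$ can beat $\overline v$.
\end{itemize}
The step that actually carries the theorem is that thresholds below $\underline z$ never raise $r\cdot v$. You only assert it (``I expect''). The induction you propose for it rests on the hypothesis $W(S,w)=\sum_{i\in S}r_i\prod_{j\in S,\,j<i}p_j$ for $w\le\underline z$. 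That hypothesis is false whenever $S$ contains no box with $\underline z_i=\underline z$, which is precisely the case for the boxes whose thresholds fall below $\underline z$. To see this, take each $F_i$, $i\in S$, from Lemma \ref{lem: point max} at a common $t=\max_{i\in S}\underline z_i<\underline z$. This gives continuation factors $1-c_i/(\overline{x}_i-t)>p_i$ whenever $w\le t$. Treating $S$ as a fresh subproblem discards the one force that limits this gain: the mean constraint of the box opened before $S$, which caps how often the search reaches $S$ with a low current best. Your adjacent-swap computation only ranks orders once every reservation value already equals $\underline z$ with $F_i(\underline z)=p_i$. That is Lemma \ref{lem: optimal order}, and it says nothing about profiles with lower thresholds.

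\textbf{The missing step cannot be done term by term either.} Your phrase ``multiplying this factor by the enlarged continuation factors'' suggests a term-by-term comparison. Use $F_{i^{*}}(t)\le(\overline{x}_{i^{*}}-\mu_{i^{*}})/(\overline{x}_{i^{*}}-t)$, which equals $p_{i^{*}}$ at $t=\underline z$. Then the bound on the term for a box preceded by $i^{*}$ and $K$ tail boxes $j$ is proportional to $\frac{1}{\overline{x}_{i^{*}}-t}\prod_{j}\bigl(1-\frac{c_j}{\overline{x}_j-t}\bigr)$. Its log-derivative is $\frac{1}{\overline{x}_{i^{*}}-t}-\sum_j\frac{c_j}{(\overline{x}_j-t)(\overline{x}_j-t-c_j)}$, which is negative for large $K$. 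Binary distributions attain these bounds, so a single opening probability can genuinely exceed its $p$-product.

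Only the weighted sum is controlled, and only because of the ordering of $r_i(\overline{x}_i-\underline z)/c_i$. This is the content of the paper's Lemma \ref{lem: joint monotone}: it shows $\frac{1}{a-z}\sum_{k\ge m}r_k\prod_{i=m}^{k-1}\bigl(1-\frac{c_i}{\overline{x}_i-z}\bigr)$ is monotone in $z\le\underline z$, by an induction that uses the weight ordering at every step. To bring an arbitrary optimal profile into a form where that lemma applies, the paper also needs:
\begin{itemize}
\item reduction to finite and then binary supports (Lemmas \ref{lem: finite support} and \ref{lem: optimal binary});
\item equal reservation values on the tail block;
\item moving the preceding box's low support point to the block's common value, so that its continuation probability is exactly $(h-\mu)/(h-z)$;
\item a contradiction with the maximality of the block.
\end{itemize}
Your proposal contains neither this machinery nor a substitute for it.

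\textbf{Your ``moreover'' argument is also incomplete.} Having $v_j<\overline v_j$ for some $j\le i$ does not by itself force $v_i<\overline v_i$, because a different order can raise $v_i$. In addition, when the index ordering has ties, a swapped order is also a maximizer, so uniqueness requires strict inequalities.
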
 

% Step 2: Lower bound
The next auxiliary lemma is used to derive a lower bound for each probability that each box is opened. 

% ----------------------------------
\begin{lemma}\label{lem: lower bound}
    Take any $z$ with $\underline{z}_{i}\leq z\leq \overline{z}_{i}$. 
    Then, we have 
    \begin{align*}
        1-\frac{\underline{z}_{i}-\underline{x}_{i}}{z-\underline{x}_{i}} = \min_{F_{i}:z_{i}\geq z} F_{i}(z-),
    \end{align*}
    where $F_{i}$ is a feasible distribution and $z_{i}$ is the associated reservation value.
    %In particular, $F^{*}_{i}$ is a solution to this problem when $z=\underline{z}$.
\end{lemma}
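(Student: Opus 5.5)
The plan is to solve $\min F_i(z-)$ subject to feasibility ($F_i$ supported on $[\underline{x}_i,\overline{x}_i]$ with mean $\mu_i$) and the reservation-value constraint $z_i\geq z$. Write $G(z')=\int_{z'}^{\overline{x}_i}(y-z')\,dF_i(y)$. Since $G$ is decreasing in $z'$, the constraint $z_i\geq z$ is equivalent to
\begin{align*}
\int_{z}^{\overline{x}_i}(y-z)\,dF_i(y)\geq c_i .
\end{align*}
So we minimize the mass $\alpha=F_i(z-)$ placed strictly below $z$, subject to the mean constraint and this option-value constraint.

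\textbf{Lower bound.} Split the mass of $F_i$ into the part on $[\underline{x}_i,z)$ (mass $\alpha$) and the part on $[z,\overline{x}_i]$ (mass $1-\alpha$).
\begin{itemize}
\item Mass below $z$ contributes at least $\underline{x}_i$ per unit to the mean.
\item For mass above $z$, use $y=z+(y-z)$. Its contribution to the mean is $(1-\alpha)z+\int_z(y-z)\,dF_i\geq (1-\alpha)z+c_i$.
\end{itemize}
Combining these gives
\begin{align*}
\mu_i\geq \alpha\underline{x}_i+(1-\alpha)z+c_i ,
\end{align*}
that is, $\alpha(z-\underline{x}_i)\geq z-\mu_i+c_i=z-\underline{z}_i$. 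Hence
\begin{align*}
\alpha\geq \frac{z-\underline{z}_i}{z-\underline{x}_i}=1-\frac{\underline{z}_i-\underline{x}_i}{z-\underline{x}_i}.
\end{align*}
This needs $z>\underline{x}_i$, which holds since $z\geq\underline{z}_i=\mu_i-c_i>\underline{x}_i$ by $c_i<\lambda_i(\overline{x}_i-\underline{x}_i)$.

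\textbf{Attainment.} The bound is tight when two conditions hold: all mass below $z$ sits at $\underline{x}_i$, and the option-value constraint binds. For the second, I would put mass $\beta$ at $\overline{x}_i$ and mass $1-\alpha-\beta$ exactly at $z$ (points at $z$ contribute zero option value). The conditions are then:
\begin{itemize}
\item $\beta(\overline{x}_i-z)=c_i$, so $\beta=c_i/(\overline{x}_i-z)$;
\item $\alpha$ equal to the bound above;
\item the mean constraint, which then holds automatically, since the lower-bound inequalities are all equalities.
\end{itemize}

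\textbf{Main obstacle.} The remaining check is that this three-point distribution is a valid probability measure, i.e. $\alpha,\beta\geq0$ and $\alpha+\beta\leq1$. Nonnegativity is clear from $z\geq\underline{z}_i$ and $z<\overline{x}_i$, the latter following from $z\le \overline{z}_i<\overline{x}_i$. For $\alpha+\beta\le1$, I expect it to be equivalent, after algebra, to $z\le\overline{z}_i=\overline{x}_i-c_i/\lambda_i$. Compute $1-\alpha-\beta=\frac{\underline{z}_i-\underline{x}_i}{z-\underline{x}_i}-\frac{c_i}{\overline{x}_i-z}$. Multiplying through by $(z-\underline{x}_i)(\overline{x}_i-z)$ and substituting $\underline{z}_i=\mu_i-c_i$ should reduce to a quantity proportional to $\overline{z}_i-z$, so it is nonnegative on the stated range. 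I would verify this reduction carefully, using $\mu_i-\underline{x}_i=\lambda_i(\overline{x}_i-\underline{x}_i)$.

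If instead $\alpha+\beta=1$, there is no atom at $z$ and the construction is the prior itself, consistent with $z=\overline{z}_i$. This confirms that the minimum is attained, so it is a min rather than an infimum.
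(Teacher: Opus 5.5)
Your proposal is correct and follows essentially the same route as the paper. The lower bound combines the mean constraint with the option-value inequality $\int_z^{\overline{x}_i}(y-z)\,dF_i(y)\ge c_i$; the paper writes this with integrated CDFs, you write it with masses. Attainment uses the same three-point distribution on $\{\underline{x}_i,z,\overline{x}_i\}$. Your algebraic check is also right: clearing the denominators $(z-\underline{x}_i)(\overline{x}_i-z)$ turns $1-\alpha-\beta$ into $\lambda_i(\overline{x}_i-\underline{x}_i)(\overline{z}_i-z)$. This is equivalent to the paper's sandwich $c_i/(\overline{x}_i-z)\le\lambda_i\le(\underline{z}_i-\underline{x}_i)/(z-\underline{x}_i)$.
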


\begin{proof}
    % Bound
    Take any feasible distribution $F_{i}$ such that the associated reservation value $z_{i}$ satisfies  $z_{i}\geq z$.
    Then, the definition of reservation value implies
    \begin{align*}
        c_{i} 
        = \overline{x}_{i}-z_{i}-\int_{z_{i}}^{\overline{x}_{i}}F_{i}(y_{i})dy_{i} 
        &= \int_{z_{i}}^{\overline{x}_{i}}(1-F_{i}(y_{i}))dy_{i} \\
        &\leq \int_{z}^{\overline{x}_{i}}(1-F_{i}(y_{i}))dy_{i} 
        = \overline{x}_{i}-z-\int_{z}^{\overline{x}_{i}}F_{i}(y_{i})dy_{i}, 
    \end{align*}
    where the last inequality follows from $z_{i}\geq z$.
    Since $F_{i}$ has mean $\mu_{i}$ by feasibility, integration by parts implies that 
    \begin{align*}
        \overline{x}_{i}-\mu_{i} 
        &= \int_{\underline{x}_{i}}^{\overline{x}_{i}}F_{i}(y_{i})dy_{i} \\
        &= \int_{\underline{x}_{i}}^{z}F_{i}(y_{i})dy_{i} + \int_{z}^{\overline{x}_{i}}F_{i}(y_{i})dy_{i} 
        \leq (z-\underline{x}_{i})\cdot F_{i}(z-) + \overline{x}_{i}-z-c_{i},
    \end{align*}
    where the last inequality follows from the first remark. 
    Thus, from $\underline{z}_{i}=\mu_{i}-c_{i}$, we get $(z-\underline{x}_{i})\cdot F_{i}(z-)\geq z-\underline{z}_{i}$, which shows 
    \begin{align*}
        \min_{F_{i}:z_{i}\geq z} F_{i}(z-)
        \geq \frac{z-\underline{z}_{i}}{z-\underline{x}_{i}}
        = 1-\frac{\underline{z}_{i}-\underline{x}_{i}}{z-\underline{x}_{i}}.
    \end{align*}

    % Tight Bound
    It remains to show that there exists a feasible distribution $F_{i}$ such that $F_{i}(z-)=1-(\underline{z}_{i}-\underline{x}_{i})/(z-\underline{x}_{i})$ and the reservation value $z_{i}$ of $F_{i}$ satisfies $z_{i}\geq z$. 
    Define $F_{i}$ as follows:
    \begin{align*}
        F_{i}(y_{i}) = 
        \begin{cases}
            1-(\underline{z}_{i}-\underline{x}_{i})/(z-\underline{x}_{i}) & \quad \text{if} \quad y_{i}<z,\\
            1-c_{i}/(\overline{x}_{i}-z) & \quad \text{if} \quad z\leq y_{i}< \overline{x}_{i},\\
            1 & \quad \text{if} \quad \overline{x}_{i} \leq y_{i}.
        \end{cases}
    \end{align*} 
    To see that $F_{i}$ is a well-defined CDF, note that 
    \begin{align*}
        \overline{z}_{i}
        = \frac{\lambda_{i}\overline{x}_{i} - c_{i}}{\lambda_{i}}
        = \frac{\mu_{i}-(1-\lambda_{i})\underline{x}_{i} - c_{i}}{\lambda_{i}}
        = \frac{\underline{z}_{i}-(1-\lambda_{i})\underline{x}_{i}}{\lambda_{i}}.
    \end{align*}
    Therefore, $z\leq \overline{z}_{i}$ implies that $\lambda_{i}\leq (\underline{z}_{i}-\underline{x}_{i})/(z-\underline{x}_{i})$.
    Similarly, $z\leq \overline{z}_{i}=\overline{x}_{i}-c_{i}/\lambda_{i}$ implies that $\lambda_{i}\geq c_{i}/(\overline{x}_{i}-z)$.
    By combining them, one can check $1-(\underline{z}_{i}-\underline{x}_{i})/(z-\underline{x}_{i})\leq 1-c_{i}/(\overline{x}_{i}-z)$. 
    Therefore, $F_{i}$ is increasing in $y_{i}$.
    Moreover, $F_{i}$ has mean $\mu_{i}$, which follows from
    \begin{align*}
        \int_{\underline{x}_{i}}^{\overline{x}_{i}}F_{i}(y_{i})dy_{i}
        = (z-\underline{x}_{i})\left(1-\frac{\underline{z}_{i}-\underline{x}_{i}}{z-\underline{x}_{i}}\right)
        + (\overline{x}_{i}-z)\left(1-\frac{c_{i}}{\overline{x}_{i}-z}\right)
        %= z-\underline{z}_{i}+\overline{x}_{i}-z-c_{i}
        = \overline{x}_{i}-\mu_{i}.
    \end{align*}
    Since $F_{i}$ is right continuous and has mean $\mu_{i}$, and therefore, the distribution $F_{i}$ is feasible.
    
    Finally, notice that
    \begin{align*}
        \int_{z}^{\overline{x}_{i}}F_{i}(y_{i})dy_{i}
        = (\overline{x}_{i}-z)\left(1-\frac{c_{i}}{\overline{x}_{i}-z}\right)
        = \overline{x}_{i}-z-c_{i},
    \end{align*}
    which implies $z_{i}=z$ by definition, hence $z_{i}\geq z$. 
    This completes the proof.
\end{proof}

% Step 3: Bounding set
Combining Theorem \ref{thm: information design positive weight} and Lemma \ref{lem: lower bound}, we can conclude that any feasible search behavior must be in the convex hull of $V=\bigcup_{\pi}V_{\pi}$.

% ----------------------------------
\begin{lemma}\label{lem: necessary condition}
    If $v$ is feasible, then $v$ is in the convex hull of $V$.
\end{lemma}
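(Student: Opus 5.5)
The plan is to reduce to strategies that open boxes in a fixed order $\pi$ and show that each such strategy induces a search behavior in $V_{\pi}$. Convexity then absorbs mixing and tie-breaking.

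\emph{Step 1 (reduction to fixed orders).} Fix a feasible profile $F$ and an optimal, possibly mixed, strategy inducing $v$. By Pandora's rule, boxes are opened in weakly decreasing order of reservation values. Any freedom in the selection rule is therefore confined to boxes sharing a common reservation value. The freedom in stopping is confined to atoms at which the best sampled reward equals the next reservation value.

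I would argue that $v$ is a convex combination of search behaviors $v^{\pi}$. Each $v^{\pi}$ is induced by an optimal strategy that (a) opens boxes along a single history-independent order $\pi$ consistent with the reservation values, and (b) follows some stopping rule. The idea is to randomize, at the start, over a complete tie-breaking order among boxes with equal reservation values, and to write the original strategy as a mixture of such ones.

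This step is the main obstacle. Within a block of tied boxes, the choice of which box to open next may depend on realizations already observed. I would first try to exploit independence across boxes: the continuation problem inside a tied block depends on the history only through the current maximum sampled reward. From this I would construct an equivalent ex-ante lottery over fixed orders that reproduces every marginal probability $v_i$. If that fails, the fallback is a separating-hyperplane argument applied directly to the history-dependent strategy. Once the reduction holds, it suffices to show $v^{\pi}\in V_{\pi}$, since the convex hull of $V$ contains any mixture of points of $V$.

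\emph{Step 2 (the three defining conditions of $V_{\pi}$).} Fix $\pi$. Monotonicity $v_{\pi(1)}\ge\cdots\ge v_{\pi(n)}$ holds because box $\pi(i+1)$ is opened only after box $\pi(i)$.

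For the upper constraints, fix $i$ and apply Theorem \ref{thm: information design positive weight}, after relabeling boxes by $\pi$, with $r_{\pi(j)}=c_{\pi(j)}/(\overline{x}_{\pi(j)}-\underline{z})$ for $j\le i$ and $r_{\pi(j)}=0$ otherwise. Then $r_{\pi(j)}(\overline{x}_{\pi(j)}-\underline{z})/c_{\pi(j)}$ equals $1$ for $j\le i$ and $0$ afterwards, so the ordering hypothesis of the theorem is met. The theorem applies to every feasible search behavior regardless of the order actually used. It yields exactly $\sum_{j\le i}r_{\pi(j)}v_{\pi(j)}\le\sum_{j\le i}r_{\pi(j)}\overline{v}_{\pi,j}$.

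\emph{Step 3 (the lower bounds).} I need $v_{\pi(i)}\ge\underline{v}_{\pi,i}$. Let $k$ be a box with $\underline{z}_{k}=\underline{z}$, so that $q_{k}=0$. There are two cases.

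\emph{Case $z_{\pi(i)}\ge\underline{z}$.} Every earlier box has $z_{\pi(j)}\ge z_{\pi(i)}$. Under the stopping rule, box $\pi(i)$ is certainly opened whenever all earlier realizations lie strictly below $z_{\pi(i)}$. By independence across boxes, this gives
\begin{align*}
v_{\pi(i)}\ge\prod_{j<i}F_{\pi(j)}(z_{\pi(i)}-).
\end{align*}
Lemma \ref{lem: lower bound} with $z=z_{\pi(i)}$ bounds each factor below by $1-(\underline{z}_{\pi(j)}-\underline{x}_{\pi(j)})/(z_{\pi(i)}-\underline{x}_{\pi(j)})$. This expression is increasing in $z$, so it is at least $q_{\pi(j)}$. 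Here I use $z_{\pi(i)}\ge\underline{z}\ge\underline{z}_{\pi(j)}$, and $z_{\pi(i)}\le z_{\pi(j)}\le\overline{z}_{\pi(j)}$, so the lemma's range condition holds.

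\emph{Case $z_{\pi(i)}<\underline{z}$.} Box $k$ satisfies $z_{k}\ge\underline{z}_{k}=\underline{z}>z_{\pi(i)}$, so $k$ precedes $\pi(i)$ in $\pi$. Hence $q_{k}=0$ is one of the factors in $\underline{v}_{\pi,i}$, and the bound $v_{\pi(i)}\ge 0=\underline{v}_{\pi,i}$ is trivial.

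This shows $v^{\pi}\in V_{\pi}$ for each order, and therefore $v$ lies in the convex hull of $V$.
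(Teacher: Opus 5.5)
\paragraph{Steps 2 and 3.} These reproduce the paper's proof. The partial-sum constraints come from Theorem \ref{thm: information design positive weight}, with weights $c_{\pi(j)}/(\overline{x}_{\pi(j)}-\underline{z})$ on a prefix. The paper phrases this as $r\cdot u\le r\cdot\overline{u}$ for all decreasing $r\ge 0$, which is the same statement. The lower bounds come from $v_{\pi(i)}\ge\prod_{j<i}F_{\pi(j)}(z_{\pi(i)}-)$ together with Lemma \ref{lem: lower bound}, plus the trivial case in which a box attaining $\underline{z}$ precedes $\pi(i)$. Your case split differs only cosmetically. You split on $z_{\pi(i)}$ versus $\underline{z}$, apply the lemma at $z_{\pi(i)}$, and use that the bound is increasing in $z$. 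The paper splits on whether $\max_{k\ge i}\underline{z}_{k}=\underline{z}$, passes to $F_{j}(\underline{z}-)$, and applies the lemma at $\underline{z}$. The two are equivalent.

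\paragraph{The gap is Step 1.} You give a plan and a fallback but carry out neither. The paper does not supply this step either. It writes a mixed strategy as a mixture of pure ones and then treats a pure strategy as opening boxes in the identity order, which is exactly the case you flag.

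You are right that the step is not automatic. Suppose box $0$ is opened first and boxes $a,b$ share a reservation value $\zeta\le z_{0}$, with $F_{0}$ having no atom at $\zeta$. Take the pure strategy that opens $a$ before $b$ when $y_{0}\in A$ and $b$ before $a$ when $y_{0}\in B$, where $A\cup B=\{y_{0}<\zeta\}$. It gives $v_{a}=P(A)+P(B)\phi_{b}$ with $\phi_{b}\le F_{b}(\zeta)<1$. Hence, whenever $P(A),P(B)>0$, the bound $v_{\pi(2)}\ge F_{0}(\zeta-)$, which both you and the paper rely on, fails for whichever box is labeled second. This $v$ is nevertheless the mixture, with weights $P(A)/F_{0}(\zeta-)$ and $P(B)/F_{0}(\zeta-)$, of the fixed-order behaviors for $(0,a,b)$ and $(0,b,a)$. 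So the conclusion survives, but only through an argument of the kind you postpone.

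\paragraph{How to close it.} Your fallback is the convenient route. Since $\mathrm{conv}(V)$ is a polytope, it suffices to show the following: for every $r$ and every $F$, some Pandora-optimal strategy with a history-independent order attains the maximum of $r\cdot v$ over all Pandora-optimal strategies. That strategy's behavior lies in some $V_{\pi}$ by your Steps 2--3, so separation then places every feasible $v$ in $\mathrm{conv}(V)$. The ingredients are these:
\begin{itemize}
\item A box of strictly lower reservation value is reached only along paths on which the running maximum stayed strictly below the current block's reservation value. On those paths, continuation inside the block was forced and every box in it was opened, so the within-block order does not affect later boxes.
\item The information on which within-block choices are based is independent of the unopened boxes.
\item The only other relevant situation is a running maximum exactly equal to the block's reservation value, where Pandora may stop. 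One must check that a single within-block order serves both this case and the forced case, for instance by an exchange argument.
\end{itemize}
Until this is written out, your proof covers exactly the strategies the paper's proof covers, namely those that open boxes in a history-independent order.
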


\begin{proof}
    % Intro
    Assume w.l.g. that $v_{1}\geq \cdots\geq v_{n}$.
    By assumption, some feasible distribution profile $F$ and an optimal strategy induce $v$. 
    To begin with, assume that this optimal strategy is a pure strategy. 
    We prove that $v\in V_{\pi}$, where $\pi$ is an identity mapping. 

    % FOSD
    First, define $\overline{u}_{i}=c_{i}\overline{v}_{i}/(\overline{x}_{i}-\underline{z})$ and $u_{i}=c_{i}v_{i}/(\overline{x}_{i}-\underline{z})$ for each $i$. 
    Then, since $v$ is feasible, Theorem \ref{thm: information design positive weight} shows that $r\cdot u \leq r\cdot \overline{u}$ for all non-negative vectors $r\in \mR^{n}$ such that $r_{1}\geq \cdots\geq r_{n}$. 
    Equivalently, we have $\sum_{j\leq i}u_{j}\leq \sum_{j\leq i}\overline{u}_{j}$ for each $i$. 
    Therefore, $\sum_{j\leq i}c_{j}v_{j}/(\overline{x}_{j}-\underline{z})\leq \sum_{j\leq i}c_{j}\overline{v}_{j}/(\overline{x}_{j}-\underline{z})$ for each $i$.

    % Lower bound
    Second, we prove $v_{i}\geq \underline{v}_{i}$ for all $i$. 
    Let $z_{i}$ be the reservation value of each distribution $F_{i}$. 
    By assumption, $z_{1}\geq z_{2}\geq\cdots\geq z_{n}$ must hold to be consistent with Pandora's rule. 
    Take any $i$. 
    If $\underline{z}>\max\{\underline{z}_{i},\dots,\underline{z}_{n}\}$, the definition of $\underline{z}$ implies that there exists $j<i$ such that $\underline{z}=\underline{z}_{j}$, thus $\underline{v}_{i}=q_{1}\cdots q_{j}\cdots q_{i-1}=0$, and therefore, we clearly have $v_{i}\geq \underline{v}_{i}$.  
    Next, suppose $\underline{z}=\max\{\underline{z}_{i},\dots,\underline{z}_{n}\}$. 
    Since $z_{i}\geq z_{j}\geq \underline{z}_{j}$ for each $j\geq i$, 
    \begin{align*}
        v_{i}
        \geq \prod_{j<i}F_{j}(z_{i}-)
        \geq \prod_{j<i}F_{j}(\max\{\underline{z}_{i},\dots,\underline{z}_{n}\}\}-)
        = \prod_{j<i}F_{j}(\underline{z}-).
    \end{align*} 
    For each $j<i$, Lemma \ref{lem: lower bound} at $z=\underline{z}$ further implies\footnote{Note that Assumption \ref{assum: regularity} implies $\overline{z}_{j}\geq z\geq \underline{z}_{j}.$} 
    \begin{align*}
        F_{j}(\underline{z}-) 
        \geq 1-\frac{\underline{z}_{j}-\underline{x}_{j}}{\underline{z}-\underline{x}_{j}}.
    \end{align*}
    The right-hand side equals $q_{j}$ for all $j<i$, and therefore, we have $v_{i}\geq \underline{v}_{i}$ by definition. 

    % Summary
    The above two arguments prove $v\in V_{\pi}$. 
    If $v$ is induced by some mixed strategy, it is a convex combination of feasible search behaviors induced by each pure strategy in the support of the mixed strategy. 
    Therefore, $v$ must be in the convex hull of $V$. 
    This finishes the proof.
\end{proof}

% Step 4: Vertexes of the bounding set 
The convex hull of a finite union of polytopes is a also polytope, and therefore, the convex hull of $V$ is a polytope. 
In the following lemma, we essentially characterize all vertexes of this set.

% ----------------------------------
\begin{lemma}\label{lem: vertex any direction}
    Take any weight $r\in\mR^{n}$ such that
    \begin{align*}
    r_{\pi(1)}\cdot \frac{\overline{x}_{\pi(1)}-\underline{z}}{c_{\pi(1)}}
    \geq
    \cdots 
    &\geq 
    r_{\pi(i)}\cdot \frac{\overline{x}_{\pi(i)}-\underline{z}}{c_{\pi(i)}} \\
    &\geq 
    0 \\
    &>
    r_{\pi(i+1)}\cdot \frac{\underline{z}-\underline{x}_{\pi(i+1)}}{\underline{z}_{\pi(i+1)}-\underline{x}_{\pi(i+1)}}
    \geq
    \cdots
    \geq 
    r_{\pi(n)}\cdot \frac{\underline{z}-\underline{x}_{\pi(n)}}{\underline{z}_{\pi(n)}-\underline{x}_{\pi(n)}},
    \end{align*}
    for some order $\pi$. 
    Then, we have $r\cdot \tilde{v}\geq r\cdot u$ for all $u$ in the convex hull of $V$, where
    $\tilde{v}_{\pi(j)}=\overline{v}_{\pi,j}$ for all $j\leq i$ and $\tilde{v}_{\pi(j)}=\underline{v}_{\pi,j}$ for all $j>i$. 
    Moreover, $r\cdot \tilde{v}> r\cdot u$ for all $u\neq \tilde{v}$ if all of the above inequalities are strict. 
    Namely, $\tilde{v}$ is a vertex. 
\end{lemma}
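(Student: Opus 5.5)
Since $r\cdot u$ is linear in $u$, it suffices to prove $r\cdot\tilde v\ge r\cdot u$ for every $u\in V_{\sigma}$ and every order $\sigma$. The convex hull then inherits the inequality. Write
\[
a_k=\frac{c_k}{\overline{x}_k-\underline{z}}=1-p_k,\qquad b_k=\frac{\underline{z}_k-\underline{x}_k}{\underline{z}-\underline{x}_k}=1-q_k .
\]
The hypothesis then says two things: the ratios $r_{\pi(j)}/a_{\pi(j)}$ are non-increasing and non-negative for $j\le i$, and the ratios $r_{\pi(j)}/b_{\pi(j)}$ are negative and non-increasing for $j>i$. The first observation is a telescoping identity. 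Since $\overline v_{\sigma,j+1}=p_{\sigma(j)}\overline v_{\sigma,j}$,
\[
\sum_{j\le m}a_{\sigma(j)}\overline v_{\sigma,j}=1-\prod_{j\le m}p_{\sigma(j)},
\]
and analogously $\sum_{j\le m}b_{\sigma(j)}\underline v_{\sigma,j}=1-\prod_{j\le m}q_{\sigma(j)}$. So the defining inequalities of $V_\sigma$ read $\sum_{j\le m}a_{\sigma(j)}u_{\sigma(j)}\le f(\{\sigma(1),\dots,\sigma(m)\})$ with $f(S)=1-\prod_{k\in S}p_k$. The function $f$ is normalized, monotone and submodular, because its marginal $\prod_{S}p\,(1-p_k)$ is decreasing in $S$.

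The plan has three steps.

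\textbf{Step 1.} Show that every $u\in V_\sigma$ satisfies $\sum_{k\in S}a_ku_k\le f(S)$ for \emph{every} subset $S$, not just for $\sigma$-prefixes. I would prove this by induction on $|S|$. Remove from $S$ the element that comes latest in $\sigma$ and compare it with the prefix constraints. Here I use that $u$ is non-increasing along $\sigma$, that the marginals of $f$ are decreasing, and that $u_{\sigma(m)}\le \overline v_{\sigma,m}$-type bounds follow from the prefix inequalities. If the induction does not close directly, the fallback is to compute the vertices of $V_\sigma$ and check the inequality there, since a linear functional on $V_\sigma$ is maximized at a vertex.

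\textbf{Step 2.} Handle the positive part. Set $x_k=a_ku_k$ and $w_k=r_k/a_k$, so that $\sum_{k\in P}r_ku_k=\sum_{k\in P}w_kx_k$ with $P=\{\pi(1),\dots,\pi(i)\}$. Step 1 puts $x$ in the polymatroid of $f$. Edmonds' greedy theorem, with weights sorted in the order $\pi$, bounds this by $\sum_{j\le i}w_{\pi(j)}\,[f(\pi(1..j))-f(\pi(1..j-1))]$. That equals $\sum_{j\le i}r_{\pi(j)}\overline v_{\pi,j}$, which is the positive part of $r\cdot\tilde v$.

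\textbf{Step 3.} Handle the negative coordinates by the mirror-image argument with the supermodular lower function $g(S)=1-\prod_{S}q_k$. I need $\sum_{k\in S}b_ku_k\ge g(S)$ on $V_\sigma$, obtained from $u_{\sigma(j)}\ge\underline v_{\sigma,j}$, monotonicity, and the same telescoping. The delicate point is that $\tilde v_{\pi(j)}$ for $j>i$ carries the factors $q_{\pi(1)},\dots,q_{\pi(i)}$ of the positive boxes. So the lower bound on the negative block must be taken conditional on the boxes in $P$ preceding it. I would handle this by writing $u_k\ge\prod_{l\in P}q_l\cdot(\cdot)$ only after showing that moving positive-weight boxes earlier in $\sigma$ weakly improves both blocks simultaneously. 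This is an exchange argument on adjacent transpositions of $\sigma$ that reduces to the case $\sigma=\pi$.

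Reconciling the two blocks through this exchange argument, together with the subset extension in Step 1, is where I expect the main difficulty.

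For strictness: if all ratio inequalities are strict, then the greedy optimum of a polymatroid with strictly ordered positive weights is unique, and likewise for the negative block. Moreover, the exchange step is strict whenever $\sigma\neq\pi$ on the relevant boxes. Hence equality $r\cdot u=r\cdot\tilde v$ forces $u=\tilde v$, so $\tilde v$ is a vertex.
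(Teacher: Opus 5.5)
Your Step 3 does not work as stated. First, $g(S)=1-\prod_{k\in S}q_k$ has marginals $g(S\cup\{k\})-g(S)=(1-q_k)\prod_{l\in S}q_l$, which are non-increasing in $S$. So $g$ is submodular, exactly like $f$, and there is no contrapolymatroid greedy to mirror. Second, and more seriously, the inequality $\sum_{k\in S}b_ku_k\ge g(S)$ that you want on $V_\sigma$ is false already for singletons. For $S=\{\sigma(2)\}$ it reads $u_{\sigma(2)}\ge 1$. But every $u\in V_\sigma$ has $u_{\sigma(1)}=1$, because the lower bound $\underline{v}_{\sigma,1}=1$ meets the first prefix constraint. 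Hence $u_{\sigma(2)}\le p_{\sigma(1)}<1$. Monotonicity along $\sigma$ only bounds later coordinates from above, so it cannot supply such lower bounds. The lower-bound information in $V_\sigma$ is purely coordinatewise and order-dependent: $u_{\sigma(j)}\ge\prod_{l<j}q_{\sigma(l)}$. Third, your proposed remedy of ``moving positive-weight boxes earlier in $\sigma$'' is not well-posed. You are bounding a fixed $u\in V_\sigma$, and there is nothing to move.

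The missing observation is that the two blocks need no reconciliation. Write $N=\{\pi(i+1),\dots,\pi(n)\}$ and $Q=\prod_{l\in P}q_l$. Because $0\le q_l\le 1$, for $k\in N$ you may multiply in the factors of the $P$-boxes that come after $k$ in $\sigma$: $u_k\ge \underline{v}_{\sigma,\sigma^{-1}(k)}\ge Q\prod_{l\in N,\ l \text{ before } k \text{ in } \sigma}q_l$. Your Step 2 bound on the $P$-block already holds for every $\sigma$, so the blocks decouple. What remains is a sequencing problem: minimize $\sum_{k\in N}|r_k|\prod_{l \text{ before } k \text{ in } \tau}q_l$ over orders $\tau$ of $N$. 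Let $C$ be the common prefix product of two adjacent boxes $\alpha,\beta$. Putting $\alpha$ immediately before $\beta$ instead of after changes this sum by $C\,(|r_\alpha|b_\beta-|r_\beta|b_\alpha)$. So adjacent transpositions toward the order in which $r_k/b_k$ is non-increasing never increase the sum, and by hypothesis that order is $\pi$ restricted to $N$. This is the paper's argument: its order $\rho$ followed by adjacent swaps. Strictness for the negative block then comes from these swaps being strict when $C>0$, together with $|r_k|>0$ forcing $u_k$ down to its lower bound. It does not come from uniqueness of a greedy optimum.

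By contrast, your Steps 1--2 are a sound and genuinely different treatment of the positive block. The paper obtains $r^{+}\cdot u\le r^{+}\cdot\tilde{v}$ by invoking Theorem~\ref{thm: information design positive weight}, which is stated for feasible behaviours. Your polymatroid bound holds directly on $V_\sigma$. Step 1's induction does not close through a coordinatewise bound $u_{\sigma(m)}\le\overline{v}_{\sigma,m}$, which fails for $m\ge 3$. It closes through a case split. If $u_{k^*}\le\prod_{l\in S\setminus\{k^*\}}p_l$, use the induction hypothesis. Otherwise, use the prefix constraint up to $k^*$, the bound $u_g\ge u_{k^*}$ on the gaps $G$ of that prefix, and $f(G)\le\sum_{g\in G}a_g$.
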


\begin{proof}
    % Basics
    Assume w.l.g. that $\pi(i)=i$ for all $i$. 

    % Intro 
    Take any vector $u$ in the convex hull of $V$. 
    We prove that $r\cdot \tilde{v}\geq r\cdot u$. 
    From Bauer's Maximum Principle, there must exist an extreme point $\tilde{u}$ of the convex hull of $V$ such that $r\cdot \tilde{u}\geq r\cdot u$.
    Therefore, we may assume from the beginning that $u$ is an extreme point.
    In particular, $u\in V_{\sigma}$ for some $\sigma$. 

    % Reordering the former boxes
    Construct an order $\rho$ from $\sigma$ as follows:
    First, $\rho(j)=j$ for each $j\leq i$. 
    Second, for all $j$ and $k$ with $j> i$ and $k> i$, $\sigma^{-1}(j)\leq \sigma^{-1}(k)$ if and only if $\rho^{-1}(j)\leq \rho^{-1}(k)$. 
    Intuitively, $\rho$ drops the ranking of each $j>i$ from $\sigma$ on the end, and then reorder the remaining first $i$ boxes $j\leq i$ in the order of indices. 

    % An auxiliary vector
    Here, consider a vector $v$ such that $v_{\rho(j)}=\overline{v}_{\rho,j}$ for each $j\leq i$ and $v_{\rho(j)}=\underline{v}_{\rho,j}$ for each $j>i$. 
    Note that $v\in V_{\rho}$.
    Then, define an auxiliary vector $r^{+}$ such that $r^{+}_{j}=\max\{r_{j},0\}$ for each $j$. 
    Notice that Theorem \ref{thm: information design positive weight} applies under $r^{+}\geq 0$, which proves $r^{+}\cdot (v-u)\geq 0$. 
    Since $u\in V_{\sigma}$ implies $u_{\sigma(j)}\geq \underline{v}_{\sigma,j}$ for each $j$, the construction of $\rho$ implies that $u_{j}\geq v_{j}$ for each $j>i$. 
    Therefore, $r_{j}<0$ implies $v_{j}-u_{j}\leq 0$, implying $r\cdot (v-u)\geq 0$. 

    % Reordering the latter boxes
    Finally, we prove $r\cdot (\tilde{v}-u)\geq 0$. 
    If we have $\rho(j)=j$ for each $j>i$, then we have $\rho=\pi$ and thus $v=\tilde{v}$, finishing the proof. 
    Suppose not. 
    Then, by assumption, there exists $j>i$ such that
    \begin{align*}
        r_{\rho(j+1)}\cdot \frac{\underline{z}-\underline{x}_{\rho(j+1)}}{\underline{z}_{\rho(j+1)}-\underline{x}_{\rho(j+1)}}
        \geq 
        r_{\rho(j)}\cdot \frac{\underline{z}-\underline{x}_{\rho(j)}}{\underline{z}_{\rho(j)}-\underline{x}_{\rho(j)}}.
    \end{align*}
    Consider an order $\tau$ that swaps the order between $\rho(j)$ and $\rho(j+1)$ while keeping other relative orders agree with $\rho$. 
    Then, define a vector $\hat{v}$ such that $\hat{v}_{\tau(j)}=\overline{v}_{\tau,j}$ for each $j\leq i$ and $\hat{v}_{\tau(j)}=\underline{v}_{\tau,j}$ for each $j>i$. 
    Note that $\hat{v}\in V_{\tau}$. 
    Then, 
    \begin{align*}
        r\cdot (\hat{v}-v) 
        &= \sum_{k>i} r_{\tau(k)}\hat{v}_{\tau(k)} - \sum_{k>i} r_{\rho(k)}v_{\rho(k)} \\
        &= (r_{\tau(j)}\hat{v}_{\tau(j)}+r_{\tau(j+1)}\hat{v}_{\tau(j+1)}) - (r_{\rho(j)}v_{\rho(j)}+r_{\rho(j+1)}v_{\rho(j+1)}) \\
        &= (r_{\rho(j+1)}\hat{v}_{\tau(j)}+r_{\rho(j)}\hat{v}_{\tau(j+1)}) - (r_{\rho(j)}v_{\rho(j)}+r_{\rho(j+1)}v_{\rho(j+1)}) \\
        &= \hat{v}_{\tau(j)} \cdot (r_{\rho(j+1)}+r_{\rho(j)}q_{\tau(j)}) - v_{\rho(j)} \cdot (r_{\rho(j)}+r_{\rho(j+1)}q_{\rho(j)}) \\
        &= \hat{v}_{\tau(j)} \cdot (r_{\rho(j+1)}+r_{\rho(j)}q_{\rho(j+1)}) - v_{\rho(j)} \cdot (r_{\rho(j)}+r_{\rho(j+1)}q_{\rho(j)}) \\
        &= v_{\rho(j)} \cdot \left\{r_{\rho(j+1)}\cdot \frac{\underline{z}_{\rho(j)}-\underline{x}_{\rho(j)}}{\underline{z}-\underline{x}_{\rho(j)}} - r_{\rho(j)}\cdot \frac{\underline{z}_{\rho(j+1)}-\underline{x}_{\rho(j+1)}}{\underline{z}-\underline{x}_{\rho(j+1)}}\right\} \\
        &\geq 0,
    \end{align*}
    where the last inequality follows by assumption and other equalities follow by the construction of $\tau$.
    Therefore, swapping the order achieves a higher value. 
    Repeating this procedure, we eventually get $r\cdot (\tilde{v}-u)\geq 0$.

    The latter statement on the uniqueness of the maximizer will follow from the above reasoning along with Theorem \ref{thm: information design positive weight}.
\end{proof}

% Step 5: Characterization 
The next result shows that a single CDF profile $\overline{F}$ induces all vertexes in the convex hull of $V$. 
Formally, each $\overline{F}_{i}$ is defined as follows: 
\begin{align*}
    \overline{F}_{i}(y_{i}) = 
    \begin{cases}
        q_{i} & \quad \text{if} \quad y_{i}<\underline{z},\\
        p_{i} & \quad \text{if} \quad \underline{z}\leq y_{i}< \overline{x}_{i},\\
        1 & \quad \text{if} \quad \overline{x}_{i} \leq y_{i}.
    \end{cases}
\end{align*} 
To see that this is a well-defined CDF, note first that $0\leq q_{i}$ and $p_{i}\leq 1$ follow by definition. 
Therefore, it is sufficient to check that $q_{i}\leq p_{i}$. 
Assumption \ref{assum: regularity} implies that $\underline{z}\leq \overline{z}\leq \overline{z}_{i}$. 
Note that $\underline{z}_{i}=\mu_{i}-c_{i}$ implies two expressions for $\overline{z}_{i}$ as follows.
\begin{align*}
    \overline{z}_{i} 
    = \overline{x}_{i}-\frac{c_{i}}{\lambda_{i}}
    = \frac{\underline{z}_{i}}{\lambda_{i}}-\frac{\underline{x}_{i}(1-\lambda_{i})}{\lambda_{i}}.
\end{align*}
The first expression and $\underline{z}\leq \overline{z}_{i}$ imply $\lambda_{i}\geq c_{i}/(\overline{x}_{i}-\underline{z})$. 
Also, the second expression and $\underline{z}\leq \overline{z}_{i}$ imply $\lambda_{i}\leq (\underline{z}_{i}-\underline{x}_{i})/(\underline{z}-\underline{x}_{i})$.
These two inequalities with appropriate rearrangements induce
\begin{align*}
    \frac{\underline{z}-\underline{z}_{i}}{\underline{z}-\underline{x}_{i}} 
    \leq 
    1-\lambda_{i}
    \leq 
    1-\frac{c_{i}}{\overline{x}_{i}-\underline{z}},
\end{align*}
which is equivalent to $ q_{i}\leq p_{i}$.
Therefore, the function $\overline{F}_{i}$ is a well-defined CDF. 
Moreover, an analogous argument as the one in the proof of Lemma \ref{lem: lower bound}, one can see by direct computations that $\overline{F}_{i}$ has mean $\mu_{i}$ and reservation value $\underline{z}$, respectively.

% ----------------------------------
\begin{lemma}\label{lem: sufficient condition}
    If $v$ is a vertex of the convex full of $V$, then $\overline{F}$ induces $v$.
\end{lemma}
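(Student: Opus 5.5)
We first identify what the vertexes of the convex hull of $V$ look like, and then exhibit, for each such vertex, a pure strategy consistent with Pandora's rule under $\overline{F}$ that induces it. Any vertex $v$ of the convex hull of $V$ is a unique maximizer of $r\cdot u$ for some $r$. Relabel so that $\pi$ is the identity and the weights are sorted as in Lemma \ref{lem: vertex any direction}, with ties broken by a small perturbation of $r$ that keeps $v$ the unique maximizer. Then Lemma \ref{lem: vertex any direction} shows that $v=\tilde{v}=(\overline{v}_{1},\dots,\overline{v}_{i},\underline{v}_{i+1},\dots,\underline{v}_{n})$ for some threshold $i$. So it suffices to show that $\overline{F}$ induces every vector of this form for every order $\pi$ and threshold $i$. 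The case of a general $\pi$ is identical after relabeling.

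Under $\overline{F}$, every box has reservation value $\underline{z}$, as computed right after the definition of $\overline{F}_{i}$. Hence every search order is consistent with the selection rule. Each $\overline{F}_{j}$ puts mass $q_{j}$ on $\underline{x}_{j}$, mass $p_{j}-q_{j}$ on $\underline{z}$, and mass $1-p_{j}$ on $\overline{x}_{j}$. By the stopping rule, Pandora must stop after a draw strictly above $\underline{z}$ and must continue after a draw strictly below $\underline{z}$ whenever every opened value so far is below $\underline{z}$. When the maximum sampled value equals exactly $\underline{z}$, she is indifferent and may stop or continue.

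The strategy is as follows. Open boxes in the order $1,2,\dots,n$. While at a position $k\le i$, continue whenever the maximum sampled value is $\le\underline{z}$, meaning she continues at ties. After position $i$, stop as soon as the maximum sampled value is $\ge \underline{z}$, meaning she stops at ties. Then box $k\le i$ is opened if and only if every earlier draw is in $\{\underline{x}_{j},\underline{z}\}$, which happens with probability $\prod_{j<k}p_{j}=\overline{v}_{k}$.

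For $k>i$, box $k$ is opened only if all draws of boxes $1,\dots,k-1$ equal $\underline{x}_{j}$. Draws up to $i$ that equal $\underline{z}$ keep the maximum at $\underline{z}$, so she stops right after box $i$. At positions after $i$, she stops at the first draw that reaches $\underline{z}$. Consequently box $k$ is opened with probability $\prod_{j<k}q_{j}=\underline{v}_{k}$. At every history this behavior respects Pandora's rule, so it is optimal, and the induced search behavior is exactly $\tilde{v}$.

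The main point to verify carefully is the indifference claim: the atom at $\underline{z}$ and the common reservation value $\underline{z}$ must make both continuing and stopping optimal at ties. In particular, once the maximum is $\underline{z}$ and later boxes are opened only after draws below $\underline{z}$, this must remain consistent with Pandora's rule, since draws of $\underline{x}_{j}$ never raise the maximum above $\underline{z}$.

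A second, minor step is the degenerate case $\underline{z}_{j}=\underline{z}$. There $q_{j}=0$, so Pandora never continues past such a box after position $i$. This agrees with $\underline{v}_{k}=0$ for later boxes and with the requirement that a draw at $\underline{z}$ after the threshold stops the search.

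Finally, the case $i=n$ and the case $i=0$ are covered by the same strategy. In the case $i=0$, she stops at the first draw $\ge\underline{z}$ from the start.
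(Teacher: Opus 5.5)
Your proposal is correct and follows essentially the same route as the paper. You identify each vertex as $(\overline{v}_{1},\dots,\overline{v}_{i},\underline{v}_{i+1},\dots,\underline{v}_{n})$ via Lemma \ref{lem: vertex any direction}, then take the strategy that continues at ties $y=\underline{z}$ when deciding on boxes up to $i$ and stops at ties afterwards, which yields opening probabilities $\prod_{j<k}p_{j}$ and $\prod_{j<k}q_{j}$; your explicit sorting of the supporting direction $r$ into the chain of that lemma only spells out a step the paper leaves implicit (the perturbation is unnecessary, since uniqueness of the maximizer already forces $v=\tilde{v}$).
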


\begin{proof} 
    % Construction
    Assume w.l.g. that $v_{1}\geq v_{2}\geq\cdots\geq v_{n}$.
    Since $v$ is a vertex, Lemma \ref{lem: vertex any direction} along with this order implies that $v = (\overline{v}_{1},\dots,\overline{v}_{i},\underline{v}_{i+1},\dots,\underline{v}_{n})$ for some $i$. 
    
    Now, the following strategy is consistent with Pandora's rule under the distribution profile $\overline{F}$:
    Open boxes in the ascending order of indices. 
    Suppose that box $j-1$ is opened and $y$ is the maximum value of past observations. 
    If $j\leq i$, then we let Pandora continue search and proceed to open box $j$, if and only if $y\leq \underline{z}$. 
    If $j>i$, we let Pandora open box $j$ if and only if $y<\underline{z}$.
    
    To see that the above strategy induces $v$, take any $j$. 
    If $j\leq i$, then box $j$ is opened with probability $p_{1}\cdots p_{j-1}=\overline{v}_{j}$.
    If $j>i$, the probability that Pandora opens box $j$ equals $q_{1}\dots q_{j-1}=\underline{v}_{j}$. 
    Hence, $\overline{F}$ induces $v$.
\end{proof}

% Step 6: Welfare minimization 
Theorem \ref{thm: feasible search behavior} argues that the CDF profile $\overline{F}$ minimizes Pandora's welfare. 
The next lemma proves what generalizes this statement for later use.

% ----------------------------------
\begin{lemma}\label{lem: welfare min}
    Take any feasible $F$ such that the reservation value $z_{i}$ of every box $i$ satisfies $z_{i}\leq \underline{z}$. 
    Then, Pandora's expected payoff is $\underline{z}$, which is the minimum possible expected payoff among all feasible distributions.
\end{lemma}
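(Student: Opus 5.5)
The plan is to prove the two inequalities separately: a universal lower bound $\underline{z}$ on Pandora's optimal payoff under any feasible profile, and a matching upper bound whenever every reservation value satisfies $z_{i}\leq \underline{z}$.

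\emph{Lower bound.} Fix any feasible profile $F$. Let $i^{*}$ be a box with $\underline{z}_{i^{*}}=\underline{z}$, which exists by the definition $\underline{z}=\max_{i}\underline{z}_{i}$. Consider the (possibly suboptimal) strategy that opens only box $i^{*}$ and then chooses it. Since $F_{i^{*}}$ has mean $\mu_{i^{*}}$ by feasibility, this strategy yields $\mu_{i^{*}}-c_{i^{*}}=\underline{z}$. Hence the optimal payoff under every feasible profile is at least $\underline{z}$.

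\emph{Upper bound.} Now assume $z_{i}\leq\underline{z}$ for every $i$, and fix an arbitrary strategy. I would use the covered-call (Kleinberg--Weinberg--Weitzman type) amortization argument, made self-contained here. By definition of the reservation value,
\begin{align*}
    c_{i}=\mE[(y_{i}-z_{i})^{+}].
\end{align*}
The decision to open box $i$ depends only on the realizations of boxes opened earlier (plus independent randomization). Signals are independent across boxes, so the event $O_{i}=\{\text{box } i \text{ is opened}\}$ is independent of $y_{i}$. Therefore
\begin{align*}
    \mE[c_{i}\mathbf{1}_{O_{i}}]=\mE[(y_{i}-z_{i})^{+}\mathbf{1}_{O_{i}}].
\end{align*}
Writing $k$ for the chosen box, which is always an opened box, the expected payoff becomes
\begin{align*}
    \mE\Big[y_{k}-\sum_{i}(y_{i}-z_{i})^{+}\mathbf{1}_{O_{i}}\Big]
    \leq \mE\big[y_{k}-(y_{k}-z_{k})^{+}\big]
    = \mE[\min\{y_{k},z_{k}\}]
    \leq \max_{i}z_{i}\leq \underline{z}.
\end{align*}
The first inequality drops the nonnegative terms for $i\neq k$. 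Since this holds for every strategy, the optimal payoff is at most $\underline{z}$.

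Combining the two bounds, Pandora's expected payoff equals $\underline{z}$ under such an $F$, and this is the minimum over all feasible profiles. As a corollary, $\overline{F}$ attains it, since each $\overline{F}_{i}$ has reservation value exactly $\underline{z}$.

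The main obstacle is justifying the independence step $\mE[c_{i}\mathbf{1}_{O_{i}}]=\mE[(y_{i}-z_{i})^{+}\mathbf{1}_{O_{i}}]$ carefully for mixed and history-dependent strategies. I would handle this by conditioning on the history at the moment box $i$ is opened: that history is measurable with respect to the other boxes' signals and the randomization device, all of which are independent of $y_{i}$.
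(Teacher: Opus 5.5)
Your proof is correct, but it takes a different route from the paper for the upper bound. The lower bound is the same in both: open the box with $\underline{z}_{i}=\underline{z}$ and take it.

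For the upper bound, the paper never bounds an arbitrary strategy. It argues as follows.
\begin{itemize}
\item Since $z_{i}\geq \underline{z}_{i}$ always holds, the box $i$ with $\underline{z}_{i}=\underline{z}$ must have $z_{i}=\underline{z}$.
\item The mean constraint then forces $F_{i}(\underline{z}-)=0$.
\item Hence ``open box $i$ first and stop immediately'' is consistent with Pandora's rule: box $i$ has the highest reservation value, and every realization satisfies $y_{i}\geq \underline{z}\geq z_{j}$.
\item Weitzman's theorem certifies that strategy as optimal, with payoff $\mu_{i}-c_{i}=\underline{z}$.
\end{itemize}

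Your covered-call amortization replaces the appeal to Weitzman's optimality theorem with a self-contained bound that holds for every mixed, history-dependent strategy. The independence step is fine as you sketch it: $O_{i}$ is measurable with respect to the other boxes' signals and the randomization device, so $\mE[(y_{i}-z_{i})^{+}\mathbf{1}_{O_{i}}]=c_{i}\Pr(O_{i})$.

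Your route yields a more general fact: under any feasible profile, the optimal payoff is at most $\mE[\max_{i}\min\{y_{i},z_{i}\}]\leq \max_{i}z_{i}$. The lemma then follows immediately, and you never need to locate the box with $z_{i}=\underline{z}$ or rule out mass below $\underline{z}$.

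The paper's route is shorter once Weitzman is taken as a black box. It also exhibits an explicit optimal strategy and the structural fact that the box attaining $\underline{z}$ puts no mass below $\underline{z}$.
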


\begin{proof}
    Note that there must exist a box $i$ such that $z_{i}=\underline{z}_{i}=\underline{z}$. 
    Then, for box $i$, we have $F_{i}(\underline{z}-)=0$, because otherwise we have
    \begin{align*}
        \overline{x}_{i}-\mu_{i} 
        = \int_{\underline{x}_{i}}^{\overline{x}_{i}}F_{i}(y_{i})dy_{i}
        > \int_{\underline{z}}^{\overline{x}_{i}}F_{i}(y_{i})dy_{i}
        = \overline{x}_{i}-\underline{z}-c_{i},
    \end{align*}
    which together with $\underline{z}=\underline{z}_{i}=\mu_{i}-c_{i}$ implies $0>0$, a contradiction. 
    Therefore, it is optimal for Pandora to first open box $i$ and stop search immediately afterwards. 
    This gives the expected payoff of $\underline{z}_{i}=\underline{z}$. 
    Moreover, this strategy is feasible under any distribution and provides the payoff $\underline{z}$, and thus, $\underline{z}$ is the minimum possible expected payoff among all feasible distributions.
\end{proof}

% Final Step
Finally, we are ready to prove Theorem \ref{thm: feasible search behavior}.

% ----------------------------------
\begin{proof}[Proof of Theorem \ref{thm: feasible search behavior}] 
    First, we prove that $V^{*}$ is the convex hull of $V$ and $\overline{F}$ induces all points in the set.
    If $v$ is a feasible search behavior, then Lemma \ref{lem: necessary condition} shows that $v$ is in the convex hull of $V$. 
    Conversely, take any $v$ in the convex hull of $V$. 
    This is a compact and convex set, and thus, Krein–Milman theorem implies $v$ is a convex combination of the extreme points.  
    By Lemma \ref{lem: sufficient condition}, for each extreme point $u$, which is a vertex as the convex hull of $V$ is a polytope, there exists an optimal strategy which induces $u$ under $\overline{F}$. 
    Therefore, an appropriate mixture of these strategies induces $v$ under $\overline{F}$. 
    Hence, $v$ is a feasible search behavior. 

    Second, we prove the remaining statements. 
    Take any $v$ with $v_{1}\geq\cdots\geq v_{n}$.
    If $v$ is a vertex of $V^{*}$, then Lemma \ref{lem: vertex any direction} shows that $v = (\overline{v}_{1},\dots,\overline{v}_{i},\underline{v}_{i+1},\dots,\underline{v}_{n})$ for some $i$. 
    Lemma \ref{lem: vertex any direction} also proves the other direction. 
    Then, Lemma \ref{lem: welfare min} implies the last statement, completing the proof.
\end{proof}

% =======================================================================
% =======================================================================
\subsection{Proof of Theorem \ref{thm: information design positive weight}}\label{subsec: appendix information design}

% Description
To prove Theorem \ref{thm: information design positive weight}, we consider the information design problem of choosing a feasible CDF profile $F$ that maximizes the induced weighted search length
\begin{align*}
    \max_{\pi\in\Pi_{F}}\left\{\sum_{k=1}^{n}r_{\pi(k)}\prod_{i=1}^{k-1}F_{\pi(i)}(z_{\pi(k)})\right\},
\end{align*}
where $z_{k}$ is the reservation value under $F_{k}$ for each $k$ and $\Pi_{F}$ is the set of all orders $\pi$ such that $z_{\pi(1)}\geq \cdots\geq z_{\pi(n)}$.
Note that we assume Pandora breaks ties in favor of continuing search. 
This is without loss of generality as we assume $r\geq 0$. 
We prove that an optimal CDF profile induces $\overline{v}$.

We begin with the following preliminary result. 

%%%%%%%%%%%%%%%%%%%%%%%%%%%%%%%%
%%%%%%%%%%%%%%%%%%%%%%%%%%%%%%%%
\begin{lemma}\label{lem: point max}
Take any $z\geq \underline{z}_{i}$. 
Then, there exists a solution to the problem 
\begin{align*}
    1-\frac{c_{i}}{\overline{x}_{i}-z} = \max_{F_{i}:z_{i}\geq z} F_{i}(z),
\end{align*}
for any $i$, where $z_{i}$ is the reservation value under a feasible distribution $F_{i}$. 
The reservation value under the solution equals $z$.
\end{lemma}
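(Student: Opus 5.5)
The proof will mirror the argument for Lemma \ref{lem: lower bound}: first derive an upper bound on $F_{i}(z)$ from the reservation-value constraint, then exhibit a feasible distribution that attains it. The natural reading of the statement is $z\leq \overline{z}_{i}$, as in Lemma \ref{lem: lower bound}; otherwise the constraint set is empty. I will assume this range, which covers the case $z=\underline{z}$ used later by Assumption \ref{assum: regularity}.

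For the bound, take any feasible $F_{i}$ whose reservation value satisfies $z_{i}\geq z$. Since $1-F_{i}\geq 0$, the definition of the reservation value gives
\begin{align*}
    c_{i}=\int_{z_{i}}^{\overline{x}_{i}}(1-F_{i}(y_{i}))dy_{i}\leq \int_{z}^{\overline{x}_{i}}(1-F_{i}(y_{i}))dy_{i}.
\end{align*}
Because $F_{i}$ is non-decreasing, $1-F_{i}(y_{i})\leq 1-F_{i}(z)$ for every $y_{i}\geq z$. The right-hand side is therefore at most $(\overline{x}_{i}-z)(1-F_{i}(z))$, which rearranges to $F_{i}(z)\leq 1-c_{i}/(\overline{x}_{i}-z)$. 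This step is short, and it does not use the mean constraint at all.

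For attainment, I will reuse the three-step CDF from the proof of Lemma \ref{lem: lower bound}. It equals $1-(\underline{z}_{i}-\underline{x}_{i})/(z-\underline{x}_{i})$ on $[\underline{x}_{i},z)$, equals $1-c_{i}/(\overline{x}_{i}-z)$ on $[z,\overline{x}_{i})$, and equals $1$ from $\overline{x}_{i}$ on. That proof already establishes three facts for $\underline{z}_{i}\leq z\leq\overline{z}_{i}$:
\begin{itemize}
    \item it is a well-defined CDF, which is where $z\leq \overline{z}_{i}$ is used;
    \item it has mean $\mu_{i}$;
    \item its reservation value is exactly $z$, because $\int_{z}^{\overline{x}_{i}}F_{i}=\overline{x}_{i}-z-c_{i}$.
\end{itemize}
Its value at $z$ is $1-c_{i}/(\overline{x}_{i}-z)$, so it attains the bound with $z_{i}=z$. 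This proves existence and also the claim that the maximizer's reservation value is $z$.

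The main obstacle is a bookkeeping point rather than a substantive one. If other maximizers exist, their reservation value must also equal $z$. To show this, note that equality in the bound forces equality in the first inequality above. Since $1-F_{i}(y_{i})\geq 1-F_{i}(y_{i}-)>0$ for $y_{i}<\overline{x}_{i}$ near $\overline{x}_{i}$ unless $F_{i}$ jumps there, equality gives $\int_{z}^{z_{i}}(1-F_{i})=0$. This makes $F_{i}=1$ on $[z,z_{i})$. If $z_{i}>z$, then $F_{i}(z)=1>1-c_{i}/(\overline{x}_{i}-z)$, which contradicts the bound, so $z_{i}=z$. I expect that handling this together with the range restriction on $z$ is the only delicate step.
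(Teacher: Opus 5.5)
Your proof is correct, and its upper-bound half is the paper's argument verbatim. The only difference is the witness used for attainment.
\begin{itemize}
\item The paper uses a binary distribution putting mass $1-c_{i}/(\overline{x}_{i}-z)$ on $l_{i}=\mu_{i}-c_{i}(\overline{x}_{i}-\mu_{i})/(\overline{x}_{i}-z-c_{i})\leq z$ and the rest on $\overline{x}_{i}$.
\item You reuse the three-point CDF of Lemma \ref{lem: lower bound}. It is the paper's witness with the low atom $l_{i}$ spread over $\{\underline{x}_{i},z\}$.
\end{itemize}

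Your choice has two advantages. First, every verification is already on record, including the range condition $z\leq\overline{z}_{i}$. The paper's statement omits that condition, but its proof silently needs it: the requirement $l_{i}\geq\underline{x}_{i}$ is equivalent to $\lambda_{i}(\overline{x}_{i}-z)\geq c_{i}$, i.e.\ $z\leq\overline{z}_{i}$. Second, your witness maximizes $F_{i}(z)$ and minimizes $F_{i}(z-)$ at the same time. At $z=\underline{z}$ it is exactly $\overline{F}_{i}$, which is the property the main text stresses.

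The paper's choice buys binary support, and its later reduction relies on this. Lemma \ref{lem: optimal binary} substitutes ``the solution'' of this lemma for boxes $m,\dots,n$ while concluding that a binary-support profile is optimal. Your three-point witness could not simply be plugged in there.

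Your final paragraph shows that every maximizer has reservation value $z$. This goes beyond what the paper proves, and the conclusion is right. However, its middle sentence is both wrong and unnecessary: $F_{i}(y_{i})\geq F_{i}(y_{i}-)$ gives $1-F_{i}(y_{i})\leq 1-F_{i}(y_{i}-)$, not $\geq$. You do not need it, because equality throughout your chain already says $\int_{z}^{z_{i}}(1-F_{i}(y_{i}))dy_{i}=0$. Right-continuity and monotonicity then give $F_{i}(z)=1$ if $z_{i}>z$. That contradicts $F_{i}(z)=1-c_{i}/(\overline{x}_{i}-z)<1$.
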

%%%%%%%%%%%%%%%%%%%%%%%%%%%%%%%%
%%%%%%%%%%%%%%%%%%%%%%%%%%%%%%%%

\begin{proof}
Take any feasible distribution $F_{i}$ for box $i$ and the associated reservation value $z_{i}$ such that $z_{i}\geq z$. 
The definition of the reservation value implies
\begin{align*}
    c_{i} = \overline{x}_{i}-z_{i}-\int_{z_{i}}^{\overline{x}_{i}}F_{i}(y_{i})dy_{i} 
    & = \int_{z_{i}}^{\overline{x}_{i}}(1-F_{i}(y_{i}))dy_{i} \\
    & \leq \int_{z}^{\overline{x}_{i}}(1-F_{i}(y_{i}))dy_{i}
    \leq (\overline{x}_{i}-z)(1-F_{i}(z)),
\end{align*}
where the inequalities follow because $z_{i}\geq z$ and thus $F_{i}(z)\leq F_{i}(z_{i})$.
Rearranging terms, $1-c_{i}/(\overline{x}_{i}-z)\geq F_{i}(z)$ holds. 
Therefore, the value of the maximization problem is bounded above by $1-c_{i}/(\overline{x}_{i}-z)$.

Now, consider a distribution $G_{i}$ whose support equals $\{l_{i},\overline{x}_{i}\}$ with the probability that value $l_{i}$ realizes being equal to $\tilde{p}_{i}=1-c_{i}/(\overline{x}_{i}-z)$, where
\begin{align*}
    l_{i} = \frac{(\overline{x}_{i}-z)\mu_{i}-c_{i}\overline{x}_{i}}{\overline{x}_{i}-z-c_{i}} = \mu_{i}-c_{i}\cdot \frac{\overline{x}_{i}-\mu_{i}}{\overline{x}_{i}-z-c_{i}}.
\end{align*}
One can see that $\tilde{p}_{i}l_{i}+(1-\tilde{p}_{i})\overline{x}_{i}=\mu_{i}$ and thus $G_{i}$ is feasible. 
Moreover, a direct calculation using $z\geq \underline{z}_{i}\geq l_{i}$ shows that the reservation value equals $z\geq l_{i}$. 
Then, $G_{i}(z)=\tilde{p}_{i}=1-c_{i}/(\overline{x}_{i}-z)$ completes the proof.
\end{proof}

The next lemma derives the designer-optimal search order for a special class of distribution profiles such that the reservation value of every box is the same. 

%%%%%%%%%%%%%%%%%%%%%%%%%%%%%%%%
%%%%%%%%%%%%%%%%%%%%%%%%%%%%%%%%
\begin{lemma}\label{lem: optimal order}
Fix a feasible distribution profile $F$ such that the associated reservation value of every box $i$ equals some constant $z$ and $F_{i}(z)=1-c_{i}/(\overline{x}_{i}-z)$. 
Then, it is
% uniquely
optimal to let Pandora search in the descending order of $r_{i}(\overline{x}_{i}-z)/c_{i}$. 
\end{lemma}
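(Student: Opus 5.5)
With every reservation value equal to $z$, every order lies in $\Pi_F$. Since ties are broken in favour of continuing, box $\pi(k)$ is opened with probability $\prod_{i<k}F_{\pi(i)}(z)=\prod_{i<k}p_{\pi(i)}$, where we write $p_i:=F_i(z)=1-c_i/(\overline{x}_i-z)$. The designer's objective for a given order $\pi$ is therefore
\begin{align*}
    W(\pi)=\sum_{k=1}^{n} r_{\pi(k)}\prod_{i<k}p_{\pi(i)}.
\end{align*}
This is a classical sequencing problem. I would solve it with an adjacent-interchange argument, in the same spirit as the swap step in the proof of Lemma \ref{lem: vertex any direction}.

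\emph{Step 1: interchange.} Fix $\pi$ and let $\tau$ swap $\pi(k)$ and $\pi(k+1)$, leaving all other positions unchanged. Write $P=\prod_{i<k}p_{\pi(i)}$, $a=\pi(k)$ and $b=\pi(k+1)$. Only the terms in positions $k$ and $k+1$ change, because the prefix product for every position after $k+1$ contains both $p_a$ and $p_b$. Hence
\begin{align*}
    W(\pi)-W(\tau)=P\bigl(r_a+p_a r_b - r_b - p_b r_a\bigr)=P\bigl(r_a(1-p_b)-r_b(1-p_a)\bigr).
\end{align*}
Now substitute $1-p_i=c_i/(\overline{x}_i-z)$, which is strictly positive. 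The bracket has the same sign as
\begin{align*}
    r_a\frac{\overline{x}_a-z}{c_a}-r_b\frac{\overline{x}_b-z}{c_b},
\end{align*}
after multiplying through by $(\overline{x}_a-z)(\overline{x}_b-z)/(c_ac_b)>0$. Since $P\ge 0$, placing $a$ before $b$ is weakly better exactly when $a$ has the weakly larger index $r_i(\overline{x}_i-z)/c_i$.

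\emph{Step 2: global optimality.} Let $\pi^*$ be the descending order of $r_i(\overline{x}_i-z)/c_i$ and take any order $\pi$. Bubble-sort $\pi$ into $\pi^*$, each time swapping only an adjacent pair that is out of descending order. By Step 1, each such swap weakly increases $W$. After finitely many swaps we reach $\pi^*$, so $W(\pi^*)\ge W(\pi)$. When the index has ties, any tie-breaking gives the same value, because swapping a tied pair leaves $W$ unchanged by the same identity.

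\emph{Main obstacle.} There is no real obstacle here. The only points needing care are, first, that $P$ may be zero, which is harmless because the inequalities are weak; and second, that $\overline{x}_i>z$ and $c_i>0$, so that dividing by $1-p_i$ preserves the sign. Both hold under the reservation-value range implied by feasibility.
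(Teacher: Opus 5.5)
Your proof is correct and uses the same adjacent-interchange identity as the paper, including the same sign reduction via $1-F_i(z)=c_i/(\overline{x}_i-z)$. The only difference is the direction of the argument. The paper shows that an optimal order must be sorted at each adjacent pair, which tacitly needs a positive prefix product and equal values across tie-breaks. Your bubble-sort step instead shows directly that the sorted order weakly beats every order, which is exactly what the lemma asserts, and it handles ties and zero prefix products cleanly.
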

%%%%%%%%%%%%%%%%%%%%%%%%%%%%%%%%
%%%%%%%%%%%%%%%%%%%%%%%%%%%%%%%%

\begin{proof}
Suppose that Pandora opens boxes in the ascending order of $i$. 
Pick any $m<n$. 
As the reservation value of every box is the common constant $z$, it is also optimal for Pandora to swap the order of opening box $m$ and box $m+1$. 
This alternative strategy yields the following objective value:
\begin{align*}
    & \sum_{k=1}^{m-1}r_{k} \prod_{i=1}^{k-1}F_{i}(z) + r_{m+1}\prod_{i=1}^{m-1}F_{i}(z) + r_{m}\prod_{i=1}^{m-1}F_{i}(z)\cdot F_{m+1}(z) + \sum_{k=m+2}^{n}r_{k}\prod_{i=1}^{k-1}F_{i}(z) \\
    = &  \sum_{k=1}^{n}r_{k}\prod_{i=1}^{k-1}F_{i}(z) - \prod_{i=1}^{m-1}F_{i}(z)\left\{ r_{m} - r_{m}F_{m+1}(z) + r_{m+1}F_{m}(z) - r_{m+1} \right\} \\
    = & \sum_{k=1}^{n}r_{k}\prod_{i=1}^{k-1}F_{i}(z) - \prod_{i=1}^{m-1}F_{i}(z)\left\{ r_{m}\frac{c_{m+1}}{\overline{x}_{m+1}-z} -r_{m+1}\frac{c_{m}}{\overline{x}_{m}-z} \right\},
\end{align*}
where the last equality follows by assumption $F_{i}(z)=1-c_{i}/(\overline{x}_{i}-z)$.

If the original search order maximizes the objective value, the second term of the last expression must be non-negative. 
% This means that $r_{m}c_{m+1}\geq r_{m+1}c_{m}$ must hold.
Therefore, $r_{m}(\overline{x}_{m}-z)/c_{m}\geq r_{m+1}(\overline{x}_{m+1}-z)/c_{m+1}$ using $c_{m}>0$ and $c_{m+1}>0$. 
\end{proof}

The following sequence of lemmas narrows down the candidates of optimal distribution profiles.
First, Lemma \ref{lem: finite support} implies the existence of an optimal distribution with finite support. 
In particular, whenever there exists an optimal distribution profile under which the reservation values are homogeneous, that suggests that a binary distribution profile is optimal. 

%%%%%%%%%%%%%%%%%%%%%%%%%%%%%%%%
%%%%%%%%%%%%%%%%%%%%%%%%%%%%%%%%
\begin{lemma}\label{lem: finite support}
Take any feasible $F$ under which $z_{1}\geq z_{2}\geq \cdots\geq z_{n}$. 
Then, there exists a feasible $G$ that generates the same objective and reservation values such that the finite support $\{a_{i-1},a_{i},\dots,a_{n}\}$ of posterior means under $G_{i}$ satisfies $z_{m+1}\leq a_{m}\leq z_{m}$ for $m\geq i$ and $z_{i}\leq a_{i-1}\leq \overline{x}_{i}$, where $z_{n+1}=\underline{x}_{i}$. 
\end{lemma}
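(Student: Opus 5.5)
The idea is to collapse each $F_{i}$ onto the conditional means of a partition of $[\underline{x}_{i},\overline{x}_{i}]$ whose cut points are the reservation values $z_{i+1},\dots,z_{n}$ and $z_{i}$. Such a collapse is a mean-preserving contraction. It keeps $F_{i}$ feasible, and it leaves unchanged every quantity the objective and the reservation values depend on.

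First, I record which features of $F_{i}$ matter. Under the order $z_{1}\geq\cdots\geq z_{n}$, with ties broken toward continuing, the objective is $\sum_{k}r_{k}\prod_{j<k}F_{j}(z_{k})$. So $F_{i}$ enters only through the values $F_{i}(z_{m})$ for $m>i$. The reservation value $z_{i}$ is pinned down by $\int_{z_{i}}^{\overline{x}_{i}}(y_{i}-z_{i})\,dF_{i}(y_{i})=c_{i}$. This integral depends only on the mass of $F_{i}$ above $z_{i}$, and it is linear in $y_{i}$ there. Every other box's reservation value does not involve $F_{i}$ at all.

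Next, I define the partition for box $i$ with right-closed cells, to match the convention that $F_{i}(z_{m})$ includes any atom at $z_{m}$:
\begin{align*}
I_{i-1}=(z_{i},\overline{x}_{i}],\qquad I_{m}=(z_{m+1},z_{m}] \ \text{ for } i\leq m<n,\qquad I_{n}=[\underline{x}_{i},z_{n}].
\end{align*}
Cells are intersected with $[\underline{x}_{i},\overline{x}_{i}]$, and empty cells (from ties $z_{m}=z_{m+1}$, or from reservation values outside the support) are dropped. I then let $G_{i}$ put mass $F_{i}$-measure$(I_{m})$ at the point $a_{m}=\mE_{F_{i}}[y_{i}\mid y_{i}\in I_{m}]$. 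Each $a_{m}$ lies in the closure of $I_{m}$, so $z_{m+1}\leq a_{m}\leq z_{m}$ for $m\geq i$ and $z_{i}\leq a_{i-1}\leq\overline{x}_{i}$, which is the claimed support structure.

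Then I verify the required properties in turn.
\begin{itemize}
\item[(i)] $G_{i}$ has mean $\mu_{i}$ and support in $[\underline{x}_{i},\overline{x}_{i}]$, so it is feasible by the Blackwell characterization.
\item[(ii)] For each $m>i$, $G_{i}(z_{m})=F_{i}(z_{m})$. The cells lying in $[\underline{x}_{i},z_{m}]$ are exactly $I_{m},\dots,I_{n}$, and each atom $a_{k}$ stays inside the closure of its own cell. So no mass crosses a cut point.
\item[(iii)] The function $\int (y_{i}-z_{i})^{+}\,dG_{i}$ equals $\int (y_{i}-z_{i})^{+}\,dF_{i}$. Below $z_{i}$ both integrands vanish, and on $I_{i-1}$ the integrand is affine, so replacing that mass by its mean preserves the integral. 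Hence the reservation value of $G_{i}$ is still $z_{i}$.
\end{itemize}
Since all reservation values are unchanged, the set of orders consistent with Pandora's rule is unchanged. Since all the $F_{j}(z_{k})$ with $j<k$ are unchanged, the objective is unchanged. Applying this box by box gives $G$.

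The main obstacle is bookkeeping at the cut points. The first thing to check is the atom issue: if $F_{i}$ has an atom exactly at some $z_{m}$, the right-closed convention must keep $G_{i}(z_{m})=F_{i}(z_{m})$ rather than the left limit. The tie-breaking in favour of continuing makes $F_{i}(z_{m})$, not $F_{i}(z_{m}-)$, the relevant quantity, which is exactly why the cells are right-closed. Related degenerate cases also need checking: equal reservation values, which give empty cells, and $z_{i}\geq\overline{x}_{i}$, which forces $F_{i}$ to put zero mass above $z_{i}$. In each of these I expect the dropped cells to be harmless.
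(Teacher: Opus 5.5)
Your proposal is correct and is essentially the paper's construction: the paper pins down $a_{m}$ by $\int_{z_{m+1}}^{z_{m}}F_{i}(y_{i})dy_{i}=(z_{m}-a_{m})F_{i}(z_{m})+(a_{m}-z_{m+1})F_{i}(z_{m+1})$ (and $a_{i-1}$ analogously on $[z_{i},\overline{x}_{i}]$). After integration by parts, this says exactly that $a_{m}$ is the $F_{i}$-conditional mean of the right-closed cell $(z_{m+1},z_{m}]$, so the paper's step function $G_{i}$ coincides with your collapse. One small tightening for (ii): what actually keeps mass off a cut point from above is that the conditional mean of a positive-mass cell $(z_{m+1},z_{m}]$ is strictly greater than $z_{m+1}$, which is stronger than merely lying in the closure of the cell.
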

%%%%%%%%%%%%%%%%%%%%%%%%%%%%%%%%
%%%%%%%%%%%%%%%%%%%%%%%%%%%%%%%%

\begin{proof}
The proof is by construction. 
Fix any index $i$. 
To begin with, we define $i+1$ parameters $a_{i-1}\geq a_{i}\geq \dots\geq a_{n}$ according to the following procedure.
Let $a_{i-1}$ be a solution to the equation
\begin{align*}
    \int_{z_{i}}^{\overline{x}_{i}}F_{i}(y_{i})dy_{i} = (\overline{x}_{i}-a_{i-1}) + (a_{i-1}-z_{i})F_{i}(z_{i}),
\end{align*}
where $z_{i}\leq a_{i-1}\leq \overline{x}_{i}$. 
Note that such an $a_{i-1}$ exists because the right-hand side is continuous in $a_{i-1}$ and takes values from $(\overline{x}_{i}-z_{i})$ to $(\overline{x}_{i}-z_{i})F_{i}(z_{i})$. 
Similarly, for each $m\geq i$, let $a_{m}$ with $z_{m+1}\leq a_{m}\leq z_{m}$ be a solution to the equation
\begin{align*}
    \int_{z_{m+1}}^{z_{m}}F_{i}(y_{i})dy_{i} = (z_{m}-a_{m})F_{i}(z_{m}) + (a_{m}-z_{m+1})F_{i}(z_{m+1}),
\end{align*}
which also exists with the same rationale, where $z_{n+1}=\underline{x}_{i}$ and $F_{i}(z_{n+1})=0$. 

Now, define $G_{i}$ as follows. 
Let $G_{i}(y_{i})=1$ for all $y_{i}\geq a_{i-1}$ and $G_{i}(y_{i})=0$ for all $y_{i}<a_{n}$.  
Then, for each $y_{i}$ with $a_{m+1}\leq y_{i}< a_{m}$, define $G_{i}(y_{i})=F_{i}(z_{m})$.
It follows by construction that 
\begin{align*}
    & \int_{\underline{x}_{i}}^{\overline{x}_{i}}G_{i}(y_{i})dy_{i} \\
     = & \int_{\underline{x}_{i}}^{z_{n}}G_{i}(y_{i})dy_{i} + \sum_{m=i}^{n}\int_{z_{m}}^{z_{m-1}}G_{i}(y_{i})dy_{i} + \int_{z_{i}}^{\overline{x}_{i}}G_{i}(y_{i})dy_{i} \\
     = & \int_{\underline{x}_{i}}^{z_{n}}F_{i}(y_{i})dy_{i} + \sum_{m=i}^{n}\int_{z_{m}}^{z_{m-1}}F_{i}(y_{i})dy_{i} + \int_{z_{i}}^{\overline{x}_{i}}F_{i}(y_{i})dy_{i} \\
     = & \int_{\underline{x}_{i}}^{\overline{x}_{i}}F_{i}(y_{i})dy_{i},
\end{align*}
where the last term equals $\overline{x}_{i}-\mu_{i}$.
This implies that $G_{i}$ has mean $\mu_{i}$, and therefore, $G_{i}$ is feasible.

The reservation value under $G_{i}$ is equal to $z_{i}$ by the construction of $a_{i-1}$. 
Moreover, we have $F_{i}(z_{m})=G_{i}(z_{m})$ for all $m>i$ by the definitions of $a_{m}$ for $m\geq i$. 
These two equivalences imply that replacing $F_{i}$ with $G_{i}$ generates the same designer's expected payoff. 

Finally, the support of $G_{i}$ equals $\{a_{i-1},a_{i},\dots,a_{n}\}$ by construction, which clearly satisfies the stated condition. 
Repeating this procedure for each box $i$, we complete the proof.
\end{proof}

The following lemma is the key to our result.
Intuitively, the expression in the lemma is proportional to the probability that the last $n-m+1$ boxes are opened. 
The monotonicity of this function eventually means that the weighted search length weakly increases when we increase the reservation values of the last $n-m+1$ boxes while decreasing the reservation value of a previously inspected box. 

%%%%%%%%%%%%%%%%%%%%%%%%%%%%%%%%
%%%%%%%%%%%%%%%%%%%%%%%%%%%%%%%%
\begin{lemma}\label{lem: joint monotone}
Take any real number $a\leq \min_{i}\overline{x}_{i}$ and a box index $m\leq n$ with $m\geq 2$. 
Suppose that we have $r_{m}(\overline{x}_{m}-\underline{z})/c_{m}\geq r_{m+1}(\overline{x}_{m+1}-\underline{z})/c_{m+1}\geq \cdots \geq r_{n}(\overline{x}_{n}-\underline{z})/c_{n}$.
Define a function $g$ for $z$ with $z\leq \underline{z}$ and $z<a$, where
\begin{align*}
    g(z) = \frac{1}{a-z} \left\{ \sum_{k=m}^{n}r_{k}\prod_{i=m}^{k-1} \left( \frac{\overline{x}_{i}-z-c_{i}}{\overline{x}_{i}-z} \right) \right\}. 
\end{align*}
Then, $g$ is
% strictly
increasing in $z$ in the domain. 
\end{lemma}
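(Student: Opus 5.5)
The plan is to differentiate $g$ and show the derivative is nonnegative. First I write the bracketed sum as a backward recursion. Set
\begin{align*}
w_{i}(z)=\frac{\overline{x}_{i}-z-c_{i}}{\overline{x}_{i}-z}=1-\frac{c_{i}}{\overline{x}_{i}-z},\qquad T_{n}=r_{n},\qquad T_{k}=r_{k}+w_{k}T_{k+1}\ (m\leq k<n),
\end{align*}
so that the bracket in $g$ equals $T_{m}$ and $g(z)=T_{m}(z)/(a-z)$. Since $a-z>0$ on the domain, $g'(z)\geq 0$ is equivalent to
\begin{align*}
T_{m}(z)\geq (a-z)\bigl(-T_{m}'(z)\bigr).
\end{align*}
The hypothesis $a\leq\min_{i}\overline{x}_{i}$ enters only through
\begin{align*}
(a-z)(-w_{i}')=\frac{(a-z)c_{i}}{(\overline{x}_{i}-z)^{2}}\leq \frac{c_{i}}{\overline{x}_{i}-z}=1-w_{i}.
\end{align*}
Here $w_{i}\in[0,1]$ on the domain, because $z\leq\underline{z}\leq\overline{z}_{i}$ by Assumption \ref{assum: regularity} and hence $c_{i}\leq\lambda_{i}(\overline{x}_{i}-z)\leq \overline{x}_{i}-z$.

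The core is a backward induction on $k$ proving $(a-z)(-T_{k}')\leq T_{k+1}\leq T_{k}$ for $k<n$; the base case $T_{n}'=0$ is trivial. By the product rule,
\begin{align*}
(a-z)(-T_{k}')=(a-z)(-w_{k}')T_{k+1}+w_{k}(a-z)(-T_{k+1}').
\end{align*}
Using the displayed bound on $-w_{k}'$ together with the induction hypothesis $(a-z)(-T_{k+1}')\leq T_{k+1}$, this is at most $(1-w_{k})T_{k+1}+w_{k}T_{k+1}=T_{k+1}$. It then remains to show $T_{k+1}\leq T_{k}=r_{k}+w_{k}T_{k+1}$, which is equivalent to
\begin{align*}
\frac{c_{k}}{\overline{x}_{k}-z}\,T_{k+1}(z)\leq r_{k}.
\end{align*}
This is where the ordering hypothesis on $r_{k}(\overline{x}_{k}-\underline{z})/c_{k}$ must be used. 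Chaining the two inequalities from $k=m$ gives $(a-z)(-T_{m}')\leq T_{m+1}\leq T_{m}$, which is exactly $g'\geq 0$.

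To establish the key inequality I would prove, again by backward induction, the auxiliary bound $T_{j}(z)\leq \rho_{j}(z)$, where $\rho_{j}(z)=r_{j}(\overline{x}_{j}-z)/c_{j}$. The step reads $T_{j}\leq r_{j}+w_{j}\rho_{j+1}$, and $\rho_{j}-r_{j}=w_{j}\rho_{j}$. So it suffices that $\rho_{j+1}(z)\leq\rho_{j}(z)$, and then $T_{k+1}\leq\rho_{k+1}\leq\rho_{k}$ yields the key inequality.

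I expect the main obstacle to be that the hypothesis only orders $\rho_{j}(\underline{z})$, not $\rho_{j}(z)$ for $z<\underline{z}$. This matters because $\rho_{j}(z)=\rho_{j}(\underline{z})(\overline{x}_{j}-z)/(\overline{x}_{j}-\underline{z})$, and the rescaling factor differs across boxes when the $\overline{x}_{j}$ differ. I would first try to prove the sharper statement $T_{k+1}(z)\,c_{k}/(\overline{x}_{k}-z)\leq r_{k}$ directly, by comparing $T_{k+1}(z)/(\overline{x}_{k}-z)$ with its value at $\underline{z}$, where the ordering hypothesis applies. This uses the compensating facts that $T_{k+1}$ rises as $z$ falls, since each $w_{i}$ rises, while the factor $1/(\overline{x}_{k}-z)$ falls. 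If that comparison does not go through in general, I would use the freedom in $a\leq\min_{i}\overline{x}_{i}$ to slacken the first induction. That means carrying the weaker invariant
\begin{align*}
(a-z)(-T_{k}')\leq T_{k}-r_{k}\frac{a-z}{\overline{x}_{k}-z},
\end{align*}
which needs only a ratio condition evaluated with $a$ in place of $\overline{x}_{k}$. I would then check that this weaker condition follows from the ordering at $\underline{z}$.
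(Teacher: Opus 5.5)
\paragraph{Verdict.} There is a genuine gap. The step you leave open cannot be closed, and in fact the statement as written is false when the $\overline{x}_i$ differ.

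\paragraph{The open step is false.} Your product-rule step, $(a-z)(-T_k')\le (1-w_k)T_{k+1}+w_k(a-z)(-T_{k+1}')$, is valid. But the whole chain only closes if $\frac{c_k}{\overline{x}_k-z}T_{k+1}(z)\le r_k$, equivalently $T_{k+1}\le T_k$. This inequality can fail under the stated hypothesis, so neither of your fallbacks can succeed.

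For a counterexample, take boxes $m,m+1,m+2$ with $\overline{x}_m-\underline{z}=0.5$ and $c_m=0.1$. Let $\overline{x}_{m+1}-\underline{z}=\overline{x}_{m+2}-\underline{z}=0.1$ and $c_{m+1}=c_{m+2}=0.09$. Set $r_m=0.2$ and $r_{m+1}=r_{m+2}=0.9$, so every ratio $r_i(\overline{x}_i-\underline{z})/c_i$ equals $1$. At $z=\underline{z}-0.1$ we get $w_m=5/6$ and $w_{m+1}=0.55$. Hence $T_{m+1}=1.395>T_m=1.3625$, and also $T_m>\rho_m(z)=1.2$.

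Your first fallback tries to prove exactly this false inequality. The weakened invariant in your second fallback already fails for the last two boxes. With $a=\overline{x}_{n-1}$ and $z=\underline{z}$, it reduces to $c_{n-1}/(\overline{x}_{n-1}-\underline{z})\le 1/2$. Assumption \ref{assum: regularity} does not give this; the ratio is $0.9$ for boxes $m+1,m+2$ above.

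\paragraph{The lemma itself fails.} No repair is possible, because the conclusion fails for heterogeneous $\overline{x}_i$. Let box $m$ have $\overline{x}_m-\underline{z}=0.04$ and $c_m=0.038$. Follow it with fifteen boxes having $\overline{x}_i-\underline{z}=0.01$ and $c_i=0.0095$. Put $r_i=0.95$ for all $i\ge m$ (all ratios equal $1$) and $a=\underline{z}+0.01$.

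The bracket in $g$ is then $0.95\,[1+w_m(1-w^{15})/(1-w)]$. At $z=\underline{z}-0.04$ we have $(w_m,w)=(0.525,\,0.81)$; at $z=\underline{z}-0.035$ we have $(w_m,w)=(0.4933,\,0.7889)$. This gives $g(\underline{z}-0.04)\approx 69.27>g(\underline{z}-0.035)\approx 69.04$.

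These parameters are consistent with the model and Assumption \ref{assum: regularity}. For instance, take $\underline{z}=0.5$, fixed by a box $1$ with $\overline{x}_1=1$, $\underline{x}_1=0$, $\lambda_1=0.6$, $c_1=0.1$, and take $\underline{x}_i=0$, $\lambda_i=0.97$ for $i\ge m$.

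The mechanism is clearest in a limit. With $N$ identical tail boxes having $\overline{x}_i=a$ and $r_m=c_m/(\overline{x}_m-\underline{z})$, letting $N\to\infty$ gives
\[
g(z)\to \frac{r_m}{a-z}+\frac{1}{a-\underline{z}}\left(1-\frac{c_m}{\overline{x}_m-z}\right).
\]
This limit is strictly decreasing in $z$ whenever $\overline{x}_m>a$ and $(\underline{z}-z)^2>(a-\underline{z})(\overline{x}_m-\underline{z})$.

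\paragraph{The paper has the same hole.} The paper's proof has exactly the gap you flagged. Its first inequality invokes $r_k(\overline{x}_k-z)c_n\ge r_n(\overline{x}_n-z)c_k$, which is the ordering at $z$, not at $\underline{z}$.

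\paragraph{What your argument does prove.} Your argument proves the lemma under the stronger hypothesis that $r_i(\overline{x}_i-z)/c_i$ is nonincreasing in $i$ at every $z$ in the domain. It also needs $r\ge 0$, which you use tacitly when multiplying bounds by $T_{k+1}$.

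Since $\rho_i(z)=\rho_i(\underline{z})(\overline{x}_i-z)/(\overline{x}_i-\underline{z})$, this stronger hypothesis follows from the stated one whenever $\overline{x}_m\le\cdots\le\overline{x}_n$. In particular it holds when all $\overline{x}_i$ coincide.

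Under that hypothesis your proof is complete: the bound $T_j\le\rho_j(z)$, then $T_{k+1}\le T_k$, then $(a-z)(-T_k')\le T_{k+1}$. It is a correct and somewhat cleaner alternative to the paper's induction on $n$, which goes through an explicit lower bound on $g'$. You should state the stronger hypothesis explicitly rather than try to derive it.
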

%%%%%%%%%%%%%%%%%%%%%%%%%%%%%%%%
%%%%%%%%%%%%%%%%%%%%%%%%%%%%%%%%

\begin{proof}
We show that $dg(z)/dz\geq 0$ holds for each $z$ with $z\leq \underline{z}$ and $z<a$. 
In particular, we prove by mathematical induction with respect to $n\geq m$ that 
\begin{align*}
    \frac{dg(z)}{dz} \geq \frac{1}{(a-z)^{2}}\sum_{k=m}^{n}r_{k}\left\{ \prod_{i=m}^{n}\left( \frac{\overline{x}_{i}-z-c_{i}}{\overline{x}_{i}-z} \right)\cdot \frac{\overline{x}_{k}-z}{\overline{x}_{k}-z-c_{k}} \right\}
    \geq 
    0,
\end{align*}
provided that $r_{m}(\overline{x}_{m}-\underline{z})/c_{m}\geq \cdots \geq r_{n}(\overline{x}_{n}-\underline{z})/c_{n}$. 
This is trivial when $n=m$. 
Now, assume the inductive hypothesis at $n-1\geq m$.
Observe that the function $g$ is expressed as a summation of $n-m+1$ terms. 
The derivative of the last term equals
\begin{align*}
    & \frac{r_{n}}{(a-z)^{2}} \prod_{i=m}^{n-1}\left(\frac{\overline{x}_{i}-z-c_{i}}{\overline{x}_{i}-z}\right) - \frac{1}{a-z}\sum_{k=m}^{n-1}r_{n}\frac{c_{k}}{(\overline{x}_{k}-z)(\overline{x}_{k}-z-c_{k})}\prod_{i=m}^{n-1}\left(\frac{\overline{x}_{i}-z-c_{i}}{\overline{x}_{i}-z}\right) \\
    = & \frac{r_{n}}{(a-z)^{2}} \prod_{i=m}^{n-1}\left(\frac{\overline{x}_{i}-z-c_{i}}{\overline{x}_{i}-z}\right) - \frac{1}{a-z}\sum_{k=m}^{n-1}\frac{r_{n}c_{k}}{(\overline{x}_{k}-z)^{2}}\prod_{i=m}^{n-1}\left(\frac{\overline{x}_{i}-z-c_{i}}{\overline{x}_{i}-z}\right)\frac{\overline{x}_{k}-z}{\overline{x}_{k}-z-c_{k}} \\
    \geq & \frac{r_{n}}{(a-z)^{2}} \prod_{i=m}^{n-1}\left(\frac{\overline{x}_{i}-z-c_{i}}{\overline{x}_{i}-z}\right) \\
    &\quad\quad\quad - \frac{1}{a-z}\sum_{k=m}^{n-1}\frac{r_{k}c_{n}}{(\overline{x}_{k}-z)(\overline{x}_{n}-z)}\prod_{i=m}^{n-1}\left(\frac{\overline{x}_{i}-z-c_{i}}{\overline{x}_{i}-z}\right)\frac{\overline{x}_{k}-z}{\overline{x}_{k}-z-c_{k}} \\
    \geq & \frac{1}{(a-z)^{2}} \left\{ r_{n}\prod_{i=m}^{n-1}\left(\frac{\overline{x}_{i}-z-c_{i}}{\overline{x}_{i}-z}\right) - \sum_{k=m}^{n-1}\frac{r_{k}c_{n}}{\overline{x}_{n}-z}\prod_{i=m}^{n-1}\left(\frac{\overline{x}_{i}-z-c_{i}}{\overline{x}_{i}-z}\right)\frac{\overline{x}_{k}-z}{\overline{x}_{k}-z-c_{k}}\right\},
\end{align*}
where the first inequality follows from $r_{k}(\overline{x}_{k}-z)c_{n}\geq r_{n}(\overline{x}_{n}-z)c_{k}$ and the second inequality holds by $a\leq \min_{i}\overline{x}_{i}$. 
Note that, for the above inequalities to hold, we also use the fact that $\overline{x}_{i}-z-c_{i}\geq 0$ for $i\geq m$, which follows from $z\leq \underline{z}\leq \overline{z}_{i}=\overline{x}_{i}-c_{i}/\lambda_{i}$. 

From the above calculation, the inductive hypothesis for case $n-1$ implies that $dg(z)/dz$ multiplied by $(a-z)^{2}$ is bounded below by
\begin{align*}
    & \sum_{k=m}^{n-1}r_{k}\left\{\prod_{i=m}^{n-1}\left(\frac{\overline{x}_{i}-z-c_{i}}{\overline{x}_{i}-z}\right)\cdot\frac{\overline{x}_{k}-z}{\overline{x}_{k}-z-c_{k}}\right\} \\
    & \quad\quad\quad + \left\{r_{n}\prod_{i=m}^{n-1}\left(\frac{\overline{x}_{i}-z-c_{i}}{\overline{x}_{i}-z}\right) - \sum_{k=m}^{n-1}\frac{r_{k}c_{n}}{\overline{x}_{n}-z}\prod_{i=m}^{n-1}\left(\frac{\overline{x}_{i}-z-c_{i}}{\overline{x}_{i}-z}\right)\frac{\overline{x}_{k}-z}{\overline{x}_{k}-z-c_{k}}\right\} \\
    & = \sum_{k=m}^{n-1}r_{k}\frac{\overline{x}_{n}-z-c_{n}}{\overline{x}_{n}-z}\left\{\prod_{i=m}^{n-1}\left(\frac{\overline{x}_{i}-z-c_{i}}{\overline{x}_{i}-z}\right)\cdot\frac{\overline{x}_{k}-z}{\overline{x}_{k}-z-c_{k}}\right\} + r_{n}\prod_{i=m}^{n-1}\left(\frac{\overline{x}_{i}-z-c_{i}}{\overline{x}_{i}-z}\right) \\
    & = \sum_{k=m}^{n-1}r_{k}\prod_{i=m}^{n}\left(\frac{\overline{x}_{i}-z-c_{i}}{\overline{x}_{i}-z}\right)\cdot\frac{\overline{x}_{k}-z}{\overline{x}_{k}-z-c_{k}} \\
    &\quad\quad\quad + r_{n}\left\{\prod_{i=m}^{n-1}\left(\frac{\overline{x}_{i}-z-c_{i}}{\overline{x}_{i}-z}\right)\right\}\cdot\frac{\overline{x}_{n}-z-c_{n}}{\overline{x}_{n}-z}\cdot\frac{\overline{x}_{n}-z}{\overline{x}_{n}-z-c_{n}} \\
    & = \sum_{k=m}^{n}r_{k}\left\{\prod_{i=m}^{n}\left(\frac{\overline{x}_{i}-z-c_{i}}{\overline{x}_{i}-z} \right)\cdot \frac{\overline{x}_{k}-z}{\overline{x}_{k}-z-c_{k}} \right\}\geq 0,
\end{align*}
where the first equality holds by combining the first and third terms. 
The last inequality follows from $\overline{x}_{i}-z-c_{i}\geq 0$ for each $i\geq m$. 
Therefore, the hypothesis holds for any $n\geq m$. 
In particular, $dg(z)/dz\geq 0$. 
\end{proof}

Using the next lemma, it remains to prove that the reservation value of every box is weakly higher than $\underline{z}$ under an optimal CDF profile.

%%%%%%%%%%%%%%%%%%%%%%%%%%%%%%%%
%%%%%%%%%%%%%%%%%%%%%%%%%%%%%%%%
\begin{lemma}\label{lem: reservation value bound}
For any feasible distribution profile under which the reservation value of every box is weakly higher than $\underline{z}$, the distribution profile $\overline{F}$ yields a weakly higher objective value.
\end{lemma}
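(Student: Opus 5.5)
The plan is to bound the objective pointwise along the optimal order. Fix a feasible profile $F$ whose reservation values satisfy $z_i\geq\underline{z}$ for every $i$. Let $\pi\in\Pi_F$ be the order attaining the maximum in the designer's objective; relabel so that $\pi$ is the identity and $z_1\geq\cdots\geq z_n\geq\underline{z}$. The objective under $F$ is then
\begin{align*}
\sum_{k=1}^{n} r_k\prod_{i<k}F_i(z_k).
\end{align*}

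\textbf{Step 1: pointwise bound.} Since $F_i$ is nondecreasing and $z_k\leq z_i$ for $i<k$, we have $F_i(z_k)\leq F_i(z_i)$. Lemma \ref{lem: point max} applies at $z=z_i\geq\underline{z}\geq\underline{z}_i$, and gives
\begin{align*}
F_i(z_i)\leq 1-\frac{c_i}{\overline{x}_i-z_i}\leq 1-\frac{c_i}{\overline{x}_i-\underline{z}}=p_i .
\end{align*}
The second inequality holds because $1-c_i/(\overline{x}_i-z)$ is decreasing in $z$, which requires $\overline{x}_i>z_i$. That holds since $z_i\leq\overline{z}_i<\overline{x}_i$. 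Because every factor lies in $[0,1]$ and $r\geq 0$, multiplying gives $\prod_{i<k}F_i(z_k)\leq\prod_{i<k}p_i=\overline{v}_{\pi,k}$ for each $k$. Summing, the objective under $F$ is at most $\sum_k r_{\pi(k)}\overline{v}_{\pi,k}$.

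\textbf{Step 2: $\overline{F}$ attains this bound.} Under $\overline{F}$ every reservation value equals $\underline{z}$ and $\overline{F}_i(\underline{z})=p_i$. Hence every order lies in $\Pi_{\overline{F}}$, including $\pi$. With ties broken toward continuing, $\overline{F}$ induces exactly $\overline{v}_{\pi,k}$ for box $\pi(k)$, as in Lemma \ref{lem: sufficient condition} with threshold $n$. So the objective value of $\overline{F}$ is at least $\sum_k r_{\pi(k)}\overline{v}_{\pi,k}$, which is at least the value under $F$.

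\textbf{Main obstacle: the case $z_i=\overline{x}_i$.} The argument needs $\overline{x}_i>z_i$ so that $1-c_i/(\overline{x}_i-z)$ is well defined and decreasing. This is where I expect the only real work, but it is settled by the parameter assumptions. Since $c_i>0$, $\overline{z}_i=\overline{x}_i-c_i/\lambda_i<\overline{x}_i$, and every feasible reservation value lies in $[\underline{z}_i,\overline{z}_i]$, as noted in Section \ref{sec:prep}.

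Beyond that, the lemma is essentially the monotonicity of each factor in Step 1, plus the fact that $\overline{F}$ permits any order. The harder direction, handling profiles with some $z_i<\underline{z}$, is deferred to the subsequent lemmas using Lemma \ref{lem: joint monotone}.
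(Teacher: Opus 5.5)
Your proposal is correct and follows essentially the same route as the paper: you bound each factor $F_i(z_k)\le F_i(z_i)\le 1-c_i/(\overline{x}_i-z_i)\le p_i=\overline{F}_i(\underline{z})$ using monotonicity and Lemma \ref{lem: point max}, and then note that $\overline{F}$ attains the product bound under the same order because all of its reservation values equal $\underline{z}$. The only differences are cosmetic. The paper's chain passes through the intermediate bound $1-c_i/(\overline{x}_i-z_k)$, whereas you go directly from $z_i$ to $\underline{z}$, and you also make explicit the easy check that $z_i<\overline{x}_i$.
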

%%%%%%%%%%%%%%%%%%%%%%%%%%%%%%%%
%%%%%%%%%%%%%%%%%%%%%%%%%%%%%%%%

\begin{proof} 
Take any feasible distribution profile $F$ such that the reservation value $z_{i}$ of every box $i$ satisfies $z_{i}\geq \underline{z}$. 
Assume w.l.g. that $z_{1}\geq z_{2}\geq \cdots\geq z_{n}$ and Pandora opens boxes in the ascending order of $i$. 
From Lemma \ref{lem: point max} and $z_{k}\geq \underline{z}\geq \underline{z}_{i}$,  
\begin{align*}
    F_{i}(z_{k})
    \leq 1-\frac{c_{i}}{\overline{x}_{i}-z_{i}}
    \leq 1-\frac{c_{i}}{\overline{x}_{i}-z_{k}}
    \leq 1-\frac{c_{i}}{\overline{x}_{i}-\underline{z}} 
    = \overline{F}_{i}(\underline{z}),
\end{align*}
for any $i$ and $k$ with $k>i$.
Therefore, the objective value under $F$ satisfies
\begin{align*}
    \sum_{k=1}^{n}r_{k}\left( \prod_{i=1}^{k-1}F_{i}(z_{k}) \right)\leq \sum_{k=1}^{n}r_{k}\left( \prod_{i=1}^{k-1}\overline{F}_{i}(\underline{z}) \right).
\end{align*}
As the reservation values of all boxes under the distributions $\overline{F}$ equal $\underline{z}$, the right-hand side of the above inequality is exactly the objective value under $\overline{F}$ if Pandora opens boxes in the same order. 
This concludes the proof.
\end{proof}

The last lemma considers the remaining case where Lemma \ref{lem: reservation value bound} is not applicable, where we find that a binary-support distribution profile is optimal. 

%%%%%%%%%%%%%%%%%%%%%%%%%%%%%%%%
%%%%%%%%%%%%%%%%%%%%%%%%%%%%%%%%
\begin{lemma}\label{lem: optimal binary}
Suppose that the reservation value of some box is lower than $\underline{z}$ for any optimal CDF profile. 
Then, there exists an optimal CDF profile $F$ such that each $F_{i}$ has a binary support.
\end{lemma}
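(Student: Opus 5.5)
Take an optimal CDF profile $F$ for the design problem with $r\ge 0$; it exists because we may restrict to finite-support profiles with a bounded number of atoms, by Lemma \ref{lem: finite support}, which gives a compact parameter space on which the objective is upper semicontinuous when ties are broken toward continuing. Order the boxes so that $z_1\ge\cdots\ge z_n$ and Pandora opens them in that order. By hypothesis, some box has $z_i<\underline{z}$; let $m$ be the smallest such index, so $z_k\ge\underline{z}$ for $k<m$ and $z_k<\underline{z}$ for $k\ge m$. The plan has three steps.

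\textbf{Step 1: the tail boxes $k\ge m$.} The objective depends on $F_i$ only through $z_i$ and the values $F_i(z_k)$ for $k>i$, since ties are broken toward continuing. Box $k$ is opened with probability $\prod_{i<k}F_i(z_k)$. For $i\ge m$, Lemma \ref{lem: point max} applied at each level $z_k\le z_i$ bounds $F_i(z_k)\le 1-c_i/(\overline{x}_i-z_k)$. I would collapse every tail box to a binary distribution with support $\{l_i,h_i\}$, where $l_i\le z_n$ and $h_i\ge z_i$, and adjust $z_m,\dots,z_n$ toward a common value $z$. Along this move the mass below $z$ of each tail box equals $1-c_i/(\overline{x}_i-z)$, so the tail's contribution becomes
\[
\prod_{i<m}F_i(z)\sum_{k\ge m}r_k\prod_{i=m}^{k-1}\frac{\overline{x}_i-z-c_i}{\overline{x}_i-z}.
\]
Lemma \ref{lem: optimal order} lets me order the tail by $r_k(\overline{x}_k-\underline z)/c_k$.

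\textbf{Step 2: the head boxes $i<m$.} For each head box, I replace $F_i$ with a two-point distribution with atoms $a<z$ and $b\ge z_i$, keeping the mean $\mu_i$ and the reservation value $z_i$. This preserves the head's own stopping probabilities, and for the tail it makes $F_i(z)$ equal to the mass on $a$, which is affine in $1/(a-z)$ after solving the mean constraint. This is the structure under which the function $g$ of Lemma \ref{lem: joint monotone} appears, with $a$ the upper atom relevant to the preceding box.

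\textbf{Step 3: raising the common tail value.} I then raise the common tail value $z$ toward $\underline z$. Lemma \ref{lem: joint monotone} shows that the product of the head factor and the tail sum is nondecreasing in $z$, so the raise does not lower the objective. The raise continues until either the tail reaches $\underline z$, which contradicts the hypothesis, or the profile becomes binary. Each box can then be taken to have binary support: the tail boxes by construction in Step 1, and the head boxes by the two-point replacement in Step 2.

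\textbf{Main obstacle.} The hardest step is Step 2. I must show that a single head box can be binarized while preserving both its own reservation value and the values $F_i(z_k)$ that matter downstream. With a common tail value, only one level $z$ below $z_i$ matters, so two atoms should suffice. I would first try the Lemma \ref{lem: finite support} construction, which merges all atoms below $z$ into one and all atoms above into one, and then check feasibility using $z\le\underline z\le\overline z_i$.
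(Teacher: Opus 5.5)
You have the right ingredients: a common tail level, Lemma \ref{lem: point max} for the tail boxes, and Lemma \ref{lem: joint monotone} for a raise toward $\underline{z}$. But Steps 1 and 2 carry all the weight, and they are asserted rather than proved; Step 2 is false as stated.

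\textbf{Step 1.} You never show that $z_m,\dots,z_n$ can be moved to a common value without lowering the objective. Collapsing a tail box $i$ to $\{l_i,h_i\}$ with $l_i\le z_n$ forces $F_i(z_k)$ to take one value at every later level. That value can be strictly below the original $F_i(z_k)$ at intermediate levels. The paper never needs this step. It takes $m$ to be the start of the \emph{final block} of equal reservation values, $z_{m-1}>z_m=\cdots=z_n$, so the tail is common by definition and is binarized by Lemma \ref{lem: finite support}. It then fixes an optimal profile whose final block cannot be lengthened.

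\textbf{Step 2.} A head box $i$ enters the objective through $F_i(z_k)$ at every later level $z_{i+1},\dots,z_{m-1}$, not only at the tail level. So ``only one level below $z_i$ matters'' fails unless all later head boxes share $z_i$. A law with atoms $a<z$ and $b\ge z_i$ gives one common value at all these levels, so it cannot match the original values. Moreover, merging all atoms above $z$ into one strictly lowers $z_i$ whenever $F_i$ has mass in $(z,z_i)$, by convexity of $(y-z_i)^+$. Lemma \ref{lem: finite support} does not do this; it keeps one atom per interval between consecutive reservation values.

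\textbf{The missing idea} is the paper's one-atom-at-a-time reduction, in which the raise is itself the binarization device. (If your Steps 1--2 were valid, your Step 3 would have nothing left to do.) The paper's construction is:
\begin{itemize}
\item Take a head box $l$ with at least three atoms, in the form of Lemma \ref{lem: finite support}.
\item Keep $F_l$ unchanged on $[a,\overline{x}_l]$, where $a\le z_l$ is its second-lowest atom. This fixes $z_l$ and all values of $F_l$ at levels $\ge a$.
\item Place its lowest atom at the common tail level $z$. The mean constraint then forces that atom's mass to be $b(z)=K/(a-z)$, with $K$ independent of $z$; $K>0$ is shown by the same extremality argument.
\item Give each tail box the solution of Lemma \ref{lem: point max} at $z$.
\end{itemize}
The tail's contribution becomes $\prod_{i<m,\,i\neq l}F_i(z)\cdot K\,g(z)$, which is nondecreasing in $z$. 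One raises $z$ until $b(z)$ reaches $F_l(a)$, at which point the atom at $a$ carries no mass and the support shrinks by one. The other possible stopping points, $z=z_{m-1}$ and $z=\underline{z}$, contradict the extremal choice of $m$ and the hypothesis, respectively.

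\textbf{Why your Step 3 does not reproduce this.} If the head boxes are already two-point with lower atoms strictly below $z$, raising $z$ leaves their masses fixed. Meanwhile each tail factor $1-c_i/(\overline{x}_i-z)$ falls, so the objective decreases. The increasing factor $1/(a-z)$ appears only when the moving atom sits at $z$ and a fixed atom $a\le z_l$ shields the reservation value. Moving the lower atom of a binary head box would change its reservation value, which is why the paper only moves the bottom atom of boxes with at least three atoms.
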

%%%%%%%%%%%%%%%%%%%%%%%%%%%%%%%%
%%%%%%%%%%%%%%%%%%%%%%%%%%%%%%%%

\begin{proof}
Take any feasible CDF profile $F$ that is optimal. 
Let $z_{i}$ be the reservation value under $F_{i}$. 
By assumption, $\underline{z}>z_{i}$ for some box $i$. 
Assume, without loss of generality, that $z_{1}\geq z_{2}\geq \cdots\geq z_{n}$.

Note that there exists an index $m\leq n$ such that $z_{m-1}>z_{m}=\dots=z_{n}$. 
If there is no such index $m\leq n$, then we have $\underline{z}>z_{n}=z_{i}\geq \underline{z}_{i}$ for all $i$, which implies a contradiction $\underline{z}>\underline{z}$. 
Additionally, we assume, without loss of generality, that there is no another optimal distribution profile whose reservation values satisfy $z_{1}\geq z_{2}\geq \cdots\geq z_{n}$ and $z_{m-1}=z_{m}=\dots=z_{n}$. 
Otherwise, we replace $F$ with such a distribution profile.

Suppose that there exists a box $l$ such that the realization from $F_{l}$ is not binary.
From $z_{m}=\dots=z_{n}$ and Lemma \ref{lem: finite support}, for each $i\geq m$, there exists a binary distribution $G_{i}$ such that replacing $F_{i}$ with $G_{i}$ yields the same objective value. 
Assume $l<m$. 
From Lemma \ref{lem: finite support}, let $F_{l}$ have a finite support $\{a_{l-1},\dots,a_{n}\}$ such that $z_{k+1}\leq a_{k}\leq z_{k}$ for all $k>l$ and $z_{l}\leq a_{l-1}\leq \overline{x}_{l}$, where $z_{n+1}=\underline{x}_{l}$. 
For simplicity, we also write $b_{k}=F_{l}(a_{k})$ for each $k\geq l-1$. 
Note that $b_{l-1}=1$. 

If $b_{i}=1$ or $b_{i}=b_{n}$ for all $i\geq l$, the realization of $F_{l}$ is binary.
Therefore, $1>b_{i}>b_{n}$ for some $i\geq l$. 
In the following, we prove that there exists a feasible distribution for box $l$ that yields the same objective along with $F_{i}$ for $i\neq l$ and the number of possible realizations is $|\{1,\dots,b_{n}\}|-1$. 
Repeating this process will yield a binary-support distribution for box $l$ while keeping optimality.

Now, we construct a class of feasible distributions $G_{l}^{z}$ for box $l$ parameterized by $z\leq z_{m-1}$. 
Let $i<m$ be the highest index such that $b_{i}>b_{n}$. 
We have $i\geq l$ as discussed above. 
Then, define $G_{l}^{z}$ as the feasible distribution for box $l$ such that the support of posterior means is $\{a_{l-1},\dots,a_{i},z\}$ and $G_{l}^{z}(a_{k})=b_{k}$ for each $k\leq i$. 
The feasibility of $G_{l}^{z}$ implies that $b(z)=G_{l}^{z}(z)$ satisfies
\begin{align*}
    b(z) 
    = \frac{\overline{x}_{l}-\mu_{l}-\int_{a}^{\overline{x}_{l}}G_{l}^{z}(x_{l})dx_{l}}{a-z} 
    = \frac{\overline{x}_{l}-\mu_{l}-\int_{a}^{\overline{x}_{l}}F_{l}(x_{l})dx_{l}}{a-z},
\end{align*}
where we write $a=a_{i}$ for convenience. 
The second equality follows by construction.
Note that $G_{l}^{z}$ is a well-defined CDF if $b(z)\leq b$, where $b=b_{i}$. 

Note that the numerator of $b(z)$ does not depend on the choice of $z$.
Moreover, as $a\leq a_{l}\leq  z_{l}\leq \overline{x}_{l}$ and $G_{l}^{z}(x_{l})=F_{l}(x_{l})$ for any $x_{l}$ with $a\leq x_{l}\leq \overline{x}_{l}$ by assumption, the reservation value under $G_{l}^{z}$ does not depend on $z$ and equals $z_{l}$ by definition. 
For each $z\leq z_{m-1}$ and $i\geq m$ with $z\geq \underline{z}_{i}$, we also let $G_{i}^{z}$ be the solution to the maximization in Lemma \ref{lem: point max} at $z$. 
Lemma \ref{lem: point max} indicates that $G_{i}^{z}(z)=1-c_{i}/(\overline{x}_{i}-z)$ and $z$ is the reservation value under $G_{i}^{z}$.

By definition, we have $F_{l}=G_{l}^{a_{n}}$. 
As $a_{n}\leq z_{n}$ and $b(z)$ is weakly increasing in $z$, the above observation shows that replacing $F_{l}$ with $G_{l}^{z_{n}}$ weakly increases the objective value for the same search order. 
Therefore, we assume $F_{l}=G_{l}^{z_{n}}$, and the objective value under $F$ is calculated as follows: 
\begin{align*}
    & \sum_{k=1}^{n}r_{k}\prod_{i=1}^{k-1}F_{i}(z_{k}) \\
    =& \sum_{k=1}^{m-1}r_{k}\prod_{i=1}^{k-1}F_{i}(z_{k}) 
    + \left\{\prod_{1\leq i<m: i\neq l}F_{i}(z_{n})\right\} F_{l}(z_{n}) \sum_{k=m}^{n}r_{k}\prod_{i=m}^{k-1}F_{i}(z_{n}) \\
    \leq& % <
    \sum_{k=1}^{m-1}r_{k}\prod_{i=1}^{k-1}F_{i}(z_{k}) 
    + \left\{\prod_{1\leq i<m: i\neq l}F_{i}(z_{n})\right\} b(z_{n}) \sum_{k=m}^{n}r_{k}\prod_{i=m}^{k-1} \left( \frac{\overline{x}_{i}-z_{n}-c_{i}}{\overline{x}_{i}-z_{n}} \right),
\end{align*}
where the first equality follows from $z_{m}=\cdots=z_{n}$ and the last inequality follows from Lemma \ref{lem: point max} with inequality $z_{n}=z_{i}\geq \underline{z}_{i}$ for each $i\geq m$. 

Here, suppose that the numerator of $b(z)$ above is $0$. 
Then, it follows from the choice of $a>z_{n}$ that the probability that Pandora opens boxes $i\geq m$ is $0$ under $F$. 
Therefore, for any feasible distribution $G_{i}$ for each box $i\geq m$, if its reservation value does not exceed $z_{m-1}$, replacing $F_{i}$ with $G_{i}$ results in the same objective value. 
In particular, as $\overline{z}_{i}\geq \underline{z}$ for each $i\geq m$, we can pick $G_{i}$ whose reservation value equals $\max\{\underline{z},z_{m-1}\}$. 
If $\underline{z}\geq z_{m-1}$, it implies the existence of optimal CDF profile under which all reservation values are weakly higher than $\underline{z}$, which is a contradiction. 
However, $\underline{z}< z_{m-1}$ contradicts the choice of $m$. 
Therefore, the numerator of $b(z)$ is strictly positive.

The optimality of $F_{i}$ implies that the above inequality on the objective value must hold with equality, because otherwise, we may replace $F_{i}$ with $G_{i}^{z_{n}}$ for all $i\geq m$ and increase the objective value.
Moreover, restricting attention to boxes $m,\dots,n$, Lemma \ref{lem: optimal order} shows that we have $r_{m}(\overline{x}_{m}-\underline{z})/c_{m}\geq \cdots \geq r_{n}(\overline{x}_{n}-\underline{z})/c_{n}$. 
Thus, because $z_{n}<\underline{z}$ and $z_{n}<a$ by assumption, Lemma \ref{lem: joint monotone} implies that the objective is bounded above by
\begin{align*}
    \sum_{k=1}^{m-1}r_{k}\prod_{i=1}^{k-1}F_{i}(z_{k}) 
    + \left\{\prod_{1\leq i<m: i\neq l}F_{i}(z_{n})\right\} G_{l}^{z}(z) \left\{ \sum_{k=m}^{n}r_{k}\prod_{i=m}^{k-1} G_{i}^{z}(z) \right\},
\end{align*}
where $z$ is the maximum value $z$ such that all $z\leq z_{m-1}$, $z\leq \underline{z}$, $z\leq a$, and $b(z)\leq b$ hold. 
Note that there exists such $z$ with $z<a$, which follows because the function $b(z)$ continuously increases towards $\infty$ as $z$ increases towards $a$.

Note that the reservation value of box $l$ under $G_{l}^{z}$ remains to be $z_{l}$. 
Moreover, we have $G_{l}^{z}(z_{k})\geq F_{l}(z_{k})$ for all $k$ by construction.
Therefore, as $F_{i}(z_{n})\leq F_{i}(z)$ for all $i<m$, the above inequality suggests that replacing $F_{l},F_{m},\dots,F_{n}$ with $G_{l}^{z},G_{m}^{z},\dots,G_{n}^{z}$ generates a weakly higher objective value.
If $z=z_{m-1}$, it contradicts the choice of $m$. 
If $z=\underline{z}$, the reservation value of every box is above $\underline{z}$, a contradiction.
Therefore, $z<z_{m-1}$ and $z<\underline{z}$. 
Then, $b(z)=b$ must hold by the choice of $z$. 

Finally, $b(z)=b$ implies that the number of realizations under $G_{l}^{z}$ equals $|\{1,\dots,b_{i}\}|=|\{1,\dots,b_{i},b_{n}\}|-1$, where $b_{i}=b$. 
As the above procedure can be iterated as long as the distribution is not binary, we eventually obtain an optimal binary distribution for box $l$ while keeping other distributions equal. 
Therefore, a binary distribution profile is optimal.
\end{proof}  

Finally, we prove Theorem \ref{thm: information design positive weight}.

%%%%%%%%%%%%%%%%%%%%%%%%%%%%%%%%%%%%%%%%%%%%%%%%%%
%%%%%%%%%%%%%%%%%%%%%%%%%%%%%%%%%%%%%%%%%%%%%%%%%%
\begin{proof}[Proof of Theorem \ref{thm: information design positive weight}]
Take any optimal CDF profile $F$. 
Let $z_{1},z_{2},\dots,z_{n}$ be the associated reservation values.
Lemmas \ref{lem: reservation value bound} and \ref{lem: optimal order} complete the proof if $z_{i}\geq \underline{z}$ for all $i$. 
Therefore, assume $z_{i}<\underline{z}$ for some $i$. 
Lemma \ref{lem: optimal binary} implies that we may assume w.l.g. that each $F_{i}$ is has a binary support $\{l_{i},h_{i}\}$ with $l_{i}\leq h_{i}$.
Assume that Pandora opens boxes in the ascending order of $i$.

Take the maximal index $m$ such that $l_{i}\leq z_{m}$ for all $i\leq m-1$. 
First, suppose that $m<n$. 
Then, the probability that each box $i> m$ is opened is $0$. 
Now, either $z_{m}<\underline{z}$ or $z_{m}\geq \underline{z}$ holds. 
In each case, replacing each $F_{i}$ with a feasible distribution under which the reservation value equals $z_{m}$ or $\underline{z}$ achieves the same 
% we want to say "strcitly higher" in this place but it may be wrong if $l_{i} > \underline{z}$, is it possible ?? 
optimal designer's expected payoff. 
Note that each of these transformations is feasible because Assumption \ref{assum: regularity} implies $\overline{z}_{i}\geq \underline{z}$ for any $i> m$.

Two cases must be considered.
If $z_{m}\geq \underline{z}$ holds, it suggests the existence of an optimal CDF profile under which $z_{i}\geq \underline{z}$ for all $i$, in which case Lemma \ref{lem: reservation value bound} and Lemma \ref{lem: optimal order} completes the proof.
If $z_{m}<\underline{z}$ holds, it suggests the existence of an optimal CDF profile such that $l_{i}\leq z_{n}$ for all $i\leq n-1$.

Therefore, we assume that $m=n$. 
Now, with an abuse of notation, take the largest index $m$ such that $z_{m}=\cdots=z_{n}$. 
We prove by seeking a contradiction that $m=1$ holds. 
Suppose that $m>1$.
Then, we have $z_{m-1}>z_{m}$ by the choice of $m$. 
If $z_{n}\geq \underline{z}$ holds, then again Lemma \ref{lem: reservation value bound} completes the proof. 
Therefore, assume that $z_{n}<\underline{z}$ holds. 
As in the proof of Lemma \ref{lem: optimal binary}, we also assume w.l.g. that there is no other optimal CDF profile such that the reservation values satisfy $z_{1}\geq z_{2}\geq \cdots\geq z_{n}$ and $z_{m-1}=z_{m}=\dots=z_{n}$. 

Here, we claim that we may assume $l_{m-1}=z_{n}$. 
Note that $l_{m-1}<h_{m-1}$. 
Otherwise, we must have $l_{m-1}=h_{m-1}=\mu_{m-1}$ and $z_{m-1}=\underline{z}_{m-1}<\mu_{m-1}$; hence, $l_{m-1}=\mu_{m-1}>z_{m-1}>z_{n}$, which is a contradiction.
Thus, from the feasibility of $F_{m-1}$ implying that
\begin{align*}
    F_{m-1}(z_{n})=F_{m-1}(l_{m-1})=\frac{h_{m-1}-\mu_{m-1}}{h_{m-1}-l_{m-1}},
\end{align*}  
the probability $F_{m-1}(l_{m-1})$ is increasing in $l_{m-1}$. 
Moreover, as each $F_{i}$ with $i<m-1$ is binary such that $l_{i}\leq z_{n}$, the probability $F_{i}(z_{m-1})$ is independent of $z_{m-1}$ as long as $z_{m-1}\geq z_{n}$. 

Let $G_{m-1}$ be the distribution of box $m-1$ that is identified with $\{l, h_{m-1}\}$, where $l$ is the maximum value such that either $l\leq z_{n}$ or the reservation value $z_{m-1}^{l}$ is no lower than $z_{n}$.
One can see that $z_{m-1}^{l}$ is continuously decreasing in $l$, and, thus, such an $l$ exists. 
Then, the above discussion implies that replacing $F_{m-1}$ with $G_{m-1}$ weakly % strictly (?)
increases the objective value. 
As $z_{m-1}^{l}=z_{n}$ contradicts the choice of $m$, we obtain $l=z_{n}$ by the choice of $l$. 
Therefore, we assume from the beginning that $l_{m-1}=z_{n}$ holds. 

Now, note that Lemma \ref{lem: point max} shows $F_{i}(z_{n})\leq 1-c_{i}/(\overline{x}_{i}-z_{n})$ for each $i\geq m$ because $z_{n}=z_{i}\geq \underline{z}_{i}$ holds. 
From the optimality of the given distributions, the objective value equals
\begin{align*}
    & \sum_{k=1}^{n}r_{k}\prod_{i=1}^{k-1}F_{i}(z_{k}) \\
    =& \sum_{k=1}^{m-1}r_{k}\prod_{i=1}^{k-1}F_{i}(z_{k}) 
    + \left\{\prod_{i=1}^{m-1}F_{i}(z_{n})\right\} \frac{h_{m-1}-\mu_{m-1}}{h_{m-1}-z_{n}} \left\{ \sum_{k=m}^{n}r_{k}\prod_{i=m}^{k-1} \left( \frac{\overline{x}_{i}-z_{n}-c_{i}}{\overline{x}_{i}-z_{n}} \right) \right\},
\end{align*}
where, by restricting attention to the set of boxes $i\geq m$, Lemma \ref{lem: optimal order} also implies the inequalities $r_{m}(\overline{x}_{m}-\underline{z})/c_{m}\geq \cdots \geq r_{n}(\overline{x}_{n}-\underline{z})/c_{n}$. 

If $l_{m-1}\geq \underline{z}_{m-1}$ holds, we have $z_{m-1}=\underline{z}_{m-1}$ by the definition of reservation values. 
As $z_{n}\geq l_{m-1}$ by assumption, this means $z_{m-1}=z_{n}$, which is a contradiction. 
Therefore, $l_{m-1}< \underline{z}_{m-1}$ holds, or equivalently, $z_{n}< \underline{z}_{m-1}$. 
Hence, as $\underline{z}_{m-1}<h_{m-1}$ and $\underline{z}_{m-1}\leq \underline{z}$ holds, Lemma \ref{lem: joint monotone} implies that the objective value is bounded above by
\begin{align*}
    \sum_{k=1}^{m-1}r_{k}\prod_{i=1}^{k-1}F_{i}(z_{k}) 
    + \left\{\prod_{i=1}^{m-1}F_{i}(\underline{z}_{m-1})\right\} \frac{h_{m-1}-\mu_{m-1}}{h_{m-1}-\underline{z}_{m-1}} \left\{ \sum_{k=m}^{n}r_{k}\prod_{i=m}^{k-1} \left( \frac{\overline{x}_{i}-\underline{z}_{m-1}-c_{i}}{\overline{x}_{i}-\underline{z}_{m-1}} \right) \right\}, 
\end{align*}
where we also use $F_{i}(\underline{z}_{m-1})\geq F_{i}(z_{n})$ for each $i<m$. 

Finally, let $G_{m-1}$ be the feasible distribution for box $m-1$ with support $\{\underline{z}_{m-1},h_{m-1}\}$. 
Then, for each $i\geq m$, let $G_{i}$ be the solution to the maximization in Lemma \ref{lem: point max} at $\underline{z}_{m-1}$, where we can actually apply Lemma \ref{lem: point max} from $\underline{z}_{m-1}> z_{n}= z_{i}\geq \underline{z}_{i}$. 
As $F_{i}(z_{m-1})=F_{i}(\underline{z}_{m-1})$ for each $i<m$ from $l_{i}\leq z_{n}\leq \underline{z}_{m-1}$, the above bound for the objective value implies that the CDF profile $F_{1},\dots,F_{m-2},G_{m-1},\dots,G_{n}$ is also optimal. % strcitly increases the designer's payoff (?)
Moreover, the associated reservation values satisfy $z_{m-1}=z_{m-2}=\cdots=z_{n}$. 
This is a contradiction.

In summary, we have $m=1$, which implies $z_{i}\geq \underline{z}$ for all $i$. 
Thus, Lemma \ref{lem: reservation value bound} proves that $\overline{F}$ is optimal. % uniquely optimal upto the quotient set by ignoring the last box. 
Since $\overline{F}$ induces $\overline{v}$ when Pandora opens boxes in the descending order of $r_{i}(\overline{x}_{i}-\underline{z})/c_{i}$, Lemma \ref{lem: optimal order} completes the proof of the first statement. 
The second statement on the uniqueness follows from the proof of Lemma \ref{lem: optimal order}.
\end{proof}

%%%%%%%%%%%%%%%%%%%%%%%%%%%%%%%%%%%%%%%%%%%%%%%%%%%%%%%%%%%%%%%%%%%%%%%%%%%%%%
\section{Proofs of Propositions \ref{prop: face}, \ref{prop: optimal}, \ref{prop: identified set}, and \ref{prop: comparative statics}}\label{sec:appendixb}

\begin{proof}[Proof of Proposition \ref{prop: face}]
    Since we show in the proof of Theorem \ref{thm: feasible search behavior} that the convex hull of $V$ is exactly equal to $V^{*}$, the proof is immediate from Lemma \ref{lem: vertex any direction}.
\end{proof}

\begin{proof}[Proof of Proposition \ref{prop: optimal}] 
    To see the former statement, note that Theorem \ref{thm: feasible search behavior} shows that the feasible search behaviors form a polytope, which is compact. 
    Since $f$ is upper-continuous, there exists $v$ that maximizes $f(v)$ among all feasible search behaviors. 
    Then, the latter statement in Theorem \ref{thm: feasible search behavior} completes the proof. 

    To prove the remaining statement, suppose that $f(v)=r\cdot v$ for some $r$.
    Take any $\pi\in\Pi_{r}$. 
    Proposition \ref{prop: face} shows that it is optimal for the designer to let Pandora induce $v$ such that, for some index $i$, $v_{\pi(j)}=\overline{v}_{\pi,j}$ for all $j\leq i$ and $v_{\pi(j)}=\underline{v}_{\pi,j}$ for all $j>i$. 
    Then, $v_{\pi(1)}\geq v_{\pi(2)}\geq \cdots\geq v_{\pi(n)}$ implies that Pandora opens boxes following the order $\pi$. 

    % Full implementation intro
    It remains to check that the search behavior $v$ described above is fully implementable. 
    Suppose that Assumption \ref{assum: regularity} holds with strict inequality, i.e. $\overline{z}>\underline{z}$.
    Assume as always that $\pi$ is the identity. 
    
    % Defining well-defined CDFs
    For each box $j$, consider the following function $F_{j}(y_{j};\varepsilon_{j}, \Delta_{j})$ parameterized by $\varepsilon_{j}>0$ and $\Delta_{j}>0$, each of which will be a sufficiently small number:
    \begin{align*}
    F_{j}(y_{j}; \varepsilon_{j}, \Delta_{j}) = 
    \begin{cases}
        q_{j}+\eta_{j}(\varepsilon_{j},\Delta_{j}) & \quad \text{if} \quad y_{j}<\underline{z}+\varepsilon_{j},\\
        p_{j}-\Delta_{j} & \quad \text{if} \quad \underline{z}+\varepsilon_{j}\leq y_{j}< \overline{x}_{j},\\
        1 & \quad \text{if} \quad \overline{x}_{j} \leq y_{j},
    \end{cases}
    \end{align*}     
    where $\eta_{j}(\varepsilon_{j},\Delta_{j})$ is defined such that the function as a distribution has mean $\mu_{j}$ for each parameter pair $\varepsilon_{j}$ and $\Delta_{j}$. 
    Note that $\eta_{j}(\varepsilon_{j},\Delta_{j})\geq 0$. 
    As discussed earlier, one can show that $p_{j}\geq q_{j}$ if and only if $\overline{z}\geq \underline{z}$. 
    Therefore, we have $p_{j}>q_{j}$ when $\overline{z}> \underline{z}$, which implies that $F_{j}(y_{j}; \varepsilon_{j}, \Delta_{j})$ is a well-defined, feasible CDF for sufficiently small $\varepsilon_{j}$ and $\Delta_{j}$. 

    % Constructing parameters
    We construct a profile of parameters such that the induced CDF profile and any optimal search strategy approximately induce the search behavior $v$. 
    Fix any sufficiently small numbers $\varepsilon_{+}$ and $(\varepsilon_{j})_{j>i}$ such that
    \begin{align*}
        \varepsilon_{+}>\varepsilon_{i+1}>\varepsilon_{i+2}>\cdots>\varepsilon_{n}>0.
    \end{align*}
    A computation shows that the reservation value $z_{j}$ under $F_{j}(y_{j}; \varepsilon_{+}, \Delta_{j})$ satisfies $z_{j}\geq \underline{z}+\varepsilon_{+}$ if and only if $\Delta_{j}\geq \underline{\Delta}_{j+}$, where 
    \begin{align*}
        \underline{\Delta}_{j+} \equiv \frac{c_{j}}{\overline{x}_{j}-\underline{z}-\varepsilon_{+}} - \frac{c_{j}}{\overline{x}_{j}-\underline{z}},
    \end{align*}
    which is arbitrarily close to zero if $\varepsilon_{+}$ is small enough. 
    Hence, since the reservation value is strictly increasing and continuous in $\Delta_{j}$, there exists a profile of small positive numbers $(\hat{\Delta}_{j})_{j\leq i}$ such that $\hat{\Delta}_{j}\geq \Delta_{j+}$ for all $j\leq i$ and the associated reservation values satisfy
    \begin{align*}
        z_{1}>z_{2}>\cdots >z_{i}>\underline{z}+\varepsilon_{+}.
    \end{align*}
    Likewise, for each box $j>i$, the reservation value $z_{j}$ under $F_{j}(y_{j}; \varepsilon_{j}, \Delta_{j})$ satisfies $z_{j}\geq \underline{z}+\varepsilon_{j}$ if and only if $\Delta_{j}\geq \underline{\Delta}_{j}$ for an appropriate threshold $\underline{\Delta}_{j}$, where $\underline{\Delta}_{j+}>\underline{\Delta}_{j}$ and $\underline{\Delta}_{i+1}>\underline{\Delta}_{i+2}>\cdots >\underline{\Delta}_{n}$.
    Thus, there exists $(\hat{\Delta}_{j})_{j> i}$ such that $\Delta_{j+}\geq \hat{\Delta}_{j}\geq \Delta_{j}$ for all $j> i$ and
    \begin{align*}
        \underline{z}+\varepsilon_{+}>z_{i+1}>\underline{z}+\varepsilon_{i+1}
        > z_{i+2} > \underline{z}+\varepsilon_{i+2}
        > \cdots 
        > z_{n} > \underline{z}+\varepsilon_{n}.
    \end{align*}
    Then, consider a CDF profile $F$ such that $F_{j}(y_{j})=F_{j}(y_{j};\varepsilon_{j},\hat{\Delta}_{j})$ for each box $i$ and value $y_{j}$, where we set $\varepsilon_{j}=\varepsilon_{+}$ for all $j\leq i$. 

    % Show full implementation
    Let $\hat{v}$ be any search behavior induced by $F$. 
    Under $F$, the reservation values are strictly ordered, and thus, any optimal search strategy must open boxes in the ascending order of indices. 
    Moreover, Pandora will never be indifferent between stopping and continuing search by construction, and therefore, 
    \begin{align*}
        \hat{v}_{j} 
        = \prod_{k<j} G_{k}(z_{j}),
    \end{align*}
    for each box $j$. 
    The construction of the CDF profile also implies that
    \begin{align*}
        G_{k}(z_{j}) 
        = 
        \begin{cases}
            G_{k}(\underline{z}+\varepsilon_{+}) = p_{k}-\hat{\Delta}_{k} \quad & \text{if} \quad j\leq i,\\
            G_{k}(\underline{z}+\varepsilon_{k}) = q_{k}+\eta_{k}(\varepsilon_{k},\hat{\Delta}_{k}) \quad & \text{if} \quad j> i,
        \end{cases}
    \end{align*}
    for all $j$ and $k<j$, where the second case follows from $z_{j}<\underline{z}+\varepsilon_{k}$ and $G_{k}$ being constant in the region $[0,\underline{z}+\varepsilon_{k}]$. 
    Finally, note that, if $\varepsilon_{+}$ and $\varepsilon_{-}$ are arbitrarily small, then $\hat{\Delta}_{j}$ and $\eta_{j}(\varepsilon_{j},\hat{\Delta}_{j})$ can also be made arbitrarily small by construction. 
    Therefore, $\hat{v}$ is arbitrarily close to $v$, which completes the proof.
\end{proof}

\begin{proof}[Proof of Proposition \ref{prop: identified set}]
    Note that for any symmetric primitive $\gamma\in \Gamma$, we have $p_{i}=p(\gamma)$ and $q_{i}=0$ for all $i$. 
    Therefore, up to a permutation of boxes, any vertex in $V^{*}(\gamma)$ takes the form $v(\gamma,i)=(1,p(\gamma),p(\gamma)^{2},\dots,p(\gamma)^{i},0,\dots,0)$ for some index $i$. 
    It is enough to show that for any parameter $\gamma\in \Gamma^{*}$, we have $v\in V^{*}(\gamma)$ if and only if $v$ satisfies the set of inequalities given in the statement. 

    % If part
    Suppose $v\in V^{*}(\gamma)$. 
    Clearly, we must have $\sum_{i=1}^{n}v_{i}\geq 1$. 
    Take any index $k$ and consider a weight $r$ such that $r_{j}=1$ if $j\leq k$ and $r_{j}=0$ otherwise. 
    Then, Proposition \ref{prop: face} applied to the vertex $v(\gamma,k)$ and the weight $r$ implies the $k$-th inequality in the statement.  

    % Only-if part
    Suppose next the set of inequalities. 
    By seeking a contradiction, let $v\notin V^{*}(\gamma)$. 
    Since Theorem \ref{thm: feasible search behavior} shows that $V^{*}(\gamma)$ is a polytope, the separating hyperplane theorem implies that there exists a non-zero weight $r$ such that $r\cdot v>r\cdot u$ for all $u\in V^{*}(\gamma)$. 
    Since $V^{*}(\gamma)$ is symmetric, assume w.l.g. $r_{1}\geq r_{2}\geq \cdots \geq r_{n}$. 
    If $r_{1}\leq 0$, then we have 
    \begin{align*}
        r_{1} 
        \geq r_{1}\cdot \sum_{i=1}^{n}v_{i} 
        \geq \sum_{i=1}^{n}r_{i}v_{i},
    \end{align*}
    where the second inequality follows from $r_{1}\geq r_{i}$ for all $i$. 
    However, for $u=(1,0,\dots,0)\in V^{*}(\gamma)$, we have $r_{1}=r\cdot u$, which is a contradiction. 
    Therefore, assume $r_{1}>0$. 

    Let $i$ be the largest index such that $r_{i}\geq 0$. 
    Note that $r_{1}>0$ implies that it is well-defined.
    Now, note that $q_{i}=0$ for all $i$. 
    Therefore, if $r_{k}<0$ for some box $k>i$, Proposition \ref{prop: face} and $k\neq 1$ imply that replacing $r_{k}$ with zero does not change the value $\max_{u\in V^{*}(\gamma)}\{r\cdot u\}$. 
    Since this operation only increases $r\cdot v$, we assume w.l.g. that $r_{k}=0$ for all $k>i$. 
    
    Note that, for each $k$, we have $r_{k}-r_{k+1}\geq 0$. 
    Therefore, the set of inequalities in the statement implies that, for each $k\in \{1,\dots,i\}$, we have 
    \begin{align*}
        % (r_{1}-r_{2}) \sum_{j\leq 1}v_{j} &\leq (r_{1}-r_{2}) \sum_{j\leq 1}p(\gamma)^{j-1}, \\
        % (r_{2}-r_{3}) \sum_{j\leq 2}v_{j} &\leq (r_{1}-r_{2}) \sum_{j\leq 2}p(\gamma)^{j-1}, \\
        % & \vdots \\
        (r_{k}-r_{k+1}) \sum_{j\leq k}v_{j} &\leq (r_{k}-r_{k+1}) \sum_{j\leq k}p(\gamma)^{j-1}.
    \end{align*}
    Note that $r_{i+1}=0$. 
    Hence, by summing up these inequalities, we obtain
    \begin{align*}
        \sum_{j=1}^{i}r_{j}v_{j} \leq \sum_{j=1}^{i}r_{j}p(\gamma)^{j-1}.
    \end{align*}
    The right-hand side of this inequality is the inner product of $r$ and the vertex $v(\gamma,i)$. 
    This is a contradiction, and therefore, we must have $v\in V^{*}(\gamma)$. 
\end{proof}

\begin{proof}[Proof of Proposition \ref{prop: comparative statics}]
    Recall that $\underline{z}_{i}=\mu_{i}-c_{i}$ and $\underline{z}=\max_{k}\{\underline{z}_{k}\}$. 
    Therefore, 
    \begin{align*}
        p_{j} = 1-\frac{c_{j}}{\overline{x}_{j}-\max_{k}\{\mu_{k}-c_{k}\}} 
        \quad \text{and} \quad
        q_{j} = 1-\frac{\mu_{j}-c_{j}-\overline{x}_{j}}{\max_{k}\{\mu_{k}-c_{k}\}-\overline{x}_{j}},
    \end{align*}
    for each $j$. 
    Note that $\mu_{i}=\lambda_{i}(\overline{x}_{i}-\underline{x}_{i})+\underline{x}_{i}$ is increasing in $\lambda_{i}$.
    Therefore, for any box $j$, if $\mu_{i}-c_{i}=\max_{k}\{\mu_{k}-c_{k}\}$ for some box $i$, then $p_{j}$ decreases and $q_{j}$ increases when $\lambda_{i}$ goes up.
    If $\mu_{i}-c_{i}\neq \max_{k}\{\mu_{k}-c_{k}\}$, then $q_{i}$ decreases while all the other variables $p_{j}$ and $q_{j}$ are kept constant when $\lambda_{i}$ goes up. 

    Since $V^{*}$ is the convex hull of the sets $V_{\pi}$ for all $\pi$, it is enough to prove the statement for each $V_{\pi}$ instead of directly looking at $V^{*}$.
    Take any $v\in V_{\pi}$ for some $\pi$. 
    By definition, we have $v_{\pi(i)} \geq \underline{v}_{\pi,i}$ and
    \begin{align*} 
        \sum_{j\leq i} \frac{c_{\pi(j)}}{\overline{x}_{\pi(j)}-\underline{z}}v_{\pi(j)} &\leq \sum_{j\leq i} \frac{c_{\pi(j)}}{\overline{x}_{\pi(j)}-\underline{z}}\overline{v}_{\pi,j},
    \end{align*}
    for each $i$. 
    By the definition of $\overline{v}_{\pi,j}$ and $\underline{v}_{\pi,j}$ together with the above discussion, these two inequalities are also satisfied for a smaller value of $\lambda_{i}$ when $\underline{z}_{i}=\underline{z}$ and for a larger value of $\lambda_{i}$ when $\underline{z}_{i}\neq \underline{z}$. 
    Therefore, if $\underline{z}_{i}=\underline{z}$, then $V_{\pi}$ expands when $\lambda_{i}$ goes down. 
    Likewise, if $\underline{z}_{i}\neq \underline{z}$, then $V_{\pi}$ expands when $\lambda_{i}$ goes up.

    Finally, suppose $\lambda_{i}=\lambda$, $\overline{x}_{i}=\overline{x}$, $\underline{x}_{i}=\underline{x}$, and $c_{i}=c$ for all $i$. 
    Then, $\mu_{i}=\mu$ for each box $i$, where $\mu=\lambda\overline{x}+(1-\lambda)\underline{x}$. 
    Therefore, $p_{i}=1-c/(\overline{x}-\mu+c)$ and $q_{i}=0$ for all $i$. 
    Note that $p_{i}$ is decreasing in $\lambda$ and in $c$.
    Therefore, an analogous argument as in the above paragraph finishes the proof.
\end{proof}

%%%%%%%%%%%%%%%%%%%%%%%%%%%%%%%%%%%%%%%%%%%%%%%%%%%%%%%%%%%%%%%%%%%%%%%%%%%%%%
% Change theorem numbering to A.1, A.2, etc.
\renewcommand{\theproposition}{C.\arabic{proposition}}
\setcounter{proposition}{0}
\section{Proofs of Theorem \ref{thm: joint distribution} and Proposition \ref{prop: joint distribution vertex}}\label{sec:appendixc} 

% Feasible distributions 
Here, we prove Theorem \ref{thm: joint distribution} and Proposition \ref{prop: joint distribution vertex} under general priors. 
Consider a general prior distribution $F_{i}^{0}$ over the interval $[\underline{x}_{i},\overline{x}_{i}]$ with mean $\mu_{i}$. 
In this case, \citet{blackwell1953equivalent} shows that a posterior mean distribution $F_{i}$ is feasible if and only if $F_{i}$ is a \textit{mean-preserving contraction} of $F_{i}^{0}$. 
That is, 
\begin{align*}
    \int_{\underline{x}_{i}}^{y_{i}}F_{i}(x_{i})dx_{i} \leq \int_{\underline{x}_{i}}^{y_{i}}F^{0}_{i}(x_{i})dx_{i},
\end{align*}
for each $y_{i}\in [\underline{x}_{i},\overline{x}_{i}]$, with equality at $y_{i}=\overline{x}_{i}$.\footnote{In our original setting of binary state space, we have $F^{0}_{i}(x_{i})=1-\lambda_{i}$ for all $x_{i}$ with $\underline{x}_{i}\leq x_{i}<\overline{x}_{i}$. 
Then, it is easy to see that $F_{i}$ is a mean-preserving contraction of $F^{0}_{i}$ if and only if the expectation under $F_{i}$ equals $\mu_{i}$, as explained in Section \ref{sec:model}.}
For being precise, let $\mathcal{F}_{i}$ be the set of all mean-preserving contractions of $F_{i}^{0}$ for each $i$. 

% Description
First, we prove the following lemma, which partly generalizes Lemma \ref{lem: point max}.

% -------------------------------------------
\begin{lemma}\label{lem: Lemma 1 general}
    Assume $\underline{z}_{i}=\underline{z}$ for all $i$.
    Then, for each box $i$, a binary-support distribution $F^{*}_{i}$ maximizes $F_{i}(z_{i})$ subject to $F_{i}\in \mathcal{F}_{i}$ where $z_{i}$ is the reservation value under $F_{i}$. 
    Moreover, $F^{*}_{i}$ has support $\{\underline{z},h_{i}\}$ for some $h_{i}\geq \underline{z}$ and has the reservation value of $\underline{z}$.
\end{lemma}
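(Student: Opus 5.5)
The plan is to recast the problem in terms of the upper tail of $F_i$. Since $F_i$ is a mean-preserving contraction of $F_i^0$, its upper-tail integrals are dominated by those of the prior. We will derive an upper bound on $F_i(z_i)$ that depends only on $F_i^0$, $c_i$ and $\underline{z}$, and then show that a binary distribution obtained by splitting the prior at a quantile attains it. Throughout, $\underline{z}=\underline{z}_i=\mu_i-c_i$.

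\emph{Step 1: a lower bound on reservation values.} For any $F_i\in\mathcal{F}_i$ with reservation value $z_i$, we have $c_i=\int (y-z_i)^+dF_i\geq \int (y-z_i)dF_i=\mu_i-z_i$. Hence $z_i\geq \underline{z}$, exactly as in the proof of Lemma \ref{lem: lower bound}.

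\emph{Step 2: a quantile bound.} Write $\alpha=1-F_i(z_i)$ and let $\Phi(\alpha)$ denote the integral of $y$ over the top $\alpha$-quantile of $F_i^0$. Then $\Phi$ is concave on $[0,1]$, with $\Phi(0)=0$ and $\Phi(1)=\mu_i$. The mass of $F_i$ above $z_i$ is exactly $\alpha$, so
\begin{align*}
c_i=\int_{y>z_i}y\,dF_i-\alpha z_i\leq \Phi(\alpha)-\alpha z_i\leq \Phi(\alpha)-\alpha\underline{z}.
\end{align*}
The first inequality holds because the top-$\alpha$ quantile integral of a mean-preserving contraction is dominated by that of the prior, which is the standard quantile (Lorenz) form of the convex order. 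The second inequality uses Step 1. Now define $\psi(\alpha)=\Phi(\alpha)-\alpha\underline{z}$. This function is concave with $\psi(0)=0<c_i$ and $\psi(1)=\mu_i-\underline{z}=c_i$. Therefore $\{\alpha:\psi(\alpha)\geq c_i\}$ is an interval $[\alpha^*,1]$ with $\psi(\alpha^*)=c_i$, and every feasible $F_i$ satisfies $F_i(z_i)\leq 1-\alpha^*$.

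\emph{Step 3: attainment.} Define $F^*_i$ by pooling the top $\alpha^*$-quantile of $F_i^0$ into one atom and the rest into another. This is a mean-preserving contraction of $F_i^0$, so $F^*_i\in\mathcal{F}_i$. The high atom is $h_i=\Phi(\alpha^*)/\alpha^*$. Using $\Phi(\alpha^*)=c_i+\alpha^*\underline{z}$ and $\mu_i=c_i+\underline{z}$, the low atom is
\begin{align*}
\frac{\mu_i-\Phi(\alpha^*)}{1-\alpha^*}=\frac{\underline{z}-\alpha^*\underline{z}}{1-\alpha^*}=\underline{z}.
\end{align*}
At $z=\underline{z}$ we get $\int(y-\underline{z})^+dF^*_i=\alpha^*(h_i-\underline{z})=\Phi(\alpha^*)-\alpha^*\underline{z}=c_i$, so the reservation value is $\underline{z}$. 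Moreover $F^*_i(\underline{z})=1-\alpha^*$, so the bound in Step 2 is attained, and $h_i\geq\underline{z}$ follows from $h_i(\alpha^*)\geq$ the low atom.

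\emph{Main obstacle.} The main obstacle is Step 2: rigorously justifying the quantile inequality for mean-preserving contractions, including the case where $F_i$ has an atom at $z_i$. In that case $\{y>z_i\}$ carries mass exactly $\alpha$ by right-continuity, so the first display is an exact identity. The remaining work is to pass from the integrated-CDF definition of $\mathcal{F}_i$ to the top-quantile dominance $\int_{1-\alpha}^1 F_i^{-1}\leq \int_{1-\alpha}^1 (F_i^0)^{-1}$. I would do this by a short Legendre/convex-conjugate argument, or else cite the standard equivalence. Degenerate cases, such as $\alpha^*=1$ or the case where $F_i^0$ has an atom at the quantile cut, are handled by splitting that atom proportionally.
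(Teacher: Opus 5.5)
Your proof is correct, and it takes a genuinely different route from the paper. The paper starts from a maximizer, whose existence it takes for granted. It pools that maximizer's posterior means below and above $z_i$ into two atoms $l_i\le\underline{z}\le z_i<h_i$, keeping the reservation value and $F_i(z_i)$ unchanged. It then argues by contradiction: if $z_i>\underline{z}$, then $l_i<\underline{z}$. Raising the low atom slightly while increasing its mass gives a further mean-preserving contraction whose reservation value still lies above the new low atom, so $F_i(z_i)$ strictly increases.

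You instead prove a global bound, $F_i(z_i)\le 1-\alpha^*$ for every $F_i\in\mathcal{F}_i$, from the Lorenz form of the convex order together with $z_i\ge\underline{z}$. You then attain the bound with a quantile split of the prior.

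Your route buys several things. Existence of a maximizer comes for free. You get an explicit characterization of $p_i^*=1-\alpha^*$ and of $h_i$ (the mean of the top $\alpha^*$-quantile of $F_i^0$) as the root of a scalar equation, whereas the paper remarks that $F^*$ is generally not available in closed form. The equality case of your chain, where $\alpha=\alpha^*>0$ forces $z_i=\underline{z}$, shows that every maximizer has reservation value $\underline{z}$. You also avoid a subtlety in the paper's pooling step. Pooling $\{y\ge z_i\}$ into the high atom yields $F_i(z_i-)$ rather than $F_i(z_i)$ when $F_i$ has an atom at $z_i$, so one should pool $\{y>z_i\}$ instead. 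The paper's route, for its part, needs nothing beyond the definition of reservation values and a local perturbation.

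The step you flag as the main obstacle closes in two lines. Set $t^*=(F_i^0)^{-1}(1-\alpha)$. Then $\Phi(\alpha)=\alpha t^*+\int(y-t^*)^+dF_i^0$. Since $\{y>z_i\}$ has $F_i$-mass exactly $\alpha$, you also have $\int_{y>z_i}y\,dF_i\le\alpha t^*+\int(y-t^*)^+dF_i$. Finally, the stop-loss dominance $\int(y-t)^+dF_i\le\int(y-t)^+dF_i^0$ for all $t$ follows directly from the integrated-CDF definition of $\mathcal{F}_i$, using equality of the total integrals at $\overline{x}_i$. Chaining these gives $\int_{y>z_i}y\,dF_i\le\Phi(\alpha)$, with no separate citation needed.
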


\begin{proof}[Proof of Lemma \ref{lem: Lemma 1 general}]
    To prove the first claim, take any distribution $F_{i}$ that solves the maximization. 
    Let $z_{i}$ be its reservation value. 
    Then, construct $F^{*}_{i}$ from $F_{i}$ by pooling values above and below $z_{i}$, respectively. 
    Formally, we have
    \begin{align*}
        F^{*}_{i}(y_{i}) 
        = 
        \begin{cases}
            0 \quad &\text{if} \quad y_{i} < l_{i} \\
            \tilde{p}_{i} \quad &\text{if} \quad l_{i}\leq x_{i}< h_{i}\\
            1 \quad &\text{if} \quad h_{i}\leq y_{i}
        \end{cases},
    \end{align*}
    where $l_{i}=\mathbb{E}_{y_{i}\sim F_{i}}[y_{i}|y_{i}< z_{i}]$ and $h_{i}=\mathbb{E}_{y_{i}\sim F_{i}}[y_{i}|y_{i}\geq z_{i}]$, with an appropriate $\tilde{p}_{i}$ such that $F^{*}_{i}$ has mean $\mu_{i}$. 
    Since $F_{i}\in\mathcal{F}_{i}$, we have $F^{*}_{i}\in\mathcal{F}_{i}$ by construction. 
    The construction also implies $F^{*}_{i}$ equals $z_{i}$. 
    Hence, $F^{*}_{i}$ also solves the maximization problem. 
    Note that $F^{*}_{i}$ has a binary support $\{l_{i},h_{i}\}$.

    To see the second claim, note that $l_{i}\leq \underline{z}$. 
    Then, it follows from the definition of reservation values that $z_{i}=\underline{z}$ if and only if $l_{i}=\underline{z}$. 
    Seeking a contradiction, assume $z_{i}>\underline{z}$. 

    Now, consider an alternative binary-support distribution $F^{**}_{i}$ that is characterized by a triplet $(l_{i}+\Delta_{l}, h_{i}; \tilde{p}_{i}+\Delta_{p})$, where $\Delta_{l}$ is a small enough positive number and $\Delta_{p}$ is defined so that $F^{**}_{i}$ has mean $\mu_{i}$. 
    Note that $\Delta_{p}>0$ by construction.
    If $\Delta_{l}$ is small enough, $\Delta_{p}$ is also sufficiently small, and hence $F^{**}_{i}$ is a well-defined CDF. 
    Moreover, using the facts that $F^{**}_{i}$ and $F^{*}_{i}$ have the same mean and that $\Delta_{l},\Delta_{p}>0$, one can see that $F^{**}_{i}$ is a mean-preserving contraction of $F^{*}_{i}$. 
    Therefore, $F^{**}_{i}\in\mathcal{F}_{i}$.

    Finally, let $z^{**}_{i}$ be the reservation value under $F^{**}_{i}$. 
    It follows that $z^{**}_{i}$ is arbitrarily close to $z_{i}$ when $\Delta_{l}$ and $\Delta_{p}$ are sufficiently small, hence $z^{**}_{i}>l_{i}+\Delta_{l}$. 
    This means that we have $F^{**}_{i}(z^{**}_{i})=\tilde{p}_{i}+\Delta_{p}>\tilde{p}_{i}=F^{*}_{i}(z_{i})$, a contradiction. 
    Therefore, $z_{i}=\underline{z}$ and thus $l=\underline{z}$.
\end{proof}

% Notation
Here, we introduce some notation for later use. 
For the distribution $F^{*}_{i}$ that we derive in Lemma \ref{lem: Lemma 1 general}, define $p^{*}_{i} = F^{*}_{i}(\underline{z})$. 
Then, let $v^{*}_{i}=\prod_{j<i}p^{*}_{j}$. 
Lemmas \ref{lem: point max} and \ref{lem: Lemma 1 general} imply $p_{i}^{*}=p_{i}$ when the priors are binary.
In general, the feasible set of distributions is narrower than under binary priors, hence $p_{i}^{*}\leq p_{i}$. 
Abusing notation, let $\overline{z}_{i}$ denote the reservation value under the prior $F_{i}^{0}$. 
Define $\overline{z}=\min_{i}\{\overline{z}_{i}\}$. 
When each $F_{i}^{0}$ has binary support on $\{\underline{x}_{i},\overline{x}_{i}\}$, both $\overline{z}_{i}$ and $\overline{z}$ coincide with the parametric values introduced in the main text. 

% Effective values
We rely on the results in \citet{armstrong2017ordered} and \citet{choi2018consumer} to prove our result. 
For a fixed distribution $F_{i}$ and its reservation value $z_{i}$ for box $i$, the \textit{effective value} is a random variable $w_{i}=\min\{y_{i},z_{i}\}$ with $y_{i}$ following $F_{i}$. 
\citet{armstrong2017ordered} and \citet{choi2018consumer} show that, given the realization $y_{i}$ of each box $i$, Pandora chooses a box with one of the highest effective values under optimal strategies. 
We typically use $H_{i}$ to denote the induced distribution of effective values. 

% -------------------------------------------
\begin{lemma}\label{lem: lower bound of choice}
    Suppose that $\underline{z}_{i}=\underline{z}$ for all $i$.
    Then, for any feasible pair $(v,d)$, we have $d_{i}\geq (1-p_{i}^{*})\cdot v_{i}$ for all $i$.
\end{lemma}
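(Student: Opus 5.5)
Conditional on box $i$ being opened, I will show that box $i$ is eventually chosen with probability at least $1-p_i^{*}$; integrating over the event that box $i$ is opened then gives $d_i\geq(1-p_i^{*})v_i$. Fix a feasible profile $F$ with reservation values $z_1,\dots,z_n$ and an optimal (possibly mixed) strategy inducing $(v,d)$. Since both $v$ and $d$ are linear in the mixing weights, it suffices to treat pure strategies consistent with Pandora's rule. The natural events to compare are ``box $i$ is opened'' and ``the realization $y_i$ is strictly above every effective value of the other boxes that could still compete with it.''

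The key step is a conditional bound. Condition on the history up to the moment box $i$ is opened, and let $M$ be the maximum sampled value so far. By the selection rule, box $i$ has the highest reservation value among unopened boxes at that time, and by the stopping rule, search continued, so $M\leq z_i$. I will use the effective-value characterization of \citet{armstrong2017ordered} and \citet{choi2018consumer}: the chosen box has the highest effective value $w_k=\min\{y_k,z_k\}$. Every box $k$ opened before $i$ satisfies $w_k\leq y_k\leq M\leq z_i$. Every box opened after $i$ has $z_k\leq z_i$, so $w_k\leq z_i$. Hence, whenever $y_i>z_i$, box $i$ has the unique highest effective value and is chosen. The tie cases $y_i=z_i$ need care, so I will work with the strict event $y_i>z_i$ and its probability $1-F_i(z_i)$. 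Because $y_i$ is independent of the history, this gives, conditional on opening $i$, $\Pr(\text{choose } i)\geq 1-F_i(z_i)$.

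Lemma \ref{lem: Lemma 1 general} states that $F_i^{*}$ maximizes $F_i(z_i)$ over $\mathcal{F}_i$ and has reservation value $\underline{z}$. Thus $F_i(z_i)\leq F_i^{*}(\underline{z})=p_i^{*}$ for every feasible $F_i$, and so $1-F_i(z_i)\geq 1-p_i^{*}$. Multiplying by the probability of opening $i$ and summing over histories yields $d_i\geq (1-p_i^{*})v_i$ for the pure strategy, and the mixed case follows by linearity.

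The main obstacle is the tie-breaking subtlety in the effective-value argument. I must make sure that on $\{y_i>z_i\}$ no other box can attain an effective value equal to $w_i=z_i$ and be selected instead. A previously opened box can have $y_k=M=z_i$, and a later box can have $z_k=z_i$ with $y_k\geq z_k$. In the first case, Pandora stops immediately after seeing $y_i>z_i\geq$ every remaining reservation value, so later boxes are never opened. Her final choice is then the opened box with the highest expected reward, and $y_i>M$ makes this box $i$ uniquely. I will therefore argue directly from Pandora's rule rather than citing the effective-value lemma, which sidesteps the issue.
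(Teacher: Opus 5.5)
Your proof is correct. It reaches the intermediate bound $d_i\ge(1-F_i(z_i))\,v_i$ by a different route from the paper, and the final step through Lemma \ref{lem: Lemma 1 general} is the same.

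\textbf{The paper's route.} It works with effective values $\omega_k=\min\{y_k,z_k\}$ and the characterization of \citet{armstrong2017ordered} and \citet{choi2018consumer}. It bounds $d_i$ below by $\int(1-H_i)\,d\overline{H}_i$, the probability that $\omega_i$ beats the other effective values. It then replaces $1-H_i$ by $1-F_i\ge 1-F_i(z_i)$ below $z_i$, and uses $v_i\le\prod_{j<i}F_j(z_i)=\overline{H}_i(z_i)$.

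\textbf{Your route.} You argue directly from Pandora's rule at the history where box $i$ is opened. There $M\le z_i$, and every unopened box has reservation value at most $z_i$. So on $\{y_i>z_i\}$ she must stop, and box $i$ is the unique best opened box. Independence of $y_i$ from that history then gives the bound.

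\textbf{Ties.} Your route is more elementary, and, as you anticipated, it is the right way to handle ties. Your second paragraph's claim that box $i$ then has the \emph{unique} highest effective value is false in general, but your last paragraph correctly replaces it. This matters. Under $F^*$ itself every effective value equals $\underline{z}$ almost surely, so the strict-winner probability $\int(1-H_i)\,d\overline{H}_i$ is zero even though $(1-p_i^*)v_i>0$. The paper's chain breaks at the atom of $\overline{H}_i$ at $z_i$. Depending on whether $z_i$ is included in the range of integration, either the step replacing $1-H_i$ by $1-F_i$ fails, or the final comparison with $v_i$ fails. Your non-exposure argument never meets this issue.

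\textbf{Extension to prices.} The paper's formulation buys uniformity with the effective-value language it reuses for price-adjusted values in Lemma \ref{lem: price and pair}. Your argument transfers there just as easily. At the history where box $i$ is opened, every earlier net value $y_k-t_k$ and every remaining price-adjusted reservation value is at most $z_i-t_i$. So $y_i>z_i$ still forces stopping and choosing box $i$.
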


\begin{proof}
    Suppose that a distribution profile $F$ induces $(v,d)$. 
    Let $z$ and $H$ be the induced reservation values and distribution profile of effective values, respectively. 
    Assume w.l.g. $z_{1}\geq z_{2}\geq \dots \geq z_{n}$.
    Let $\overline{H}_{i}(\omega_{i})$ denote the probability that $\omega_{j} \leq \omega_{i}$ for all $j \neq i$, with strict inequality for all $j < i$.
    Note that 
    \begin{align*}
        v_{i} 
        \leq \prod_{j<i} F_{j}(z_{i}) 
        = \prod_{j<i} H_{j}(z_{i}) 
        = \prod_{j\neq i} H_{j}(z_{i})
        = \overline{H}_{i}(z_{i}),
    \end{align*}
    where the first equality uses $z_{j}\geq z_{i}$ and thus $F_{j}(z_{i})=F_{j}(\min\{z_{i},z_{j}\})=H_{j}(z_{i})$ for all $j<i$. 
    The second equality follows from, for any $j>i$, $\min\{y_{j},z_{j}\}\leq z_{i}$ with probability one. 
    
    Then, the definition of $d_{i}$ and our earlier discussion imply that $d_{i}$ is bounded below by the probability that the effective value of box $i$ is weakly higher than the reservation value of every other box. 
    Therefore, for each box $i$, we have
    \begin{align*}
        d_{i} & \geq \int_{-\infty}^{\overline{x}_{i}}(1-H_{i}(\omega_{i}))d\overline{H}_{i}(\omega_{i}) \\
        & = \int_{-\infty}^{z_{i}}(1-H_{i}(\omega_{i}))d\overline{H}_{i}(\omega_{i}) \\
        & = \int_{-\infty}^{z_{i}}(1-F_{i}(\omega_{i}))d\overline{H}_{i}(\omega_{i}) \\
        & \geq \int_{-\infty}^{z_{i}}(1-F_{i}(z_{i}))d\overline{H}_{i}(\omega_{i}) \\
        & = (1-F_{i}(z_{i}))\cdot \int_{-\infty}^{z_{i}}d\overline{H}_{i}(\omega_{i}) 
        \geq (1-p_{i}^{*})\cdot v_{i},
    \end{align*}
    where the first and the second equalities follow from $H_{i}(\omega_{i})=1$ for all $\omega_{i}\geq z_{i}$ and $H_{i}(\omega_{i})=F_{i}(\omega_{i})$ for all $\omega_{i}<z_{i}$. 
    The last inequality uses $\underline{z}_{i}=\underline{z}$ and the inequality we obtain in the above paragraph. 
\end{proof}

% Intro
Let $S_{\pi}$ be the set of all feasible pairs of search and choice behavior induced by some pure strategy with search order $\pi$. 
The next result lists a set of key conditions for feasible pairs.

% -------------------------------------------
\begin{lemma}\label{lem: feasible joint distribution necessary conditions}
    Suppose that $\underline{z}_{i}=\underline{z}$ for all $i$.
    Let $\pi$ be identical.
    Then, any $(v,d)\in S_{\pi}$ satisfies the followings:
    \begin{itemize}
        \item[$\square$] $d_{i}\geq (1-p_{i}^{*})\cdot v_{i}$ for all $i$,
        \item[$\square$] $\sum_{i\leq j}d_{i} \geq 1-v_{j+1}$ for all $j$, and
        \item[$\square$] $p^{*}_{i}\cdot v_{i}\geq v_{i+1}$ for all $i$, 
    \end{itemize}
    where we set $v_{n+1}$ to be $0$.
\end{lemma}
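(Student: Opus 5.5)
Throughout, I fix a feasible profile $F$ and a pure optimal strategy that opens boxes in the order $1,2,\dots,n$ and induces $(v,d)\in S_{\pi}$. Pandora's rule then forces $z_{1}\geq\cdots\geq z_{n}$. The three conditions are handled separately.

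The first condition follows directly from Lemma \ref{lem: lower bound of choice}. That lemma holds for every feasible pair under $\underline{z}_{i}=\underline{z}$, so in particular for pairs in $S_{\pi}$.

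The second condition is a counting identity. Since there is no outside option, Pandora always chooses some opened box, so $\sum_{i}d_{i}=1$. Under the pure strategy with identity order, box $j+1$ is opened exactly when search continues past box $j$. On the complementary event, which has probability $1-v_{j+1}$, search stopped after opening only boxes $1,\dots,j$. Pandora must then choose one of those boxes, so $\sum_{i\leq j}d_{i}\geq 1-v_{j+1}$. For $j=n$, with $v_{n+1}=0$, this reads $\sum_{i}d_{i}\geq 1$, which holds with equality.

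The third condition is the substantive one. Fix $i$. Box $i+1$ can be opened only if box $i$ was opened and the posterior mean $y_{i}$ did not exceed $z_{i+1}$. Otherwise $\max$ of the sampled values is at least $y_{i}>z_{i+1}\geq z_{k}$ for all $k>i$, and the stopping rule forces termination. Independence across boxes then gives
\begin{align*}
v_{i+1}\leq v_{i}\cdot F_{i}(z_{i+1})\leq v_{i}\cdot F_{i}(z_{i}),
\end{align*}
where the second inequality uses $z_{i+1}\leq z_{i}$.

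It remains to show $F_{i}(z_{i})\leq p_{i}^{*}$, and this is the step I expect to be the main obstacle. Lemma \ref{lem: Lemma 1 general} states that $F^{*}_{i}$ maximizes $F_{i}(z_{i})$ over $\mathcal{F}_{i}$, with reservation value $\underline{z}$. Hence $\max_{F_{i}\in\mathcal{F}_{i}}F_{i}(z_{i})=F^{*}_{i}(\underline{z})=p^{*}_{i}$, which gives $F_{i}(z_{i})\leq p^{*}_{i}$ and completes the third condition. The care needed here is to check that the lemma's maximization is over $F_{i}(z_{i})$ evaluated at each distribution's own reservation value, not at a fixed point. That reading is the one used in the last step of the proof of Lemma \ref{lem: lower bound of choice}, so I adopt it. One should also note that ties at $y_{i}=z_{i+1}$ are harmless, because the bound uses the right-continuous value $F_{i}(z_{i+1})$, which already counts the atom.
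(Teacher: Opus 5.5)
Your proof is correct and follows the paper's argument essentially step for step: the first condition comes from Lemma~\ref{lem: lower bound of choice}, the second from the fact that whenever box $j+1$ is not opened Pandora chooses among boxes $1,\dots,j$, and the third from the chain $v_{i+1}\leq v_{i}F_{i}(z_{i+1})\leq v_{i}F_{i}(z_{i})\leq v_{i}p_{i}^{*}$ via Lemma~\ref{lem: Lemma 1 general}. Writing the third step multiplicatively rather than as the ratio $v_{i+1}/v_{i}$ simply absorbs the separate $v_{i}=0$ case that the paper handles explicitly.
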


\begin{proof}
    Lemma \ref{lem: lower bound of choice} implies the first inequality. 
    Consider a distribution profile $F$ and a pure strategy inducing $(v,d)$. 
    Pandora opens boxes in the order of indices by assumption. 
    Therefore, for any event where Pandora does not open box $j+1$, she chooses a box $i$ such that $i<j+1$. 
    This implies the second inequality. 
    Finally, note that for any box $i$ such that $v_{i}\neq 0$,
    \begin{align*}
        \frac{v_{i+1}}{v_{i}} 
        \leq F_{i}(z_{i+1}) 
        \leq F_{i}(z_{i})
        \leq p_{i}^{*},
    \end{align*}
    which implies the last inequality. 
    If $v_{i}=0$, then we have $v_{i+1}=0$, hence the last inequality. 
\end{proof}

% Intro
Then, we use this condition to prove that $F^{*}$ induces all points in $S_{\pi}$. 

% -------------------------------------------
\begin{lemma}\label{lem: neccesary conditions imply implementability}
    Suppose that $\underline{z}_{i}=\underline{z}$ for all $i$. 
    Let $\pi$ be identical.
    Then, the distribution profile $F^{*}$ induces any $(v,d)\in S_{\pi}$. 
\end{lemma}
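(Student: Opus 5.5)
The plan is to show that the three necessary conditions in Lemma \ref{lem: feasible joint distribution necessary conditions} already cut out a polytope whose vertices can all be produced under $F^{*}$. Write $P_{\pi}$ for the set of $(v,d)$ satisfying those inequalities, together with $v_{1}=1$, $v\geq 0$ and $\sum_{i}d_{i}=1$, and with $d_{i}=0$ whenever $v_{i}=0$ (only opened boxes can be chosen). Since $S_{\pi}\subset P_{\pi}$, it suffices to show that $F^{*}$ induces every vertex of $P_{\pi}$. Under $F^{*}$ every box has reservation value $\underline{z}$, so any order is optimal, and mixing pure strategies then induces any convex combination of the vertices.

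First, I will record the behaviour available under $F^{*}$ with the identity order. Each $F^{*}_{i}$ has support $\{\underline{z},h_{i}\}$ with $F^{*}_{i}(\underline{z})=p^{*}_{i}$, and every reservation value equals $\underline{z}$. Two cases arise.
\begin{itemize}
\item[$\square$] If the realized $y_{i}=h_{i}>\underline{z}$, Pandora must stop and choose a box with value $h_{i}$. This happens with probability $1-p^{*}_{i}$ on the event that box $i$ is opened.
\item[$\square$] If all realizations so far equal $\underline{z}$, she is indifferent between stopping and continuing. When she stops, she is also indifferent among all opened boxes, since all have value $\underline{z}$.
\end{itemize}
Pandora may randomize at each such node. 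So on the "all low so far" event after box $i$, which has mass $p^{*}_{i}v_{i}$, she can continue with any probability, i.e. $v_{i+1}\in[0,p^{*}_{i}v_{i}]$. She can also allocate the stopping mass $p^{*}_{i}v_{i}-v_{i+1}$ to any boxes $k\leq i$ in any proportions. The high outcomes contribute $(1-p^{*}_{i})v_{i}$ to $d_{i}$ automatically.

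This yields an exact parametrization. Any $(v,d)$ with $0\leq v_{i+1}\leq p^{*}_{i}v_{i}$ and
\[
d_{k}=(1-p^{*}_{k})v_{k}+e_{k},
\]
where $e\geq 0$ is distributed so that $\sum_{k\leq j}e_{k}\geq\sum_{i\leq j}(p^{*}_{i}v_{i}-v_{i+1})$ for every $j$, is induced by $F^{*}$. The cumulative inequality is a supply–demand condition: stopping mass generated at stage $i$ can only be assigned to boxes $k\leq i$.

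Second, I will check that this cumulative condition is exactly the second inequality of the lemma. Using the telescoping identity
\[
\sum_{i\leq j}\bigl((1-p^{*}_{i})v_{i}+p^{*}_{i}v_{i}-v_{i+1}\bigr)=v_{1}-v_{j+1}=1-v_{j+1},
\]
the requirement $\sum_{i\leq j}d_{i}\geq 1-v_{j+1}$ translates into $\sum_{k\leq j}e_{k}\geq\sum_{i\leq j}(p^{*}_{i}v_{i}-v_{i+1})$. Total mass balance $\sum_{i}d_{i}=1$ is automatic once $v_{n+1}=0$. By a Hall-type (greedy) assignment argument on this nested chain of constraints, any $e\geq 0$ satisfying the prefix inequalities and total equality can be realized by assigning the stopping mass from stage $i$ to boxes $k\leq i$. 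Processing stages in order and filling the earliest unmet demand works, because the prefix condition guarantees feasibility at each step. Hence every point of $P_{\pi}$, not just its vertices, is induced by $F^{*}$ via an explicit behaviour strategy.

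The main obstacle is the bookkeeping around $v_{i}=0$ boxes and the requirement that choices go only to opened boxes. If $v_{j}=0$ for some $j$, then $d_{j}$ must be $0$, and the second inequality with $v_{j+1}=0$ forces all of the choice mass onto boxes before $j$. I will handle this by truncating at the first index with $v_{i+1}=0$ and noting that the constraints then coincide with those of the truncated problem. I also expect to check carefully that the randomization at indifference nodes is consistent with Pandora's rule. This needs that at a node where the running maximum is $\underline{z}$, both continuing, which requires the next reservation value to be at least the running maximum, and stopping are optimal. That holds because every reservation value equals $\underline{z}$ by Lemma \ref{lem: Lemma 1 general}.
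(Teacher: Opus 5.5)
Your proof is correct, and it differs from the paper's in how the choice probabilities are generated.

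The paper's stopping rule is the same as yours: after a low draw at box $i$, continue with probability $v_{i+1}/(p^{*}_{i}v_{i})$. For the selection rule, however, the paper writes $d=M\cdot(1-v_{2},v_{2}-v_{3},\dots,v_{n})$. It then invokes an external majorization result on triangular stochastic matrices to obtain a triangular, column-stochastic $M$, using only the prefix inequalities $\sum_{i\le j}d_{i}\ge 1-v_{j+1}$. Whenever Pandora stops at box $j$, she picks box $i$ with probability $m_{ij}$.

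You instead proceed in three steps:
\begin{itemize}
\item[$\square$] You peel off the forced part $(1-p^{*}_{k})v_{k}$ of $d_{k}$, and use the first condition $d_{k}\ge(1-p^{*}_{k})v_{k}$ to make the residual $e$ nonnegative.
\item[$\square$] You use the telescoping identity to show that the prefix constraints on $e$ are exactly the second condition.
\item[$\square$] You route only the low-draw stopping mass $p^{*}_{i}v_{i}-v_{i+1}$, via an elementary greedy (Hall-type) assignment to boxes $k\le i$.
\end{itemize}

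This route is self-contained, and it is also more careful than the paper's. The paper calibrates $M$ against the full stopping mass $v_{j}-v_{j+1}$ at stage $j$. That mass includes the amount $(1-p^{*}_{j})v_{j}$ that stops after drawing $h_{j}$, and there choosing any box other than $j$ is strictly suboptimal. For the paper's rule to be optimal, one needs $m_{jj}(v_{j}-v_{j+1})\ge(1-p^{*}_{j})v_{j}$. The cited majorization result does not guarantee this, and the paper's proof never invokes the first condition that would supply it. Your decomposition is precisely the repair.

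Two minor points. Your preliminary reduction to vertices of $P_{\pi}$ is unnecessary, since you end up realizing every point of $P_{\pi}$ directly. Also, the extra constraint ``$d_{i}=0$ whenever $v_{i}=0$'' is already implied: the prefix inequality at $j=k-1$ gives $\sum_{i\le k-1}d_{i}\ge 1-v_{k}$, which combined with $\sum_{i}d_{i}=1$ and $d\ge 0$ forces $d_{k'}=0$ for all $k'\ge k$ whenever $v_{k}=0$.
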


\begin{proof}
    Take any $(v, d)\in S_{\pi}$. 
    Lemma \ref{lem: feasible joint distribution necessary conditions} implies that we have the three conditions listed in the statement. 
    Note that we clearly have $\sum_{i}d_{i}=1$ and $v_{i}\geq 0$ for all $i$. 
    We construct an optimal strategy for Pandora that induces $(v, d)$ as the resulting pair of search and choice behavior. 
    Note that every $F^{*}_{i}$ can only take one of two values, $\underline{z}$ or $h_{i}\in(\underline{z},\overline{x}_{i})$. 

    In the following two paragraphs, we construct Pandora's strategy. 
    Let Pandora open boxes in the ascending order of box indices. 
    First, we designate her stopping decision. 
    Suppose Pandora opens a box $i$. 
    If the sampled value is $h_{i}$, we let Pandora stop search and choose box $i$. 
    If the sampled value is $\underline{z}$, we have Pandora stop search with probability 
    \begin{align*}
        1-\frac{v_{i+1}}{p^{*}_{i}\cdot v_{i}} \in [0,1],
    \end{align*}
    where the inclusion follows by assumption. 
    Note that, under this stopping strategy, Pandora opens box $i$ and stops search immediately afterwards with probability $v_{i}-v_{i+1}$, where we let $v_{n+1}=0$.
    
    Second, we also need to describe a box that we have Pandora choose when stopping search. 
    Consider the following simultaneous equations:
    \begin{align*}
        \begin{bmatrix}
            d_{1} \\
            d_{2} \\
            \vdots \\
            d_{n}
        \end{bmatrix}
        = 
        \begin{bmatrix}
            m_{11} & m_{12} & \dots & m_{1n}\\
            0 & m_{22} & \dots & m_{2n}\\
            \vdots &  \vdots & \ddots & \vdots \\
            0 & 0 & \dots & m_{nn}
        \end{bmatrix}
        \cdot 
        \begin{bmatrix}
            1-v_{2}\\
            v_{2}-v_{3}\\
            \vdots\\ 
            v_{n}
        \end{bmatrix}.
    \end{align*}
    Note that both vectors in the expression are in simplex and 
    \begin{align*}
        \sum_{i\leq j}d_{i}\geq 1-v_{j+1} = \sum_{i\leq j}(v_{i}-v_{i+1})
    \end{align*}
    for all $j$ by assumption. 
    Then, Theorem 3.4 in \citet{bruno2024note} shows that there exists a solution $M=(m_{ij})_{i,j}\in \mR^{n\times n}$ to this simultaneous equations such that $m_{ij}=0$ if $j>i$ and $\sum_{i}m_{ij}=1$.\footnote{In other words, our problem is to find an upper-triangular stochastic matrix when the left-hand side vector \textit{majorizes} the right-hand side vector. \citet{bruno2024note} provide characterizations for majorization in terms of triangular matrices.} 
    Now, if Pandora stops searching at box $j$, we let Pandora choose box $i$ with probability $m_{ij}$. 
    Note that this is a well-defined strategy by the above properties on matrix $M$.

    It remains to check that it is consistent with Pandora's rule and induces $(v,d)$ as the resulting pair of search and choice behavior. 
    Note that any search order is optimal under $F^{*}$. 
    Moreover, since every reservation value equals $\underline{z}$, any strategy such that Pandora stops search when she samples value $h_{i}$ is consistent with the optimal stopping. 
    Finally, note that Pandora opens box $j$ only if all previously sampled values are $\underline{z}$. 
    Therefore, if Pandora opens a box $i$ and samples value $h_{i}$, it is strictly optimal for Pandora to choose box $i$, and if Pandora samples value $\underline{z}$, any selection rule of choosing any box $j\leq i$ is optimal. 
    In summary, the strategy described above is optimal. 

    Finally, we check that the above strategy induces $(v,d)$. 
    Note that, conditional on opening a box $i$, the above strategy lets Pandora stops search with probability 
    \begin{align*}
        (1-p^{*}_{i}) + p_{i} \cdot \left(1-\frac{v_{i+1}}{p^{*}_{i}\cdot v_{i}}\right)
        = 1 - \frac{v_{i+1}}{v_{i}}.
    \end{align*}
    With this expression, we can inductively compute the probability that each box $i$ is opened: 
    Box $1$ is opened with probability $1$. 
    Hence, box $2$ is opened with probability $1-(1-v_{2})=v_{2}$. 
    Likewise, given that box $i$ is opened with probability $v_{i}$, box $i+1$ is opened with probability 
    \begin{align*}
        v_{i} \cdot \left\{1- \left(1 - \frac{v_{i+1}}{v_{i}}\right)\right\} = v_{i+1}.
    \end{align*}
    Hence, each box $i$ is opened with probability $v_{i}$. 
    Therefore, with probability $v_{i}-v_{i+1}$, Pandora stops searching immediately after opening box $i$. 
    Hence, by the construction of $M$ and Pandora's selection strategy, the probability that each box $i$ is chosen equals $d_{i}$. 
    This completes the proof.
\end{proof}

Finally, we are ready to prove Theorem \ref{thm: joint distribution}.

% -------------------------------------------
\begin{proof}[Proof of Theorem \ref{thm: joint distribution}]
    Note that $S^{*}$ is the convex full of $\bigcup_{\pi}S_{\pi}$. 
    Let $S^{**}$ be the set of all pairs that $F^{*}$ induces. 
    Lemma \ref{lem: neccesary conditions imply implementability} implies that $\bigcup_{\pi}S_{\pi} \subset S^{**}$.
    Therefore, since $S^{**}$ is clearly convex, we also have $S^{*} \subset S^{**}$. 
    Note that $S^{**}\subset S^{*}$ by definition, which implies $S^{*}=S^{**}$. 
    Therefore, the set of all feasible pairs $S^{*}$ is a polytope, and $F^{*}$ induces all points in $F^{*}$. 
    Lemma \ref{lem: Lemma 1 general} shows that the reservation value of every box equals $\underline{z}$, and thus, Lemma \ref{lem: welfare min} shows that $F^{*}$ minimizes Pandora's expected payoff among all feasible CDF profiles, which also equals $\underline{z}$. 
    This completes the proof.
\end{proof}

% Vertex characterization introduction
Since we consider general priors, we prove a generalized version of Proposition \ref{prop: joint distribution vertex}.  
Proving that each vertex is fully implementable may take more space than one might expect.  This is mainly because we allow for general priors. 
In the proof of Proposition \ref{prop: optimal}, we prove full implementation by perturbing the benchmark distribution $\overline{F}$. 
Under general priors, however, these perturbations must be crafted more carefully, as naive perturbations can violate the mean-preserving-contraction constraint. 

% ===========================================
\begin{proposition}\label{prop: joint distribution vertex appendix}
    Take any $(v,d)\in S^{*}$ such that $v_{1}\geq v_{2}\geq\cdots \geq v_{n}$. 
    Then, $(v,d)$ is a vertex of $S^{*}$ if and only if, for some $i$ and $j\leq i$, 
    \begin{align*}
        v_{k}
        &= 
        \begin{cases}
            v_{k}^{*} \quad & \text{if} \quad k\leq i,\\
            0 \quad & \text{otherwise},
        \end{cases}\\
        d_{k}
        &= 
        \begin{cases}
            (1-p_{k}^{*})\cdot v_{k}^{*} \quad & \text{if} \quad k\leq i \text{ and } k\neq j,\\
           (1-p_{k}^{*})\cdot v_{k}^{*}+p_{i}^{*}\cdot v_{i}^{*} \quad & \text{if} \quad k=j, \\
            0 \quad & \text{otherwise},
        \end{cases}
    \end{align*}
    for each $k$. 
    Moreover, if $\overline{z}>\underline{z}$, then any vertex is fully implementable. 
\end{proposition}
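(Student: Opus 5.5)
The plan is to work through the description of $S^{*}$ obtained in the proof of Theorem \ref{thm: joint distribution}. There, $S^{*}$ is the convex hull of $\bigcup_{\pi}S_{\pi}$. Lemmas \ref{lem: feasible joint distribution necessary conditions} and \ref{lem: neccesary conditions imply implementability} together give, for each order $\pi$, the exact description of $S_{\pi}$ as the polytope cut out by four groups of linear constraints:
\begin{itemize}
\item[$\square$] monotonicity $v_{\pi(1)}=1\geq v_{\pi(2)}\geq\cdots\geq v_{\pi(n)}\geq 0$;
\item[$\square$] the ratio constraints $p^{*}_{\pi(k)}v_{\pi(k)}\geq v_{\pi(k+1)}$;
\item[$\square$] the choice lower bounds $d_{\pi(k)}\geq(1-p^{*}_{\pi(k)})v_{\pi(k)}$;
\item[$\square$] the majorization constraints $\sum_{l\leq k}d_{\pi(l)}\geq 1-v_{\pi(k+1)}$ together with $\sum_{l}d_{l}=1$.
\end{itemize}
The converse inclusion holds because any point satisfying these constraints is induced by $F^{*}$ with order $\pi$; the construction in Lemma \ref{lem: neccesary conditions imply implementability} only used them. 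Hence the vertices of $S^{*}$ lie among the vertices of the $S_{\pi}$, and I will take $\pi$ to be the identity throughout.

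\emph{Vertices of $S_{\pi}$.} I will parametrize by the continuation ratios $t_{k}=v_{k+1}/(p^{*}_{k}v_{k})\in[0,1]$ and the choice matrix $M$ of Lemma \ref{lem: neccesary conditions imply implementability}. The observation that drives everything is that each point of $S_{\pi}$ is a mixture of deterministic strategies under $F^{*}$. In each such strategy Pandora continues on a low draw up to a threshold box $i$ and then stops. On stopping she takes box $k$ if $y_{k}=h_{k}$, and otherwise takes a fixed box $j\leq i$ on the all-low history. This is a Carathéodory/Birkhoff-type decomposition: I will decompose the stopping probabilities as a mixture of threshold rules, and decompose the upper-triangular stochastic matrix $M$ (restricted to the all-low histories) into deterministic selections.

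\emph{Computing the pure-strategy points.} A threshold-$i$, return-$j$ strategy gives $v_{k}=v^{*}_{k}$ for $k\leq i$ and $v_{k}=0$ otherwise. It gives $d_{k}=(1-p^{*}_{k})v^{*}_{k}$ for $k\leq i$, plus $p^{*}_{i}v^{*}_{i}$ added to box $j$. These are exactly the points in the statement. So every point of $S^{*}$ is a convex combination of these candidates, and the vertices form a subset of them. Note that the binary-case formula $\overline{v}_{k}-\overline{v}_{k+1}$ agrees with $(1-p_{k})\overline{v}_{k}$ for $k<i$.

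\emph{Every candidate is a vertex.} I expect this to be the main obstacle, and I would try an explicit exposing weight first. I want a vector $(r,s)$ that puts a large positive weight on $v_{k}$ for $k\leq i$ and a negative weight on $v_{k}$ for $k>i$, with the latter made harmless by the ordering. On the choice side it should put a bonus $s_{j}$ on $d_{j}$ and penalties on $d_{k}$ for $k>i$. The aim is for the candidate to be the unique maximizer over all candidates from all orders, including permuted ones, whose first $i$ search probabilities can be matched only by the same ordered prefix. Uniqueness follows if the $v$-part, by the same swapping argument as in Lemma \ref{lem: vertex any direction} with strict inequalities, forces order $\pi$ and threshold $i$, and the $d_{j}$ bonus then forces the return box $j$. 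If a clean separating weight is awkward to find, the fallback is to show directly that no candidate is a nontrivial convex combination of the others. The supports of $v$ identify $i$, the order is identified by the strict decrease $v^{*}_{k}>v^{*}_{k+1}$ (when $p^{*}<1$), and the location of the extra mass in $d$ identifies $j$.

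\emph{Full implementation when $\overline{z}>\underline{z}$.} Mimicking the proof of Proposition \ref{prop: optimal}, I will perturb $F^{*}$ so that the reservation values become strictly ordered, $z_{1}>\cdots>z_{i}>\underline{z}$, which requires $\overline{z}>\underline{z}$. Boxes $k>i$ get reservation values strictly below the low realization of box $i$, so they are never opened. The low realization of box $j$ is raised slightly above the other low values, so the return choice is strict. Under general priors each perturbation must remain a mean-preserving contraction. The slack available because the prior reservation value exceeds $\underline{z}$ ensures this: I move a small mass toward $h_{k}$ and re-spread it using a small portion of the prior's spread, checking the integral inequality only near the two atoms.
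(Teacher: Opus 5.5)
Your proposal follows essentially the same route as the paper: a vertex of $S^{*}$ must be induced by a pure strategy under $F^{*}$, i.e.\ by an (order, threshold $i$, return box $j$) triple; the converse exposes each such point with a linear functional that is positive on $v_{1},\dots,v_{i}$, negative on the remaining $v_{k}$, and carries a small bonus on $d_{j}$; and full implementation perturbs $F^{*}$ near $\underline{z}$ using the contraction slack that $\overline{z}>\underline{z}$ provides. Keep that perturbation confined to a small neighborhood of $\underline{z}$ rather than moving mass toward $h_{k}$, because the mean-preserving-contraction constraint for $F^{*}_{k}$ can bind at points above $\underline{z}$. When you pin down the exposing weight, let your adjacent-swap computation dictate it: take $r_{k}>0$ with $r_{k}/(1-p^{*}_{k})$ strictly decreasing for $k\leq i$, and $r_{k}<0$ for $k>i$. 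A fixed geometric weight such as the paper's $r_{k}=10^{i-k}$ can be beaten by a reordered candidate when the $p^{*}_{k}$ are heterogeneous. For instance, with $i=2$, $r=(10,1)$, $p^{*}_{1}=0.2$ and $p^{*}_{2}=0.98$ (attainable with binary priors), opening box $2$ first gives $10.8>10.2$. Finally, your fallback of showing that the candidates are pairwise distinct would not by itself show that none lies in the convex hull of the others.
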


\begin{proof}[Proof of Proposition \ref{prop: joint distribution vertex appendix}]
    Let $\hat{S}$ be the set of all points $(v,d)$ which is expressed as in the statement for some indices $i$ and $j\leq i$.  

    Take any feasible pair of search and choice behavior $(v,d)$ such that $v_{1}\geq v_{2}\geq\cdots \geq v_{n}$. 
    Suppose it is a vertex. 
    Theorem \ref{thm: joint distribution} implies that $F^{*}$ induces $(v,d)$. 
    Moreover, since $(v,d)$ is a vertex, it must hold that a pure strategy induces $(v,d)$. 
    Under this strategy, by assumption, Pandora must open boxes in the ascending order of box indices. 
    Note that any realization from $F^{*}_{i}$ is either $\underline{z}$ or $h_{i}\in (\underline{z},\overline{x}_{i})$. 
    
    Let $i$ be a box such that Pandora stops searching if the sampled value from the box $i$ is $\underline{z}$.
    Suppose that $i$ is the smallest such index. 
    Then, by construction, $v_{k}=d_{k}=0$ for all $k>i$. 
    Let $j$ be a box that Pandora chooses if all sampled values up to $i$ are $\underline{z}$. 
    Then, conditional on opening box $k<i$, Pandora samples $h_{k}$ with probability $1-p_{k}^{*}$, and therefore, $d_{k}=(1-p_{k}^{*})v_{k}^{*}$ for $k\neq j$. 
    If $k=j$, then we have $d_{j}=(1-p_{j}^{*})v_{j}^{*}+p_{i}^{*}v_{i}^{*}$. 
    Finally, by the choice of index $i$, we have $v_{k}=v_{k}^{*}$ for each $k\leq i$, and therefore, we obtain $(v,d)\in \hat{S}$. 

    It remains to check the converse, i.e., any point $(v,d)\in\hat{S}$ is a vertex. 
    Let $\tilde{S}$ be the union of all points in $\hat{S}$ and its permutations. 
    Suppose that $(v,d)$ has the expression with indices $i$ and $j\leq i$.
    Since the above paragraph implies that $S^{*}$ is the convex hull of $\tilde{S}$, it is sufficient to show that there exists a weight $(r,s)\in\mR^{n\times n}$ under which $(v,d)$ uniquely maximizes a linear function $r\cdot v + s\cdot d$ among all those points. 

    Specifically, we consider the following weight: for each $k$,
    \begin{align*}
        r_{k} &= 
        \begin{cases}
            10^{i-k} \quad &\text{if} \quad k\leq i,\\
            -\infty \quad &\text{otherwise},
        \end{cases} \\
        s_{k} &= 
        \begin{cases}
            \varepsilon \quad &\text{if} \quad k=j,\\
            0 \quad &\text{otherwise},
        \end{cases}
    \end{align*}
    where $\varepsilon\in (0,1)$ is a small positive number. 
    Let $(u,e)$ be any maximizer in $\tilde{S}$ with some search order $\pi$ and two indices $i^{*}$ and $j^{*}$. 
    First, it is clear that $u_{k}=0$ must hold for all $k>i$. 
    Second, if $\pi(k)\neq k$ for some box $k$, then, 
    \begin{align*}
        r\cdot u+s\cdot e
        \leq \sum_{k=1}^{i}10^{i-k}v_{\pi(k)}^{*} + \varepsilon 
        < \sum_{k=1}^{i}10^{i-k}v_{k}^{*}
        < r\cdot v + s\cdot d,
    \end{align*}
    where the second inequality follows from $\varepsilon$ being sufficiently small and $10^{i-k}$ being strictly decreasing in $k$. 
    Therefore, $\pi(k)=k$ for all $k\leq i$. 
    Finally, given these observations, it is trivial to see that $j^{*}=j$ must hold. 
    Therefore, $(u,e)=(v,d)$, which shows that $(v,d)$ is a unique maximizer, hence a vertex.

    % Full implementation
    It remains to show that vertexes are fully implementable. 
    Suppose that $\overline{z}>\underline{z}$. 
    An analogous argument as in the proof of Lemma \ref{lem: Lemma 1 general} implies that $\overline{z}_{i}>\underline{z}_{i}$ if and only if $F_{i}^{0}(\underline{z}_{i}-)>0$. 
    Fix any $l_{k}$ which is close enough to $\underline{z}=\underline{z}_{i}$ such that $F_{k}(l_{k})>0$.
    Let $(v,d)$ be the vertex represented by the identify order $\pi(k)=k$ and two indices $i$ and $j\leq i$.  

    % Prep
    We start with one preparation.
    Let $F^{*}$ be the binary-support CDF profile which we define in Lemma \ref{lem: Lemma 1 general}. 
    Then, consider the difference
    \begin{align*}
        g^{*}_{k}(x_{k})
        = \int_{\underline{x}_{k}}^{x_{k}}F_{k}^{0}(y_{k})dy_{k}
        -\int_{\underline{x}_{k}}^{x_{k}}F_{k}^{*}(y_{k})dy_{k}.
    \end{align*}
    The feasibility of $F^{*}$ means that $g_{k}(x_{k})\geq 0$ for all $x_{k}$.  
    If $g^{*}_{k}(\underline{z})=0$, then the definition of the reservation values implies that the reservation value under the prior also equals $\underline{z}$, which contradicts $\overline{z}>\underline{z}$. 
    Therefore, we must have $g^{*}_{k}(\underline{z})>0$. 
    Since $g^{*}_{k}$ is a continuous function, we can find a point $m_{k}>\underline{z}$ close to $\underline{z}$ such that $g^{*}_{k}(x_{k})> 0$ for all $x_{k}\in [l_{k},m_{k}]$: note that we clearly have $g^{*}_{k}(x_{k})>0$ for all $x_{k}\in [l_{k},\underline{z})$. 
    Then, define $\overline{g}_{k}>0$ to be the minimizer of $g^{*}_{k}(x_{k})$ in this region, which exists by continuity. 
    Fix any such $l_{k}$ and $m_{k}$, and thus $\overline{g}_{k}$. 

    % Construction 
    As in the proof of Proposition \ref{prop: optimal}, we construct a CDF profile. 
    Consider the following function which is a perturbation of $F^{*}$ in Lemma \ref{lem: Lemma 1 general}: 
    \begin{align*}
    F_{k}(y_{k}; \varepsilon_{k}, \Delta_{k}) = 
    \begin{cases}
        0 & \quad \text{if} \quad y_{k}<l_{k}, \\
        \eta_{k}(\varepsilon_{k},\Delta_{k}) & \quad \text{if} \quad l_{k}\leq y_{k}<\underline{z}+\varepsilon_{k},\\
        p^{*}_{k}-\Delta_{k} & \quad \text{if} \quad \underline{z}+\varepsilon_{k}\leq y_{k}< m_{k},\\
        p^{*}_{k} & \quad \text{if} \quad m_{k}\leq y_{k}< h_{k},\\
        1 & \quad \text{if} \quad h_{k} \leq y_{k},
    \end{cases}
    \end{align*} 
    where $\eta_{k}(\varepsilon_{k},\Delta_{k})\geq 0$ is defined so that the function has mean equal to the prior mean $\mu_{k}$. 
    Since $p_{k}^{*}>0$, this is a well-defined CDF for small $\varepsilon_{k}>0$ and $\Delta_{k}>0$. 
    Note that it coincides with $F_{k}^{*}$ at each point $y_{k}\geq m_{k}$. 
    In other words, we locally perturb $F^{*}$ around $\underline{z}$.

    % Feasible
    Here, we check that $F_{k}(y_{k}; \varepsilon_{k}, \Delta_{k})$ is feasible, namely a mean-preserving contraction of the prior. 
    To see this, consider the difference
    \begin{align*}
        g_{k}(x_{k}; \varepsilon_{k}, \Delta_{k}) 
        &= 
        \int_{\underline{x}_{k}}^{x_{k}}F_{k}^{0}(y_{k})dy_{k}
        -\int_{\underline{x}_{k}}^{x_{k}}F_{k}(y_{k}; \varepsilon_{k}, \Delta_{k})dy_{k}. 
    \end{align*}
    First, note that both $F_{k}(y_{k}; \varepsilon_{k}, \Delta_{k})$ and $F_{k}^{*}(y_{k})$ have the same mean $\mu_{k}$ and they coincide at any $y_{k}\geq m_{k}$. 
    Therefore, for any $x_{k}\geq m_{k}$, integration by parts shows
    \begin{align*}
        \int_{\underline{x}_{k}}^{x_{k}}F_{k}(y_{k}; \varepsilon_{k}, \Delta_{k})dy_{k} 
        % &= \int_{\underline{x}_{k}}^{\overline{x}_{k}}F_{k}(y_{k}; \varepsilon_{k}, \Delta_{k})dy_{k} - \int_{x_{k}}^{\overline{x}_{k}}F_{k}(y_{k}; \varepsilon_{k}, \Delta_{k})dy_{k} \\
        & = 1-\mu_{k} - \int_{x_{k}}^{\overline{x}_{k}}F_{k}(y_{k}; \varepsilon_{k}, \Delta_{k})dy_{k} \\
        & = 1-\mu_{k} - \int_{x_{k}}^{\overline{x}_{k}}F^{*}_{k}(y_{k})dy_{k} \\
        & = \int_{\underline{x}_{k}}^{x_{k}}F^{*}_{k}(y_{k})dy_{k}. 
    \end{align*}
    Since $F_{k}^{*}$ is feasible, this implies that $g_{k}(x_{k}; \varepsilon_{k}, \Delta_{k})=g_{k}^{*}(x_{k})\geq 0$ for all $x_{k}\geq m_{k}$. 
    Second, since $F_{k}(y_{k}; \varepsilon_{k}, \Delta_{k})=0$ for all $y_{k}<l_{k}$, the difference is clearly non-negative for any $x_{k}<l_{k}$. 
    Third, take any $x_{k}$ such that $l_{k}\leq x_{k}\leq m_{k}$. 
    Then, 
    \begin{align*}
        g_{k}(x_{k}; \varepsilon_{k}, \Delta_{k}) 
        &\geq \overline{g}_{k} + \int_{\underline{x}_{k}}^{x_{k}} (F^{*}_{k}(y_{k})-F_{k}(y_{k}; \varepsilon_{k}, \Delta_{k})) dy_{k} \\
        &\geq \overline{g}_{k} - (m_{k}-l_{k})\cdot p_{k}^{*}.
    \end{align*}
    Note that, by definition, $\overline{g}_{k}$ becomes higher when $m_{k}$ and $l_{k}$ become closer to $\underline{z}$. 
    Therefore, if $m_{k}$ and $l_{k}$ are set to be close enough to $\underline{z}$, the difference is positive for any $x_{k}\in [l_{k},m_{k}]$. 
    Summarizing these three arguments, the distribution we construct above is feasible if $l_{k}$ and $m_{k}$ are sufficiently close to $\underline{z}$.  

    % Paremeter Construction
    Now, fix any sufficiently small numbers $\varepsilon_{+}$ and $\varepsilon_{-}$ such that $\varepsilon_{+}>\varepsilon_{-}>0$ and $\underline{z}+\varepsilon_{+}<m_{k}$ for all $k$. 
    Then, as we see in the proof of Proposition \ref{prop: optimal}, one can analogously show that there exists a profile $(\hat{\Delta}_{k})_{k\leq i}$ such that the associated reservation values satisfy
    \begin{align*}
        z_{1}>z_{2}>\cdots >z_{i}>\underline{z}+\varepsilon_{+}. 
    \end{align*}
    Then, consider a CDF profile such that $F_{k}(y_{k})=F_{k}(y_{k};\varepsilon_{-},\hat{\Delta}_{k})$ for all $k\leq i$ such that $k\neq j$ and $F_{j}(y_{j})=F_{j}(y_{j};\varepsilon_{+},\hat{\Delta}_{k})$. 
    Note that, since $F_{k}(y_{k})=F_{k}^{*}(y_{k})$ for all $y_{k}\geq m_{k}$, we must have $z_{k}<m_{k}$ for each $k$; otherwise, the reservation value under $F^{*}_{k}$ also equals $z_{k}>\underline{z}$. 
    For all remaining boxes $k>i$, we set $F_{k}$ to be a distribution that is degenerate at prior mean $\mu_{k}$.  

    % Show full implementation
    Finally, we show that the CDF $F$ approximately implements the pair $(v,d)$.
    Let $(v',d')$ be any pair induced by $F$. 
    Again, any optimal search strategy must open boxes in the ascending order of indices. 
    Moreover, Pandora will never sample a value equal to the reservation value of some box, hence
    \begin{align*}
        v'_{k} 
        = \prod_{l<k} G_{l}(z_{k}),
    \end{align*}
    for each box $k$. 
    The construction of the CDF profile implies 
    \begin{align*}
        G_{l}(z_{k}) 
        \in  
        \begin{cases}
            [p^{*}_{l}-\hat{\Delta}_{l}, p_{l}^{*}] \quad & \text{if} \quad k\leq i,\\
            [0, \eta_{l}(\varepsilon_{l},\hat{\Delta}_{l})] \quad & \text{if} \quad k> i,
        \end{cases}
    \end{align*}
    for all $l$ and $k>l$. 
    Finally, note that, if $\varepsilon_{+}$ and $\varepsilon_{-}$ are arbitrarily small, then $\hat{\Delta}_{j}$ and $\eta_{j}(\varepsilon_{j},\hat{\Delta}_{j})$ can also be made arbitrarily small by construction. 
    Therefore, $v'$ is arbitrarily close to $v$. 
    Likewise, we can compute that
    \begin{align*}
        d'_{k}
        &= 
        \begin{cases}
            [1-(p_{k}^{*}-\hat{\Delta}_{k}-\eta_{k}(\varepsilon_{k},\hat{\Delta}_{k}))]v'_{k} & \text{ if } k\leq i \text{ and } k\neq j,\\
            [1-(p_{k}^{*}-\hat{\Delta}_{k}-\eta_{k}(\varepsilon_{k},\hat{\Delta}_{k}))]v'_{k}+(p_{i}^{*}-\hat{\Delta}_{i}-\eta_{i}(\varepsilon_{i},\hat{\Delta}_{i})) v'_{i} & \text{ if } k=j,
        \end{cases}
    \end{align*}
    and $d'_{k}\leq \eta_{i}(\varepsilon_{i},\hat{\Delta}_{i})$ for all $k>i$. 
    Therefore, since $v'$ is arbitrarily close to $v$, we have that $d'$ is also arbitrarily close to $d$. 
    Hence, $(v,d)$ is fully implementable. 
\end{proof}

%%%%%%%%%%%%%%%%%%%%%%%%%%%%%%%%%%%%%%%%%%%%%%%%%%%%%%%%%
%%%%%%%%%%%%%%%%%%%%%%%%%%%%%%%%%%%%%%%%%%%%%%%%%%%%%%%%%%%%%%%%%%%%%%%%%%%%%%
% Change theorem numbering to A.1, A.2, etc.
\renewcommand{\theproposition}{D.\arabic{proposition}}
\setcounter{proposition}{0}
\section{Proof of Theorem \ref{thm: price competition}}\label{sec:appendixd}

% Preliminary
Note that Pandora observes price realizations prior to search. 
Pandora's optimal strategy is then described as follows. 
Consider any feasible CDF profile $F$ and any equilibrium strategy profile under $F$. 
Suppose that a realized price vector is $t\in \mR^{n}$. 
Then, Pandora visits sellers in the order of \textit{price-adjusted reservation values}, $z_{i}-t_{i}$, where $z_{i}$ is the reservation value under $F_{i}$. 
Pandora chooses to stop search if a previously sampled value net of its price, $y_{i}-t_{i}$, exceeds the price-adjusted reservation value of every remaining box. 

% Remark
Recall that $F^{*}$ is a feasible CDF profile we construct in Lemma \ref{lem: Lemma 1 general} and Theorem \ref{thm: joint distribution}. 
Each distribution in the profile takes one of two possible values, $\underline{z}$ and some $h_{i}\geq \underline{z}$, and has the reservation value $\underline{z}$. 

% ----------------------------------
\begin{lemma}\label{lem: Bertrand competition}
    Suppose $\underline{z}_{i}=\underline{z}$ for all $i$. 
    Under the CDF profile $F^{*}$, every seller $i$ setting the lowest price $t_{i}=0$ is an equilibrium. 
\end{lemma}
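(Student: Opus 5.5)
We need to show that no seller $i$ has a profitable unilateral deviation from $t=0$, given Pandora's equilibrium behavior. The plan is to fix a continuation strategy for Pandora at every price vector, which is part of the equilibrium, and then show that any deviating seller is never visited and hence never chosen.

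\textbf{Continuation play.} At the profile $t=(0,\dots,0)$, every seller earns zero, since each receives $t_i=0$ whenever chosen. We let Pandora follow any optimal strategy under $F^{*}$; by Theorem \ref{thm: joint distribution}, every feasible pair is attainable this way.

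At any price vector in which exactly one seller $i$ posts $t_i>0$ and all others post zero, we specify Pandora's response using the price-adjusted Pandora's rule described at the start of this appendix. The price-adjusted reservation values are $\underline{z}-t_i<\underline{z}$ for box $i$ and $\underline{z}$ for every $j\neq i$. So Pandora optimally opens the zero-price boxes first, and there is at least one since $n\geq 2$.

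\textbf{Seller $i$ is never reached.} Every realization under $F^{*}_j$ lies in $\{\underline{z},h_j\}$ with $h_j\geq\underline{z}$. Hence, once any box $j\neq i$ is opened, the best sampled net value $y_j-0$ is at least $\underline{z}$, which strictly exceeds $\underline{z}-t_i$.

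By the stopping rule, Pandora therefore strictly prefers stopping to opening box $i$. If it is still optimal to open further zero-price boxes, which happens only at the indifference $y=\underline{z}$, she may continue among those boxes. Even after exhausting all of them, her maximum net value remains at least $\underline{z}>\underline{z}-t_i$, so she stops without opening box $i$.

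Since a box can only be chosen after it is opened, box $i$ is never selected. Seller $i$'s deviation payoff is therefore zero, equal to her payoff at $t=0$, so the deviation is not profitable. Deviations to other price vectors are off the unilateral-deviation path and only require Pandora to play optimally there, which always exists.

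\textbf{Main point of care.} The step needing attention is the tie case: a zero-price box yields exactly $\underline{z}$, and one must check that this still strictly beats $\underline{z}-t_i$. It does, because $t_i>0$, so the strict inequality holds and no tie-breaking in favor of seller $i$ is permitted.

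We conclude by noting that because prices must satisfy $t_i\geq 0$, lowering the price is not a possible deviation. Hence $t=0$ together with Pandora's optimal strategies is a Nash equilibrium.
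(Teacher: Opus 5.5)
Your proposal is correct and follows essentially the same argument as the paper: a unilateral deviation to $t_i>0$ lowers seller $i$'s price-adjusted reservation value to $\underline{z}-t_i<\underline{z}$, so Pandora opens the zero-price boxes first, and since every realization under $F^{*}$ is at least $\underline{z}$, she stops before reaching box $i$, leaving the deviator with zero payoff. Your explicit check of the tie case $y=\underline{z}$ (strict because $t_i>0$) and your note that $n\geq 2$ guarantees a zero-price box are details the paper leaves implicit.
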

\begin{proof}
    Suppose all sellers except seller $i$ set their prices to zero.
    Consider the case where seller $i$ posts a positive price $t_{i}>0$. 
    Since the price-adjusted reservation values of every seller $j\neq i$ equals $\underline{z}$ and that of seller $i$ equals $\underline{z}-t_{i}$, Pandora visits sellers $j\neq i$ earlier than seller $i$. 
    Here, recall that any realization from $F^{*}$ is weakly above $\underline{z}$. 
    Therefore, Pandora never visits seller $i$. 
    Hence, any deviation by any seller is not profitable. 
\end{proof}

% Remark
The next two lemmas demonstrate that every equilibrium pair lies in the set $S^{*}$ constructed in Theorem \ref{thm: joint distribution}. 
Lemma \ref{lem: price and search behavior} establishes the claim for equilibrium search behavior, and Lemma \ref{lem: price and pair} extends it to equilibrium pairs.

% ----------------------------------
\begin{lemma}\label{lem: price and search behavior}
    Suppose $\underline{z}_{i}=\underline{z}$ for all $i$. 
    Any equilibrium search behavior is in the set $V^{*}$ of all feasible search behaviors. 
\end{lemma}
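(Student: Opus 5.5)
The plan is to reduce the priced problem to the unpriced one by showing that positive prices can only shorten weighted search lengths. Since $V^{*}$ is a polytope (Theorem \ref{thm: feasible search behavior}), it is the intersection of the half-spaces $\{v : r\cdot v\le \max_{u\in V^{*}} r\cdot u\}$ over all $r$. It therefore suffices to show that, for every $r\in\mR^{n}\setminus\{0\}$, every equilibrium search behavior $v$ satisfies $r\cdot v\le \max_{u\in V^{*}} r\cdot u$.

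An equilibrium may involve mixed prices. Its search behavior is the average, over the realized price vector $t$, of the behavior induced by Pandora's optimal response to $(F,t)$. Because the right-hand bound is independent of $t$, it is enough to prove the bound for each fixed price vector $t\ge 0$. Fix such a $t$ and the CDF profile $F$.

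With prices, Pandora behaves as in Pandora's rule but with a shifted problem. Box $i$ becomes a box whose posterior-mean distribution is $F_i$ shifted down by $t_i$, whose reservation value is $z_i-t_i$, and whose cost is $c_i$. The shifted distribution $F_i^{t}(y)=F_i(y+t_i)$ has mean $\mu_i-t_i$, so its no-information reservation value is $\underline{z}-t_i$. The key claim is that for any such shifted problem, the induced search behavior lies in the unpriced $V^{*}$.

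For the upper direction, take the weights $r^{+}=\max\{r,0\}$ and redo the argument of Theorem \ref{thm: information design positive weight}. Lemmas \ref{lem: point max} and \ref{lem: reservation value bound} bound the continuation probabilities $F_i(z_k)$. Under prices, the continuation probability after box $i$ when facing box $k$ is $F_i^{t}(z_k-t_k)=F_i(z_k-t_k+t_i)$. Lemma \ref{lem: point max} still bounds it by $1-c_i/(\overline{x}_i-z)$ evaluated at the appropriate threshold, and that bound is at most $p_i$ whenever the threshold is at least $\underline{z}$. The case where the threshold falls below $\underline{z}$ should be handled exactly as Lemmas \ref{lem: optimal binary} and \ref{lem: joint monotone} handle reservation values below $\underline{z}$. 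The common shift $\underline{z}_i=\underline{z}$ is what lets this go through: a positive $t_i$ acts like lowering box $i$'s reservation value, which is precisely the direction Lemma \ref{lem: joint monotone} shows is weakly dominated.

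The lower bounds $v_i\ge \underline{v}_i$ are trivial here. Under symmetry $q_i=0$, so $\underline{v}_i=0$ for $i\ge 2$, and $v_{\pi(1)}=1$ holds automatically. Combining the upper and lower directions through the argument of Lemmas \ref{lem: necessary condition} and \ref{lem: vertex any direction} then gives $v\in\mathrm{conv}(V)=V^{*}$.

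\textbf{Main obstacle.} The hard step is verifying that the priced continuation probabilities still obey the majorization-type bounds $\sum_{j\le i}u_j\le\sum_{j\le i}\overline{u}_j$. A cleaner route would avoid re-proving the theorem: construct, for each $t$, an unpriced feasible profile $\tilde F$ that induces the same search behavior, so the bound follows immediately. I would try this first. The construction would take $\tilde F_i$ to be a garbling of $F_i$ whose reservation value equals $z_i-t_i$ and whose CDF agrees with $F_i(\cdot+t_i)$ at the relevant thresholds $z_k-t_k$, obtained by pooling mass as in Lemma \ref{lem: finite support}. Feasibility requires $z_i-t_i\ge\underline{z}_i-t_i$; whether the reservation value can be matched without exceeding the mean is the delicate point. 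If the match fails, I would fall back on the direct weighted-bound argument described above.
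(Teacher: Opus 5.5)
Your outer reduction is sound: testing $V^{*}$ against half-spaces, fixing the realized price vector $t$, passing to $r^{+}$, and using $q_i=0$ and $v_{\sigma(1)}=1$ on the lower side. The gap is in the upper bound, which is the entire content of the lemma and is not established.

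You obtain $F_i(\theta)\le p_i$ only when the threshold $\theta=z_k-t_k+t_i$ is at least $\underline{z}$. You defer the complementary case to Lemmas \ref{lem: optimal binary} and \ref{lem: joint monotone}, and that deferral does not transfer. Those lemmas are tailored to the unpriced design problem: means $\mu_i$, a common $\underline{z}$, and Assumption \ref{assum: regularity}. Your shifted boxes instead have heterogeneous no-information values $\underline{z}-t_i$, and they can violate Assumption \ref{assum: regularity} when prices are dispersed. Moreover, under $\underline{z}_i=\underline{z}$ the situation those lemmas are built for (some reservation value below $\underline{z}$) never arises, since every feasible $F_i$ has $z_i\ge\underline{z}_i=\underline{z}$.

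The missing step is plain monotonicity. Box $i$ precedes box $k$ in the price-adjusted order, so $\theta\le z_i$ and
\[
F_i(z_k-t_k+t_i)\le F_i(z_i)\le 1-\frac{c_i}{\overline{x}_i-z_i}\le 1-\frac{c_i}{\overline{x}_i-\underline{z}}=p_i
\]
by Lemma \ref{lem: point max} at $z=z_i\ge\underline{z}$, with no case split. If Pandora opens boxes in the order of indices, this yields $v_k\le\prod_{j<k}p_j$. Hence, for $I=\{i:r_i\ge0\}\neq\emptyset$, you get $r\cdot v\le r^{+}\cdot v\le\sum_{k\in I}r_k\prod_{j\in I,\,j<k}p_j$. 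The right-hand side is $r\cdot u$ for the behavior $u\in V^{*}$ that $\overline{F}$ induces when Pandora opens the boxes of $I$ in index order, continues after every realization $\underline{z}$, and stops once $I$ is exhausted. This is exactly the paper's proof; it cites Lemma \ref{lem: Lemma 1 general} ($F_j(z_j)\le p_j^{*}$) instead of Lemma \ref{lem: point max}, so it also covers general priors. Neither the majorization bounds you flagged as the main obstacle nor any part of the proof of Theorem \ref{thm: information design positive weight} is needed.

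Your fallback ``cleaner route'' fails as formulated. Any unpriced feasible $\tilde F_i$ has reservation value at least $\underline{z}_i=\underline{z}$, so it cannot reproduce a price-adjusted reservation value $z_i-t_i<\underline{z}$ (e.g.\ an uninformative $F_i$ with $t_i>0$). The condition $z_i-t_i\ge\underline{z}_i-t_i$ you wrote always holds and is not the binding one.

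A construction does work if you target $\overline{F}$ instead of a garbling of $F$. By independence and the display above, $v_{k+1}\le F_k(z_{k+1}-t_{k+1}+t_k)\,v_k\le p_k v_k$, and $v_1=1$. Under $\overline{F}$, whose support is $\{\underline{z},\overline{x}_k\}$ when $q_k=0$, Pandora can open boxes in the same order, stop at $\overline{x}_k$, and continue after $\underline{z}$ with probability $v_{k+1}/(p_k v_k)$. This is optimal by indifference at $\underline{z}$. The paper uses this argument for pairs in Lemma \ref{lem: price and pair}.
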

\begin{proof}
    Consider any feasible CDF profile $F$ such that $F_{i}\in \mathcal{F}_{i}$ for all $i$. 
    Suppose it has an equilibrium and take any price realizations $t$. 
    Let the price-adjusted reservation values satisfy $z_{1}-t_{1}\geq z_{2}-t_{2}\geq \cdots\geq z_{n}-t_{n}$.
    As a preparation, we note that, for any $i$ and $j<i$, we have
    \begin{align*}
        F_{j}((z_{i}-t_{i})+t_{j}) 
        \leq F_{j}(z_{j})
        \leq F^{*}_{j}(\underline{z}),
    \end{align*}
    where the first inequality follows from $z_{j}-t_{j}\geq z_{i}-t_{i}$ and the second inequality follows from Lemma \ref{lem: Lemma 1 general}. 
    Note that if a sampled value from box $j$ is such that $y_{j}\leq (z_{i}-t_{i})+t_{j}$, we have $y_{j}-t_{j}\leq z_{i}-t_{i}$, and therefore, it is optimal for Pandora to continue search. 

    Take any non-zero weight $r\in \mR^{n}$. 
    Let $r^{+}_{i}=\max\{r_{i},0\}$.
    Note that a pure search strategy maximizes the weighted search length. 
    Therefore, assuming w.l.g that Pandora opens boxes in the order of indices under such pure strategy, the above inequality implies that the weighted search length at the price realization $t$ is bounded above by
    \begin{align*}
        \sum_{i=1}^{n} r^{+}_{i} \left\{\prod_{j<i} F_{j}((z_{i}-t_{i})+t_{j})\right\} 
        \leq \sum_{i=1}^{n} r^{+}_{i} \left\{\prod_{j<i} F^{*}_{j}(\underline{z})\right\}
        \leq \sum_{i\in I} r_{i} \left\{\prod_{j\in I: j<i} F^{*}_{j}(\underline{z})\right\},
    \end{align*}
    where we define $I$ to be the set of indices with non-negative weights.
    Therefore, the weighted search length is larger when the CDF profile is $F^{*}$ with no pricing, for any price realization. 
    Since Theorem \ref{thm: joint distribution} shows that $V^{*}$ is the marginal of $S^{*}$ on the space of search behaviors and is therefore a polytope, and $F^{*}$ induces all points in $V^{*}$, this inequality means that equilibrium search behavior under $F$ and any price realization must lie in $V^{*}$. 
\end{proof}

% Preparation
Recall that effective value of a CDF $F_{i}$ is defined as $\omega_{i}=\min\{y_{i},z_{i}\}$, where $y_{i}$ follows $F_{i}$ and $z_{i}$ is the reservation value under $F_{i}$. 
As we define price-adjusted reservation value, define the \textit{price-adjusted effective value} $\omega_{i}-t_{i}$.
Under a price realization $t$, Pandora chooses a box with the highest price-adjusted effective value. 
We use notation $H^{t}_{i}$ to represent the distribution of price-adjusted effective values. 
Then, we let $\overline{H}_{i}^{t}(\omega_{i}-t_{i})$ denote the probability that $\omega_{j}-t_{j} \leq \omega_{i}-t_{i}$ for all $j \neq i$. 

% ----------------------------------
\begin{lemma}\label{lem: price and pair}
    Suppose $\underline{z}_{i}=\underline{z}$ for all $i$. 
    Any equilibrium pair is in the set $S^{*}$ of all feasible pairs.
\end{lemma}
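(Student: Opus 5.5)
The plan is to mirror the structure of the proof of Theorem \ref{thm: joint distribution}. We fix a feasible CDF profile $F$, an equilibrium, and a realized price vector $t$. Conditional on $t$, Pandora's problem is a Pandora problem with price-adjusted reservation values $z_{i}-t_{i}$ and price-adjusted effective values $\omega_{i}-t_{i}$. Since any equilibrium pair is a mixture over price realizations and since $S^{*}$ is convex (it is a polytope by Theorem \ref{thm: joint distribution}), it suffices to show that for each fixed $t$ and each pure optimal continuation strategy, the induced pair $(v,d)$ lies in $S^{*}$. Relabel so that $z_{1}-t_{1}\geq\cdots\geq z_{n}-t_{n}$ and Pandora opens boxes in index order. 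We then show that $(v,d)$ lies in $S_{\pi}$ for the identity order, in the sense that it satisfies the three inequalities of Lemma \ref{lem: feasible joint distribution necessary conditions}. Lemma \ref{lem: neccesary conditions imply implementability} only used those three conditions (together with $\sum_i d_i=1$ and $v$ nonincreasing) to construct a strategy under $F^{*}$. Hence $F^{*}$ induces $(v,d)$ without prices, and so $(v,d)\in S^{*}$.

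The three conditions are checked in the following order.

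\textbf{Condition 2.} The inequality $\sum_{i\leq j}d_{i}\geq 1-v_{j+1}$ is purely combinatorial. If box $j+1$ is never opened, the chosen box must be among the first $j$.

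\textbf{Condition 3.} For $p^{*}_{i}v_{i}\geq v_{i+1}$, conditional on opening box $i$, Pandora continues to $i+1$ only if $y_{i}-t_{i}\leq z_{i+1}-t_{i+1}\leq z_{i}-t_{i}$. This requires $y_{i}\leq z_{i}$. So the ratio $v_{i+1}/v_{i}$ is at most $F_{i}(z_{i})\leq F^{*}_{i}(\underline{z})=p^{*}_{i}$ by Lemma \ref{lem: Lemma 1 general}, exactly as in the proof of Lemma \ref{lem: price and search behavior}.

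\textbf{Condition 1.} For $d_{i}\geq(1-p^{*}_{i})v_{i}$, we adapt Lemma \ref{lem: lower bound of choice} to prices. Box $i$ is chosen whenever its price-adjusted effective value is the highest. In particular it is chosen when box $i$ is opened and $y_{i}>z_{i}$. In that event $\omega_{i}-t_{i}=z_{i}-t_{i}$. This weakly exceeds the price-adjusted reservation values of all later boxes, so Pandora stops. Moreover, since she continued past each earlier box $j$, we have $y_{j}-t_{j}\leq z_{i}-t_{i}$, so no earlier box beats $i$.

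There is one subtlety in Condition 1: ties at equality. These can be handled either by using $y_{i}>z_{i}$ with the bound $1-F_{i}(z_{i})$, or by noting that tie-breaking is part of the (possibly selected) equilibrium strategy. The argument needs the lower bound to hold for every optimal tie-breaking. I would handle it by following the integral argument with $\overline{H}^{t}_{i}$ from Lemma \ref{lem: lower bound of choice}. That gives $d_{i}\geq(1-F_{i}(z_{i}))\,v_{i}\geq(1-p^{*}_{i})v_{i}$, again using $F_{i}(z_{i})\leq p^{*}_{i}$ from Lemma \ref{lem: Lemma 1 general}.

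The main obstacle I expect is exactly this choice-side bound. Prices make Pandora's choice depend on $y_{j}-t_{j}$ rather than on $y_{j}$. One must check that every realization with $y_{i}>z_{i}$ (rather than merely $\omega_{i}$ being maximal) leads to the choice of $i$ regardless of the pricing, because earlier boxes were passed over only when their net values fell below $z_{i}-t_{i}$. If ties at $y_{i}=z_{i}$ cause trouble, I would treat the atom at $z_{i}$ separately. Since $F_{i}(z_{i})$ already includes that atom, restricting to the strict event $y_{i}>z_{i}$ suffices.
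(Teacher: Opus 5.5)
Your proposal is correct and follows the paper's proof essentially step for step. As there, you fix a price realization and a pure continuation strategy, verify the three inequalities of Lemma \ref{lem: feasible joint distribution necessary conditions} using $F_i(z_i)\le p_i^{*}$ from Lemma \ref{lem: Lemma 1 general}, reuse the construction in the proof of Lemma \ref{lem: neccesary conditions imply implementability} to implement the pair under $F^{*}$, and conclude by convexity of $S^{*}$. Your direct event argument for the first inequality is a slightly cleaner substitute for the paper's integral over $\overline{H}^{t}_{i}$: if box $i$ is opened and $y_i>z_i$, then $i$ is strictly optimal to choose, so $d_i\ge(1-F_i(z_i))v_i$, and this sidesteps the tie-breaking conventions that integral relies on.
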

\begin{proof}   
    The proof extends those in Lemmas \ref{lem: lower bound of choice}, \ref{lem: feasible joint distribution necessary conditions}, and \ref{lem: neccesary conditions imply implementability}. 
    Consider a CDF profile $F$ and let $z$ and $H$ be the reservation values and the CDF profile of effective values, respectively. 
    Take any price realization $t$ in an equilibrium. 
    Let $(v,d)$ be the induced pair.
    Assume w.l.g. $z_{1}-t_{1}\geq z_{2}-t_{2}\geq \cdots\geq z_{n}-t_{n}$. 

    To simplify the exposition, we focus on the case in which Pandora adopts a pure strategy of opening boxes in the order of indices under price realization $t$. 
    If Pandora adopts a mixed strategy, we can decompose it into pure strategies and later take the convex combination of all induced pairs, which is still in $S^{*}$ because $S^{*}$ is convex. 

    First, we prove that $d_{i}\geq (1-p_{i}^{*})\cdot v_{i}$ for all $i$. 
    Analogous to what we observe in the proof of Lemma \ref{lem: lower bound of choice}, note that we have 
    \begin{align*}
        v_{i}
        \leq \prod_{j<i} F_{j}((z_{i}-t_{i})+t_{j})
        = \prod_{j<i} H^{t}_{j}(z_{i}-t_{i}) 
        = \prod_{j\neq i} H^{t}_{j}(z_{i}-t_{i}) 
        = \overline{H}^{t}_{i}(z_{i}-t_{i}),
    \end{align*}
    where the first equality follows because $y_{j}\leq (z_{i}-t_{i})+t_{j}$ if and only if $y_{j}-t_{j}\leq z_{i}-t_{i}$ and we have $z_{j}-t_{j}\geq z_{i}-t_{i}$ by assumption. 
    The second equality follows from $z_{j}-t_{j}\leq z_{i}-t_{i}$ for all $j>i$, which implies that the price-adjusted effective values is less than $z_{i}-t_{i}$ with probability one. 
    Then, 
    \begin{align*}
        d_{i} & \geq \int_{-\infty}^{\overline{x}_{i}}(1-H_{i}^{t}(\omega_{i}))d\overline{H}^{t}_{i}(\omega_{i}) \\
        & = \int_{-\infty}^{z_{i}-t_{i}}(1-H_{i}^{t}(\omega_{i}))d\overline{H}^{t}_{i}(\omega_{i}) \\
        & = \int_{-\infty}^{z_{i}-t_{i}}(1-F_{i}(\omega_{i}+t_{i}))d\overline{H}^{t}_{i}(\omega_{i}) \\
        & \geq \int_{-\infty}^{z_{i}-t_{i}}(1-F_{i}(z_{i}))d\overline{H}^{t}_{i}(\omega_{i}) \\
        & = (1-F_{i}(z_{i}))\cdot \int_{-\infty}^{z_{i}-t_{i}}d\overline{H}^{t}_{i}(\omega_{i}) \geq (1-p_{i}^{*})\cdot v_{i},
    \end{align*}
    where the first and the second equalities hold as $H_{i}^{t}(\omega_{i})=1$ for all $\omega_{i}\geq z_{i}-t_{i}$ and $H_{i}^{t}(\omega_{i})=F_{i}(\omega_{i}+t_{i})$ for all $\omega_{i}\leq z_{i}-t_{i}$.
    The second inequality uses $\omega_{i}\leq z_{i}-t_{i}$ for all $\omega_{i}$ in the domain and the last inequality follows from Lemma \ref{lem: Lemma 1 general}. 

    Second, we show the other two inequalities that appear in Lemma \ref{lem: feasible joint distribution necessary conditions}. 
    The same logic as in the proof of Lemma \ref{lem: feasible joint distribution necessary conditions} implies the second inequality, $\sum_{j\leq i}d_{i}\geq 1-v_{j+1}$ for all $j$. 
    Then, note that for any box $i$ such that $v_{i}\neq 0$,
    \begin{align*}
        \frac{v_{i+1}}{v_{i}}
        \leq F_{i}((z_{i+1}-t_{i+1})+t_{i})
        \leq F_{i}(z_{i}) \leq p^{*}_{i},
    \end{align*}
    where the second inequality uses $z_{i}-t_{i}\leq z_{i+1}-t_{i+1}$. 
    Therefore, $p^{*}_{i}\cdot v_{i}\geq v_{i+1}$. 
    Note that if $v_{i}=0$, then $v_{i+1}=0$, hence the same inequality. 

    Finally, the above two observations and the proof of Lemma \ref{lem: neccesary conditions imply implementability} imply that the pair $(d,v)$ is feasible. 
    Since this holds for any price realization and $S^{*}$ is convex, any equilibrium pair is in the set of all feasible pairs. 
\end{proof}

% ----------------------------------
\begin{proof}[Proof of Theorem \ref{thm: price competition}]
    Lemma \ref{lem: Bertrand competition} and Theorem \ref{thm: joint distribution} imply that any equilibrium pair is a feasible pair. 
    Hence, Lemma \ref{lem: price and pair} implies that the set of equilibrium pairs coincides with the set of feasible pairs. 
    Lemma \ref{lem: Bertrand competition} shows the latter statement of the theorem, completing the proof. 
\end{proof}

%%%%%%%%%%%%%%%%%%%%%%%%%%%%%%%%%%%%%%%%%%%%%%%%%%%%%%%%%%%%%%%%%%%%%%%%%%%%%%

%%%%%%%%%%%%%%%%%%%%%%%%%%%%%%%%%%%%%%%%%%%%%%%%%%%%%%%%%%%%%%%%%%%%%%%%%%%%%%
\singlespacing 
\bibliography{Reference}
\addcontentsline{toc}{section}{Reference}
%%%%%%%%%%%%%%%%%%%%%%%%%%%%%%%%%%%%%%%%%%%%%%%%%%%%%%%%%%%%%%%%%%%%%%%%%%%%%%%%%%%%%%%%%%%%
\end{document}